\setlist[itemize]{leftmargin=3.0mm}
\theoremstyle{plain}
\newtheorem{theorem}{Theorem}[section]
\newtheorem{lemma}[theorem]{Lemma}
\newtheorem{corollary}[theorem]{Corollary}
\theoremstyle{definition}
\newtheorem{definition}[theorem]{Definition}
\newtheorem{assumption}[theorem]{Assumption}
\theoremstyle{remark}
\newtheorem{remark}[theorem]{Remark}
\newcommand{\innerthmname}{}
\newenvironment{problem}[1]
 {\renewcommand{\innerthmname}{#1}\innerthm}
 {\endinnerthm}
\DeclareRobustCommand{\cev}[1]{%
  {\mathpalette\do@cev{#1}}%
}
\newcommand{\do@cev}[2]{%
  \vbox{\offinterlineskip
    \sbox\z@{$\m@th#1 x$}%
    \ialign{##\cr
      \hidewidth\reflectbox{$\m@th#1\vec{}\mkern4mu$}\hidewidth\cr
      \noalign{\kern-\ht\z@}
      $\m@th#1#2$\cr
    }%
  }%
}
\icmltitlerunning{Parallel Simulation for Log-concave Sampling and Score-based Diffusion Models}
\begin{document}

\twocolumn[
\icmltitle{Parallel Simulation for Log-concave Sampling \\and Score-based Diffusion Models}



\icmlsetsymbol{equal}{*}

\begin{icmlauthorlist}
\icmlauthor{Huanjian Zhou}{utokyo,riken}
\icmlauthor{Masashi Sugiyama}{riken,utokyo}
\end{icmlauthorlist}

\icmlaffiliation{riken}{Center for Advanced Intelligence Project, RIKEN,
Tokyo, Japan}
\icmlaffiliation{utokyo}{The University of Tokyo, Tokyo, Japan}

\icmlcorrespondingauthor{Huanjian Zhou}{zhou@ms.k.u-tokyo.ac.jp}

\icmlkeywords{Machine Learning, ICML}

\vskip 0.3in
]



\printAffiliationsAndNotice{}  

\begin{abstract}
Sampling from high-dimensional probability distributions is fundamental in machine learning and statistics. As datasets grow larger, computational efficiency becomes increasingly important, particularly in reducing \emph{adaptive complexity}, namely the number of sequential rounds required for sampling algorithms. While recent works have introduced several parallelizable techniques, they often exhibit suboptimal convergence rates and remain significantly weaker than the latest lower bounds for log-concave sampling.
To address this, we propose a novel parallel sampling method that improves adaptive complexity dependence on dimension $d$ reducing it from $\widetilde{\mathcal{O}}(\log^2 d)$ to $\widetilde{\mathcal{O}}(\log d)$. 
Our approach builds on parallel simulation techniques from scientific computing.

\end{abstract}

\section{Introduction}





%
We study the problem of sampling from a probability distribution with density $\pi(\boldsymbol{x})\propto \exp(-f(\boldsymbol{x}))$ 
where $f:\mathbb{R}^d \to \mathbb{R}$ is a smooth potential.
We consider two types of setting. 
\textbf{Problem (a):} the distribution is known only up to a normalizing constant~\citep{chewi2023log}, and this kind of problem is fundamental in many fields such as Bayesian inference, randomized algorithms, and machine learning~\citep{robert1999monte,marin2007bayesian,nakajima2019variational}.
\textbf{Problem (b):} known as the score-based generative models (SGMs)~\citep{song2019generative}, we are given an approximation of $\nabla \log \pi_t$, where $\pi_t$ is the density of a specific process at time $t$. The law of this process converges to $\pi$ over time.
SGMs are state-of-the-art in applications like image generation ~\citep{ho2022cascaded,dhariwal2021diffusion}, audio and video generation~\citep{ho2022video,yang2023diffusion}, and inverse problems~\citep{song2021solving}.

For Problem (a), specifically log-concave sampling, starting from the seminal papers of \citet{dalalyan2012sparse}, \citet{dalalyan2017theoretical}, and \citet{durmus2017nonasymptotic},
there has been a flurry of recent works on proving non-asymptotic guarantees based on simulating a process which converges to $\pi$ over time~\citep{wibisono2018sampling,vempala2019rapid,mou2021high,altschuler2022resolving}.
Moreover, these processes, such as Langevin dynamics, converge exponentially quickly to $\pi$ under mild conditions~\citep{dalalyan2017theoretical,mou2021high,bernard2022hypocoercivity}.
Such dynamics-based algorithms for Problem (a) share a common feature with the inference process of SGMs that they are actually a numerical simulation of an initial-value problem of differential equations~\citep{hodgkinson2021implicit}.
Thanks to the exponentially fast convergence of the process, significant efforts have been conducted on discretizing these processes using numerical methods such as the forward
Euler, backward Euler~(proximal method), exponential integrator, mid-point, and high-order Runge-Kutta methods~\citep{vempala2019rapid,wibisono2019proximal,shen2019randomized,li2019stochastic,oliva2024kinetic}.

Furthermore, in recent years, there have been increasing interest and significant advances in understanding the convergence of inherently dynamics-based SGMs~\citep{de2022convergence,chen2022sampling,lee2023convergence,pedrotti2023improved,chen2024probability, tang2024contractive,li2024adapting}. Notably,  polynomial-time convergence guarantees have been established~\citep{chen2022sampling,chen2024probability,benton2024nearly,liang2024non}, and various discretization schemes for SGMs have been analyzed~\citep{lu2022fast,lu2022dpm,huang2024convergence}.

The algorithms underlying the above results are highly sequential. 
However, with the increasing size of data sets for sampling, we need to develop a theory for algorithms with limited iterations.
For example, the widely-used denoising diffusion probabilistic models 
~\citep{ho2020denoising} may take $1000$ denoising steps to generate one sample, while the evaluations of a neural
network-based score function can be computationally expensive~\citep{song2020denoising}.

As a comparison, recently, the (naturally parallelizable) Picard methods for diffusion models reduced the number of steps to around $50$~\citep{shih2024parallel}. Furthermore, in terms of the dependency on the
dimension $d$ and accuracy $\varepsilon$, Picard methods for both Problems (a) and (b) were proven to be able to return an $\varepsilon$-accurate solution within $\mathcal{O}(\log^2 (d/\varepsilon^2))$ iterations, improved from previous $\mathcal{O}(d^a/\varepsilon^b)$ with some $a,b>0$.
%
However, the $\mathcal{O}(\log^2 (d/\varepsilon^2))$ adaptive complexity\footnote{Adaptive complexity refers to the minimal number of sequential rounds required for an algorithm to achieve a desired accuracy, assuming polynomially many queries can be executed in parallel at each round~\citep{balkanski2018adaptive}.} may not be yet optimal for both Problem (a) and (b).
This motivates our investigation into the question:
\begin{center}
\textit{Can we achieve \underline{logarithmic} adaptive complexity for both \underline{log-concave sampling} and \underline{sampling for SGMs}}?
\end{center}
\subsection*{Our Contributions}
In this work, we propose a novel sampling method that employs a highly parallel discretization approach for continuous processes, with applications to the overdamped Langevin diffusion~\citep{chewi2023log} and the stochastic differential equation (SDE) implementation of processes in SGMs~\citep{chen2024accelerating} for Problems (a) and (b), respectively.

\paragraph{Faster parallel log-concave sampling.}
We first present an improved result for parallel sampling from a strongly log-concave and log-smooth distribution.
Specifically, we improve the upper bound from $\widetilde{\mathcal{O}}\left(\log^2\left(\frac{d}{\varepsilon^2}\right)\right)$~\citep{anari2024fast} to  $\widetilde{\mathcal{O}}\left(\log\left(\frac{d}{\varepsilon^2}\right)\right)$, with slightly scaling the number of processors and gradient evaluations from $\mathcal{O}\left(\frac{d}{\varepsilon^2}\right)$ to ${\mathcal{O}}\left(\frac{d}{\varepsilon^2}\log\left(\frac{d}{\varepsilon^2}\right)\right)$. 

Compared with methods based on underdamped Langevin diffusion~\citep{shen2019randomized,yu2024parallelized,anari2024fast}, our method exhibits higher space complexity\footnote{We note, in this paper, that the space complexity refers to the number of words~\citep{cohen2023streaming,chen2024accelerating} instead of the number of bits~\citep{goldreich2008computational} to denote the
approximate required storage.}.
This is primarily because underdamped Langevin diffusion typically follows a smoother trajectory than overdamped Langevin diffusion, allowing for larger grid spacing and consequently, a reduced number of grids. We summarize the comparison in Table \ref{sample-table1}.
In this paper, we will focus on the adaptive complexity and discretization schemes for overdamped Langevin diffusion.

\begin{table}[htb]
\vspace{-5mm}
\caption{Comparison with existing parallel methods for strongly log-concave sampling. The symbol $^\spadesuit$ represents that the results hold under a weaker condition, the log-Sobolev inequality.}
\vspace{-2mm}
\label{sample-table1}
\begin{center}
\scalebox{.65}{
\begin{tabular}{cccc}
\multicolumn{1}{c}{\bf \makecell{Work\\dynamics}}  & \multicolumn{1}{c}{\bf Measure} &   \multicolumn{1}{c}{\bf \makecell{Adaptive \\Complexity}}&   \multicolumn{1}{c}{\bf \makecell{ Space \\Complexity}}\\
\hline 
\makecell{\citep[Theorem 4]{shen2019randomized}\\underdamped Langevin diffusion}   & $\mathsf{W}_2$   &  $\widetilde{\mathcal{O}}\big(\log^2\big(\frac{\sqrt{d}}{\varepsilon}\big)\big)$  & $\widetilde{\mathcal{O}}\big(\frac{d^{3/2}}{\varepsilon}\big)$\\
\hline 
\makecell{\citep[Corollary 2]{yu2024parallelized}\\ underdamped Langevin diffusion}   &   $\mathsf{W}_2$  & $\widetilde{\mathcal{O}}\big(\log^2\big(\frac{d}{\varepsilon^2}\big)\big)$   &$\widetilde{\mathcal{O}}\big(\frac{d^{3/2}}{\varepsilon}\big)$\\
 \hline 
\makecell{\citep[Theorem 15]{anari2024fast}\\ underdamped Langevin diffusion }   &   $\mathsf{TV}$  & $\widetilde{\mathcal{O}}\big(\log^2\big(\frac{d}{\varepsilon^2}\big)\big)^\spadesuit$   &$\widetilde{\mathcal{O}}\big(\frac{d^{3/2}}{\varepsilon}\big)$\\
 \hline 
\makecell{\citep[Theorem 13]{anari2024fast}\\ overdamped Langevin diffusion }   &   $\mathsf{KL}$  & $\widetilde{\mathcal{O}}\big(\log^2\big(\frac{d}{\varepsilon^2}\big)\big)^\spadesuit$   &$\widetilde{\mathcal{O}}\big(\frac{d^2}{\varepsilon^2}\big)$\\
 \hline 
\rowcolor{blue!15}
\Gape[0pt]{
\makecell[c]{Theorem \ref{the:main1}\\overdamped Langevin diffusion}}   &   $\mathsf{KL}$  & $\boldsymbol{\widetilde{\mathcal{O}}\big(\log\big(\frac{d}{\varepsilon^2}\big)\big)}$   &$\widetilde{\mathcal{O}}\big(\frac{d^2}{\varepsilon^2}\big)$\\
\hline
\end{tabular}
}
\vspace{-0.4cm}
\end{center}
\end{table}

\paragraph{Faster parallel sampling for diffusion models.} We then present an improved result for diffusion models. Specifically, we propose an efficient algorithm with $\widetilde{\mathcal{O}}\left(\log\left(\frac{d}{\varepsilon^2}\right)\right)$ adaptive complexity for SDE implementations of diffusion models~\citep{song2019generative}. 
Our method surpasses all the existing parallel methods for diffusion models having $\widetilde{\mathcal{O}}\left(\log^2\left(\frac{d}{\varepsilon^2}\right)\right)$ adaptive complexity~\citep{chen2024accelerating,gupta2024faster}, with slightly increasing the number of the processors and gradient evaluations and the space complexity for SDEs. We summarize the comparison in Table \ref{sample-table}.
Similarly, the better space complexity of the ordinary differential equation (ODE) implementations is attributed to the smoother trajectories of ODEs, which are more readily discretized.

\begin{table}[t]
\vspace{-0.5mm}
\caption{Comparison with existing parallel methods for sampling for diffusion models.}
\label{sample-table}
\begin{center}
\tiny
\begin{tabular}{cccc}
\multicolumn{1}{c}{\bf \makecell{Works\\Implementation}}  & \multicolumn{1}{c}{\bf Measure} &   \multicolumn{1}{c}{\bf \makecell{Adaptive \\Complexity}}&   \multicolumn{1}{c}{\bf \makecell{ Space \\Complexity}}\\
 \hline 
\makecell{\citep[Theorem 3.5]{chen2024accelerating}\\ ODE~/~Picard method }   &   $\mathsf{TV}$  & $\widetilde{\mathcal{O}}\big(\log^2\big(\frac{d}{\varepsilon^2}\big)\big)$   &$\widetilde{\mathcal{O}}\big(\frac{d^{3/2}}{\varepsilon^2}\big)$\\
\hline 
\makecell{\citep[Theorem B.13]{gupta2024faster}\\ ODE~/~Parallel midpoint method }   &   $\mathsf{TV}$  & $\widetilde{\mathcal{O}}\big(\log^2\big(\frac{d}{\varepsilon^2}\big)\big)$   &$\widetilde{\mathcal{O}}\big(\frac{d^{3/2}}{\varepsilon^2}\big)$\\
 \hline 
 \makecell{\citep[Theorem 3.3]{chen2024accelerating}\\SDE~/~Picard method  }   & $\mathsf{KL}$   &  $\widetilde{\mathcal{O}}\big(\log^2\big(\frac{d}{\varepsilon^2}\big)\big)$  & $\widetilde{\mathcal{O}}\big(\frac{d^2}{\varepsilon^2}\big)$\\
 \hline
\rowcolor{blue!15}
\Gape[0pt]{
\makecell{Theorem \ref{the:main2}\\SDE~/~Parallel Picard method} }  &   $\mathsf{KL}$  & $\boldsymbol{\widetilde{\mathcal{O}}\big(\log\big(\frac{d}{\varepsilon^2}\big)\big)}$   &$\widetilde{\mathcal{O}}\big(\frac{d^2}{\varepsilon^2}\big)$\\
 \hline 
\end{tabular}
\vspace{-0.6cm}
\end{center}
\end{table}

\section{Problem Set-up}
\label{sec:pre}

In this section, we introduce some preliminaries and key ingredients of log-concave sampling and diffusion models in Sections \ref{pre:1} and \ref{pre:2}, respectively. Following this, Section \ref{pre:4} provides an introduction to the fundamentals of Picard iterations.

\subsection{Log-concave Sampling}
\label{pre:1}
\begin{problem}{Problem (a)}[\bf{Sampling task}]
\label{pro:a}
Given the potential function $f:\mathcal{D}\to \mathbb{R}$, the goal of the sampling task is to draw a sample from the density $\pi_f = Z_f^{-1}\exp(-f)$, where $Z_f :=\int_\mathcal{D} \exp(-f(\boldsymbol{x}))\mathrm{d}\boldsymbol{x}$ is the normalizing constant.  
\end{problem}

\paragraph{Distribution and function class.}
If $f$ is (strongly) convex, the density $\pi_f$ is said to be (strongly) \emph{log-concave}.
%
%
If $f$ is twice-differentiable and $\nabla^2 f\preceq \beta \boldsymbol{I}$ (where  $\preceq$ denotes the Loewner order and $\boldsymbol{I}$ is the identity matrix), we say the potential $f$ is \emph{$\beta$-smooth} and the density  $\pi_f$ is \emph{$\beta$-log-smooth}.


We define \emph{relative Fisher information} of probability density $\rho$ w.r.t. $\pi$ as $\mathsf{FI}(\rho\|\pi ) = \mathbb{E}_\rho[\left\|\nabla \log(\rho/\pi)\right\|^2]$ and the \emph{Kullback–Leibler (KL) divergence} of $\rho$ from $\pi$ as $\mathsf{KL}(\rho\|\pi) = \mathbb{E}_\rho\log (\rho/\pi)$.
If $\pi$ is $\alpha$-strongly log-concave, then
the following relation between KL divergence and Fisher information holds:
\[\mathsf{KL}(\rho\|\pi)\leq \frac{1}{2\alpha}\mathsf{FI}(\rho\| \pi) \mbox{ for all probability measures }\rho.\]

\paragraph{Langevin Dynamics.} One of the most commonly-used dynamics for sampling is Langevin dynamics~\citep{chewi2023log}, which is the solution to the following SDE, \[\mathrm{d}\boldsymbol{x} = -\nabla f(\boldsymbol{x})\mathrm{d}t +\sqrt{2}\mathrm{d}\boldsymbol{B}_t,\]
where $(\boldsymbol{B}_t)_{t\in [0,T]}$ is a standard Brownian motion in $\mathbb{R}^d$. If $\pi\propto \exp(-f)$ satisfies strongly log-concavity, then the law of the Langevin diffusion converges exponentially fast to $\pi$~\citep{bakry2014analysis}.

\paragraph{Score function for sampling task.} We assume the score function $\boldsymbol{s}: \mathbb{R}^d \to \mathbb{R}$ is a pointwise accurate estimate of $\nabla f$, i.e., $\left\|\boldsymbol{s}(\boldsymbol{x})-\nabla f(\boldsymbol{x})\right\|\leq \delta$   for all $\boldsymbol{x}\in \mathbb{R}^d$ and some sufficiently small constant $\delta\in\mathbb{R}_+$.

\paragraph{Measures of the output.} For two densities $\rho$ and $\pi$, we define the \emph{total variation} ($\mathsf{TV}$) as 
\[\mathsf{TV}(\rho,\pi) = \sup\{\rho(E)- \pi(E)\mid  E \mbox{ is an event}\}.\]
We have the following relation between the $\mathsf{KL}$ divergence and $\mathsf{TV}$ distance, known as the \emph{Pinsker inequality},
\[\mathsf{TV}(\rho,\pi)\leq \sqrt{\frac{1}{2}\mathsf{KL}(\rho\|\pi)}.\]
We denote by $\mathsf{W}_2$ the \emph{Wasserstein distance} between $\rho$ and $\pi$,
which is defined as 
\[\mathsf{W}_2^2(\rho,\pi) = \inf_{\Pi}\mathbb{E}_{(X,Y)\sim \Pi}\left[\left\|X-Y\right\|^2\right],\]
where the infimum is over coupling distributions $\prod$ of $(X,Y)$ such that  $X\sim \rho$,  $Y\sim \pi$.
If $\pi$ is $\alpha$-strongly log-concave, the following  transport-entropy inequality, known as Talagrand's $\mathsf{T}_2$ inequality, holds~\citep{otto2000generalization} for all $\rho\in \mathcal{P}_2(\mathbb{R}^d)$, i.e., with finite second moment,
\[\frac{\alpha}{2}\mathsf{W}_2^2(\rho,\pi)\leq \mathsf{KL}(\rho\|\pi).\]

\paragraph{Complexity.} 
For any sampling algorithm, we consider the \emph{adaptive complexity} defined as unparallelizable evaluations of the score function~\citep{chen2024accelerating}, and use the notion of the \emph{space complexity} to denote the approximate required storage during the inference.
We note, in this paper, that the space complexity refers to the number of words~\citep{cohen2023streaming,chen2024accelerating} instead of the number of bits~\citep{goldreich2008computational} to denote the
approximate required storage.

\subsection{Score-based Diffusion Models}
\label{pre:2}
\paragraph{Sampling for diffusion models.}
In score-based diffusion models, one considers forward process $(\boldsymbol{x}_t)_{t\in [0,T]}$  in $ \mathbb{R}^d$ governed by the canonical Ornstein-Uhlenbeck (OU) process~\citep{ledoux2000geometry}:
\begin{equation}
\label{eq:forward}
\mathrm{d}\boldsymbol{x}_t  = -\frac{1}{2}\boldsymbol{x}_t\mathrm{d}t + \mathrm{d}\boldsymbol{B}_t,~~~~~~~\boldsymbol{x}_0 \sim \boldsymbol{q}_0,~~~~~~~t\in [0,T],
\end{equation}
where $\boldsymbol{q}_0$ is the initial distribution over $\mathbb{R}^d$.
The corresponding backward process $(\cev{\boldsymbol{x}}_t)_{t\in [0,T]}$ in $\mathbb{R}^d$ follows an SDE defined as
\begin{equation}
\label{eq:backward}
\begin{cases}
 \mathrm{d}\cev{\boldsymbol{x}}_t = \left[\frac{1}{2}\cev{\boldsymbol{x}}_t + \nabla \log \cev{p}_t(\cev{\boldsymbol{x}}_t )\right]\mathrm{d}t + \mathrm{d}\boldsymbol{B}_t&    t\in [0,T],\\
 \cev{\boldsymbol{x}}_0 \sim \boldsymbol{p}_0\approx\mathcal{N}(\boldsymbol{0}_d,\boldsymbol{I}_d)&
\end{cases}
\end{equation}
where $\mathcal{N}(\cdot,\cdot)$ represents the normal distribution over $\mathbb{R}^d$.
In practice, the score function $\nabla \log \cev{p}_t(\cev{\boldsymbol{x}}_t )$ is estimated by neural network (NN) $\boldsymbol{s}^\theta_t: \mathbb{R}^d \mapsto \mathbb{R}^d$, where $\theta$ is the parameters of NN. The backward process is approximated by
\begin{equation}
\label{eq:approx_back}
\begin{cases}
 \mathrm{d}{\boldsymbol{y}}_t = \left[\frac{1}{2}{\boldsymbol{y}}_t + \boldsymbol{s}^\theta_t({\boldsymbol{y}}_t )\right]\mathrm{d}t + \mathrm{d}\boldsymbol{B}_t&   t\in [0,T],\\
\boldsymbol{y}_0 \sim \mathcal{N}(\boldsymbol{0}_d,\boldsymbol{I}_d).& 
\end{cases}
\end{equation}
\begin{problem}{Problem (b)}[\bf{Sampling task for SGMs}]
\label{pro:b}
Given the learned NN-based score function $\boldsymbol{s}^\theta_t$, the goal is to simulate the approximated backward process such that the law of the output is close to $\boldsymbol{q}_0$.
\end{problem}
\paragraph{Distribution class.}
For SGMs, we assume the data density $p_0$ has finite
second moments and is normalized such that $\mathsf{cov}_{p_0}(\boldsymbol{x}_0)  = \mathbb{E}_{p_0}\left[(\boldsymbol{x}_0 - \mathbb{E}_{p_0}[\boldsymbol{x}_0])(\boldsymbol{x}_0 - \mathbb{E}_{p_0}[\boldsymbol{x}_0])^\top\right] = \boldsymbol{I}_d$.
Such a finite moment assumption is standard across previous theoretical works on SGMs~\citep{chen2023improved,chen2024probability,chen2022sampling} and we adopt the normalization to simplify true score function-related computations as ~\citet{benton2024nearly} and \citet{chen2024accelerating} did.

\paragraph{OU process and inverse process}
The OU process and its inverse process also converge to the target distribution exponentially fast  in various divergences and metrics such as the $2$-Wasserstein metric $\mathsf{W}_2$; see~\citet{ledoux2000geometry}.
Furthermore, the discrepancy between the terminal distributions of the backward process \eqref{eq:backward} and its approximation version \eqref{eq:approx_back} scales polynomially with respect to the length of the time horizon and the score matching error.~(\citet[Theorem 3.5]{huang2024convergence} or setting the step size $h\to 0$ for the results in \citet{chen2023improved,chen2024probability,chen2022sampling}).

\paragraph{Score function for SGMs.}
For the NN-based score, we assume the score function is $L^2$-accurate, bounded and Lipschitz; we defer the details in Section~\ref{ass:sde}.

\subsection{Picard Iterations}
\label{pre:4}
Consider the integral form of the initial value problem, $\boldsymbol{x}_t = \boldsymbol{x}_0 + \int_0^t f_s(\boldsymbol{x}_s) \mathrm{d}s + \sqrt{2}\boldsymbol{B}_t$. The main idea of Picard iterations~\citep{clenshaw1957numerical} is to approximate the difference over time slice $[t_n,t_{n+1}]$ as
\begin{align*}
  &\boldsymbol{x}_{t_{n+1}} - \boldsymbol{x}_{t_{n}}\\
  ~=~ &\int_{t_n}^{t_{n+1}}  f_s(\boldsymbol{x}_s) \mathrm{d}s + \sqrt{2}(\boldsymbol{B}_{t_{n+1}} - \boldsymbol{B}_{t_{n}} )\\
   ~\approx ~&\sum\limits_{i=1}^M \boldsymbol{w}_i  f_{t_n+\tau_{n,i}}(\boldsymbol{x}_{t_n+\tau_{n,i}}) \mathrm{d}s + \sqrt{2}(\boldsymbol{B}_{t_{n+1}} - \boldsymbol{B}_{t_{n}} ),
\end{align*}
with a discrete grids of $M$ collocation points as $\boldsymbol{x}_{t_{n}} = \boldsymbol{x}_{t_n+\tau_{n,0}}\leq \boldsymbol{x}_{t_n+\tau_{n,1}}\leq \boldsymbol{x}_{t_n+\tau_{n,2}}\leq\cdots\leq \boldsymbol{x}_{t_n+\tau_{n,M}}=\boldsymbol{x}_{t_{n+1}}$. We update the points in a wave-like fashion, which inherently allows for parallelization: for~$m' = 1,\ldots,M$,
\begin{align*}
\boldsymbol{x}_{t_n+\tau_{n,m}}^{p+1}
~=~&\boldsymbol{x}_0 + \sum\limits_{m'=1}^{m-1}\boldsymbol{w}_{m'}  f_{t_n+\tau_{n,m'}}(\boldsymbol{x}_{t_n+\tau_{n,m'}}^p)  \\
&+ \sqrt{2}(\boldsymbol{B}_{t_n+\tau_{n,m}} - \boldsymbol{B}_{t_{n}} ).    
\end{align*}
Various collocation points have been proposed, including uniform points and Chebyshev points~\citep{bai2011modified}. In this paper, however, we focus exclusively on the simplest case of uniform points, and extension to other cases is future work.
Picard iterations are known to converge exponentially fast and, under certain conditions, even factorially fast for ODEs and backward SDEs~\citep{hutzenthaler2021speed}.

\section{Technical Overview}
\label{sec:tech}

We adopt the time splitting for the time horizon used in the existing parallel methods~\citep{shen2019randomized,gupta2024faster,chen2024accelerating,anari2024fast,yu2024parallelized}.
With same time grids, our algorithm, however, depart crucially from prior work in the design of parallelism
across the time slices, and the modification for controlling the score estimation error.
Below we summarize these algorithmic contributions and technical novelties.

\paragraph{Recap of existing parallel sampling methods.}
Existing works for parallel sampling apply the following generic discretization schemes~\citep{shen2019randomized,gupta2024faster,chen2024accelerating,anari2024fast,yu2024parallelized}. 
At a high level, these methods divide the time horizon into many large time slices and each slice is further subdivided into grids with a small enough step size.
Instead of sequentially updating the grid points, they update all grids at the same time slice simultaneously using exponentially fast converging Picard iterations~\citep{alexander1990solving,chen2024accelerating,anari2024fast}, or randomized midpoint methods~\citep{shen2019randomized,yu2024parallelized,gupta2024faster}.
With $\widetilde{\mathcal{O}}(\log d)$ Picard iterations for $\widetilde{\mathcal{O}}(\log d)$ time slices, the total adaptive complexity of their algorithms is $\widetilde{\mathcal{O}}(\log^2 d)$.
However, while sequential updating of each time slice is not necessary for simulating the process, it remains unclear how to parallelize across time slices for sampling to obtain $\mathcal{O}(\log d)$ time complexity.

\paragraph{Algorithmic novelty: parallel methods across time slices.} 
Na\"ively, if we directly update all the grids simultaneously, the Picard iterations will not converge when the total length is $T = \widetilde{\mathcal{O}}(\log d)$.
Instead of updating all time slices together or updating the time slice sequentially, we update the time slices in a \emph{diagonal} style as illustrated in Figure \ref{fig:parallel_sampling}. 
%
%
For any $j$-th update at the $n$-th time slice (corresponding the rectangle in the  $n$-th column from the left and the $j$-th row from the top in Figure \ref{fig:parallel_sampling}), there will be two inputs: (a) the right boundary point of the previous time slice, which has been updated $j$ times, and (b) the points on the girds of the same time slice that have been updated $j-1$ times.
Then we perform $P$ times Picard iterations with these inputs, where the hyperparameter $P$ depends on the smoothness of the score function.
This repetition is simple but essential for preventing the accumulation of score-matching errors, which could otherwise grow exponentially w.r.t. the length of the time horizon.
%
%
The main difference compared to the existing Picard methods is that for a fixed time slice, the starting points in our method are updated gradually, whereas in existing methods, the starting points remain fixed once processed.

\begin{figure}[htb]
\centering
\vspace{-0.2cm}
\includegraphics[width=0.52\textwidth]{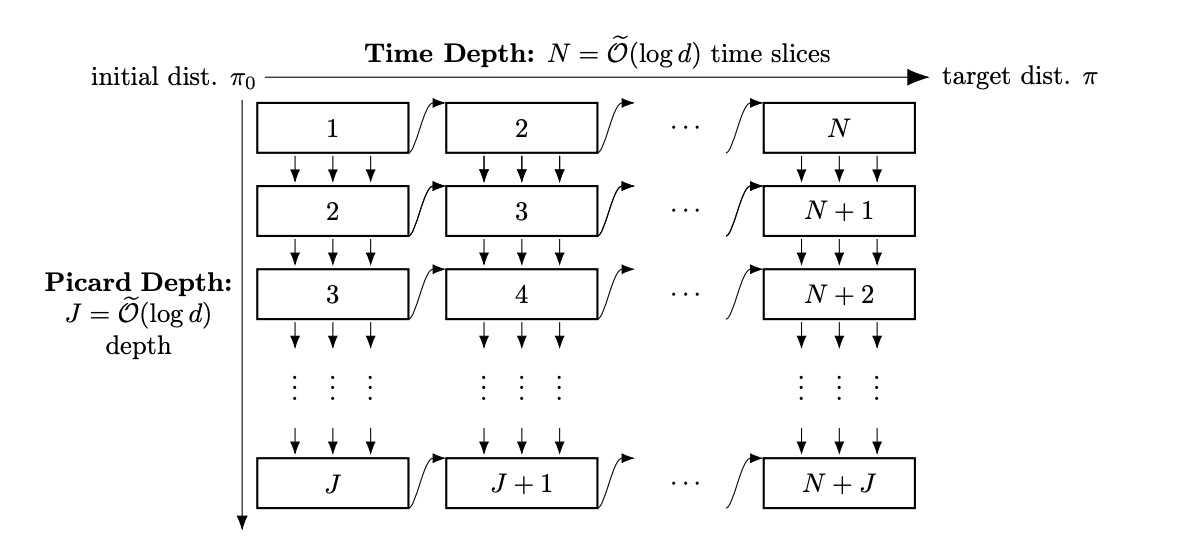}
    \caption{Illustration of the parallel Picard method: each rectangle represents an update, and the number within each rectangle indicates the index of the Picard iteration. The approximate time complexity is $N+J = \widetilde{\mathcal{O}}(\log d)$.
    }
    \label{fig:parallel_sampling}
    \vspace{-0.65cm}
\end{figure}
\paragraph{Challenges for convergence.}
Similar to the arguments for sequential methods or parallel methods with sequentially updating the time slices, we use the standard techniques such as the interpolation method or Girsanov's theorem~\citep{vempala2019rapid,oksendal2013stochastic,chewi2023log} to decompose the total error w.r.t. $\mathsf{KL}$ into four components: (i) 
convergence error of the continuous process, (ii)  discretization error,
(iii) parallelization error,
and (iv)  score estimation
error~(See Eq.~\eqref{eq:decomp}, Lemma~\ref{lmm:dec1}, and Lemma~\ref{app:c.4}). 
For (i) the convergence error of the continuous processes, their exponential convergence rates allow this error to be effectively controlled by setting the total time length to $\mathcal{O}(\log \frac{d}{\varepsilon^2})$, regardless of the specific discretization scheme.
%
For (ii), the discretization error scales approximately as $d \cdot \frac{h}{M}$, where $h$ denotes the time slice length, $M$ the number of discretization points per slice, and $\frac{h}{M}$ the grid resolution~(See Eq.~\eqref{eq:decomp}, and Lemma~\ref{lmm:A}).
Setting $\frac{h}{M} \approx \mathcal{O}(\varepsilon^2/d)$ ensures the discretization error remains within $\mathcal{O}(\varepsilon^2)$.
The technical challenges rise from controlling the remaining two errors, which we summarize below.

\emph{(iii) Parallelization error:} 
the parallelization error primarily arises from updating with $\boldsymbol{s}(\boldsymbol{x}^{j-1})$ instead of $\boldsymbol{s}(\boldsymbol{x}^{j})$ in the Picard iteration, in contrast to the sequential method, where $j$ indexes the steps along the Picard direction.
%
In existing parallel methods, the sequential update across time slices benefits the convergence of truncation errors, $\mathbb{E}\big[\big\|\boldsymbol{x}^{j} - \boldsymbol{x}^{j-1}\big\|^2\big]$.
Assuming the truncation errors in the previous time slice have converged, its right boundary serves as the starting point for all grids in the current $O(1)$-length time slice which results in an initial bias of $\mathcal{O}(d)$.
Subsequently, by performing $\mathcal{O}(\log d)$ exponentially fast Picard iterations, the truncation error will converge.
However, in our diagonal-style updating scheme across time, the truncation error interacts with inputs from both the previous time slice and prior updates in the same time slice.
Consequently, the bias-convergence loop that holds in sequential updating no longer holds.


\emph{(iv) Score estimation error:} If the score function itself is Lipschitz continuous (Assumption \ref{ass:3} for Problem~\hyperref[pro:b]{(b)}), no additional score matching error will arise during the Picard iterations.
This allows the total score estimation error to remain bounded under mild conditions~(Assumption \ref{ass:1}).  
%
%
%
%
%
However, for Problem \hyperref[pro:a]{(a)}, since it is the velocity field $\nabla f$ instead of the score function $\boldsymbol{s}$ that is Lipschitz, additional score estimation errors will occur during each update.
For the sequential algorithm, these additional score estimation errors are contained within the bias-convergence loop, ensuring the total score estimation error remains to be bounded.
Conversely, for our diagonal-style updating algorithm, the absence of convergence along the time direction causes these additional score estimation errors to accumulate exponentially over the time direction.

\paragraph{Technical novelty.}
Our technical contributions address these challenges 
by the appropriate selection of the number of Picard iterations within each update $P$ and the depth of the Picard iterations $J$.
We outline the details of the choices below.

In the following, we assume that the truncation error at the  $n$-th time slice and the  $j$-th iteration scales with  $L_n^j$ , and that the additional score estimation error for each update scales with  $\delta^2$.

To address the initial challenge related to the truncation error, we choose the Picard depth as $J = \mathcal{O}(N + \log d) $. We first bound the error of the output for each update with respect to its inputs as $L_n^j \leq \mathsf{a} L_{n-1}^j + \mathsf{b} L_n^{j-1},$ where  $\mathsf{a}$  and  $\mathsf{b}$  are constants. By carefully choosing the length of the time slices, we can ensure that  $\mathsf{b} < 1$  along the Picard iteration direction. Consequently, the truncation error will converge if the iteration depth $ J $ is sufficiently large, such that  $\mathsf{a}^N \mathsf{b}^J$  is sufficiently small. This requirement implies that  $J = \mathcal{O}(N + \log d) $.

To mitigate the additional score estimation error for Problem \hyperref[pro:a]{(a)}, we perform  $P$  Picard iterations within each update.
The interaction between the truncation error and additional score estimation error can be expressed as $L_n^j \leq \mathsf{a} L_{n-1}^j + \mathsf{b}L_n^{j-1} + \mathsf{c}\delta^2$, where $\mathsf{a},\mathsf{b},\mathsf{c}$ are constants. 
To ensure the total score estimation error remains bounded, it is necessary to have  $\mathsf{a},\mathsf{b}<1$, which guarantees convergence along both the time and Picard directions.
By the convergence of the Picard iteration, we can achieve $\mathsf{b}<1$.
For $\mathsf{a}$, the right boundary point of the previous time slice, and prior updates within the same time slice introduce discrepancies in the truncation error. 
For the impact from the previous time slice, we make use of the contraction of gradient decent to ensure convergence.
However, since the grid gap scale as $1/d$, the contraction factor is close to $1$.
Consequently, we have to minimize the impact from prior updates within the same time slice, which scales as $\mathcal{O}(1)$ by repeating $P = \log \mathcal{O}(1)$ Picard iterations for each update.

\paragraph{Balance between time and Picard directions.} We note that the Picard method, despite being the simplest approach for time parallelism, has achieved the state of art performance in certain specific settings.
On the one hand, the continuous processes need to run for at least $\mathcal{O}(\log d)$ time.
To ensure convergence within every time slice, the time slice length have to be set as $\mathcal{O}(1)$, resulting in a necessity for at least $\mathcal{O}(\log d)$ iterations.
On the other hand, with a proper initialization $\mathcal{O}(d)$, Picard iterations converge within $\mathcal{O}(\log d)$ iterations.
Our parallelization balances the convergence of the continuous diffusion and the Picard iterations to achieve the improved results.

\paragraph{Related works in scientific computation.}
Similar parallelism across time slices has also  been proposed in scientific computation~\citep{gear1991waveform,gander201550,ong2020applications}, especially for parallel Picard iterations~\citep{wang2023parallel}.
Compared with prior work in scientific computation, our approach exhibits several significant differences.
Firstly, our primary objective differs from that in simulation. In sampling, we aim to ensure that the output distribution closely approximates the target distribution, whereas simulation seeks to make some points on the discrete grid closely match the true dynamics.
Second, our algorithm differs significantly from that of \citet{wang2023parallel}. 
In our algorithm, each update takes the inputs without the corrector operation. Furthermore, we perform  $P$ Picard iterations in each update to prevent error accumulation over time $T= \widetilde{\mathcal{O}}(\log d) $. 
%
%
%
However, these two fields are connected through the sampling strategies that ensure each discrete point closely approximates the true process at every sampling step.

\section{Parallel Picard Method for Strongly Log-concave Sampling}

In this section, we present parallel Picard methods for strongly log-concave sampling (Algorithm \ref{alg:main}) and show it holds improved convergence rate w.r.t. the $\mathsf{KL}$ divergence and total variance
(Theorem \ref{the:main1} and Corollary \ref{cor:main1}). 
We illustrate the algorithm in Section \ref{sec:alg1}, and give a proof sketch in Section \ref{sec:sketch1}. All the missing proofs can be found in Appendix \ref{app:proof1}.

\subsection{Algorithm}
\label{sec:alg1}


Our parallel Picard method for strongly log-concave sampling is summarized in Algorithm \ref{alg:main}. In Lines 2--7, we generate the noises and initialize the value at the grid via Langevin Monte Carlo~\citep{chewi2023log} with a stepsize $h = \mathcal{O}(1)$.
In Lines 8--26, the time slices are updated in a diagonal manner within the outer loop, as illustrated in Figure \ref{fig:parallel_sampling}.
In Lines 12--14 and Lines 21--23, we repeat $P$ Picard iterations for each update to ensure convergence.
%
%
\begin{remark}
    
Parallelization should be understood as evaluating the score function concurrently, with each time slice potentially being computed in an asynchronous parallel manner, resulting in the overall $P(N+J)+N$ adaptive complexity.
\end{remark}


\subsection{Theoretical Guarantees}
The following theorem summarizes our theoretical analysis for Algorithm \ref{alg:main}.
\begin{theorem}
\label{the:main1}
Suppose 
$\pi$ is $\alpha$-strongly log-concave and  $\beta$-log-smooth,
and the
score function $\boldsymbol{s}$ is $\delta$-accurate. Let $\kappa = \beta/\alpha$. Suppose
\[\beta h = 0.1, ~~~~~~~~~~~M\geq \frac{\kappa d}{\varepsilon^2}, ~~~~~~~~~~~N \geq 10\kappa\log \Big(\frac{\mathsf{KL}(\mu_0\|\pi)}{\varepsilon^2}\Big),\]
\[\delta\leq {0.2\sqrt{\alpha}\varepsilon},~~~~~~~~~~~P\geq \frac{2\log \kappa}{3}+4, \]
\[\mbox{     and     }~~~~~J -N  \geq \log\Big(N^3\Big(\frac{\kappa  \delta^2 h  + \kappa \mathsf{KL}(\mu_0\|\pi) + \kappa^2 d}{\varepsilon^2}\Big)\Big) .  \]
then Algorithm \ref{alg:main} runs within $N + (N+J)P$ iterations with at most $MN$ queries per iteration and outputs a sample with marginal distribution $\rho$ such that 
\[\max\left\{\frac{\sqrt{\alpha}}{2}\mathsf{W}_2(\rho,\pi),\mathsf{TV}(\rho,\pi)\right\}\leq\sqrt{\frac{\mathsf{KL(\rho,\pi)}}{2}}\leq2\varepsilon.\]
\end{theorem}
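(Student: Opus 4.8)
The plan is to decompose the total $\mathsf{KL}$ divergence between the output law $\rho$ and $\pi$ into three contributions, following the standard interpolation/Girsanov framework referenced in the technical overview: (i) the convergence error of the continuous overdamped Langevin diffusion run for time $T = Nh$, (ii) the discretization (truncation) error coming from the $M$-point quadrature on each time slice, and (iii) the score estimation error caused by replacing $\nabla f$ by $\boldsymbol{s}$. For (i), since $\pi$ satisfies LSI with constant $\alpha$, the law of the exact diffusion satisfies $\mathsf{KL}(\text{law}_T\|\pi) \le e^{-2\alpha T}\mathsf{KL}(\mu_0\|\pi)$; plugging $T = Nh$ with $N \ge 10\kappa\log(\mathsf{KL}(\mu_0\|\pi)/\varepsilon^2)$ and $\beta h = 0.1$ makes this $\lesssim \varepsilon^2$. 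This part, as the overview notes, is independent of how the slices are scheduled and is essentially a citation.

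The substantive work is bounding the error of Algorithm~\ref{alg:main}'s diagonal schedule relative to the exact diffusion with the same Brownian path. First I would fix the realized Brownian motion and set up, for each slice index $n$ and outer-iteration index $j$, an error quantity $L_n^j$ measuring (in an appropriate norm, e.g. expected squared distance summed over the $M$ grid points, or a uniform-in-$m$ version) the gap between the Picard iterate $\boldsymbol{x}_{n,\cdot}^j$ and the continuous trajectory on $[nh,(n+1)h]$ started from the true value at $nh$. The key recursion to establish is $L_n^j \le \mathsf{a}\,L_{n-1}^j + \mathsf{b}\,L_n^{j-1} + \mathsf{c}\,\delta^2 + (\text{truncation floor})$, where $\mathsf{b}<1$ comes from the contraction of the inner $P$-fold Picard loop on an $O(1)$-length slice (using $\beta h$ small; the per-iteration Picard contraction factor is something like $\beta h/M \cdot$(partial sums) which after $M$ points is $\le \beta h = 0.1$, and $P \ge \tfrac{2}{3}\log\kappa + 4$ damps it enough), $\mathsf{a}$ bounds the propagation of the left-boundary error across one slice via the (near-$1$ but $<1$) contraction of the gradient-flow map under $\beta$-smoothness and LSI, and $\mathsf{c}\delta^2$ absorbs the freshly injected score error on that update. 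Unrolling this two-dimensional recursion over the anti-diagonals (as in Figure~\ref{fig:parallel_sampling}) gives a bound of the form $L_n^j \lesssim \mathsf{a}^{n}\mathsf{b}^{\,j-n}(L_0^0) + \frac{\mathsf{c}\delta^2}{(1-\mathsf{a})(1-\mathsf{b})} + (\text{truncation floor})$; the initialization term $L_0^0$ is the $O(\kappa d/M)$-type bias from the Langevin warm start in Lines 4--7, and choosing $J - N \ge \log(N^3(\cdots)/\varepsilon^2)$ forces $\mathsf{a}^N\mathsf{b}^{J-N}(\cdots)\lesssim\varepsilon^2$, while $M \ge \kappa d/\varepsilon^2$ kills the truncation floor and $\delta \le 0.2\sqrt\alpha\,\varepsilon$ kills the score-error floor (after combining with the Talagrand/LSI conversion so that a $W_2$- or $L^2$-type bound becomes a $\mathsf{KL}$ bound). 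Finally, I would convert the bound on $L_{N-1}^J$ into $\mathsf{KL}(\rho\|\pi) \le 4\varepsilon^2$ by combining the discretization/score bound (via Girsanov, giving a $\mathsf{KL}$ contribution $\lesssim \varepsilon^2$) with the continuous-convergence term, and then invoke Pinsker and Talagrand's $\mathsf{T}_2$ inequality to get the stated $\mathsf{TV}$ and $\mathsf{W}_2$ consequences; the iteration count $N + (N+J)P$ and the $MN$ queries-per-iteration are read directly off the loop structure of Algorithm~\ref{alg:main}.

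The main obstacle I anticipate is establishing the recursion constant $\mathsf{a}<1$ rigorously: because the grid gap is $h/M = O(1/d)$, the contraction of the discretized gradient step is $1 - \Theta(1/d)$, i.e. only barely a contraction, so the left-boundary error from the previous slice is damped very weakly per step and one must be careful that summing over the $O(\log d)$ slices along a diagonal does not produce a $\mathrm{poly}(d)$ blow-up — this is exactly why the analysis needs the full $O(1)$-length slice (so the compounded contraction over the whole slice is a genuine constant $<1$, matching the continuous-time rate $e^{-\alpha h}$) rather than arguing step-by-step, and why $P$ must be taken $\gtrsim\log\kappa$ so that the residual coupling to prior same-slice updates (which is only $O(1)$, not contracting) is beaten down below the margin $1-\mathsf{a}$. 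Carefully interleaving the Picard contraction (Picard direction) with the gradient-flow contraction (time direction) so that both factors in the product $\mathsf{a}^N\mathsf{b}^{J-N}$ are simultaneously controlled, without the $\delta^2$ terms accumulating a factor exponential in $N$, is the crux; I expect this to require the $N^3$ slack in the definition of $J$ to cover polynomial losses from union-bounding the score errors over the $O(N^2)$ updates.
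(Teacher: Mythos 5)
Your proposal follows essentially the same route as the paper: the same three-way error decomposition via the Langevin interpolation method, the same two-index recursion $L_n^j \le \mathsf{a}\,L_{n-1}^j + \mathsf{b}\,L_n^{j-1} + \mathsf{c}\,\delta^2$ with $\mathsf{a}<1$ obtained by compounding the contraction of the gradient-descent map over a full $O(1)$-length slice (not per grid step) and $\mathsf{b}<1$ from the $P$-fold inner Picard loop with $P\gtrsim\log\kappa$, and the same conclusion that $J-N\gtrsim\log(N^3\cdots/\varepsilon^2)$ followed by Pinsker and Talagrand. The only cosmetic differences are that the paper's energy $L_n^j=\Delta_{n-1}^j+\kappa\mathcal{E}_n^{j-1}$ is built from successive Picard-iterate differences rather than the distance to the continuous trajectory, and the two-dimensional unrolling produces binomial coefficients (controlled via $\binom{m}{n}\le(em/n)^n$ and $0.04e^2<1$) rather than a clean product $\mathsf{a}^N\mathsf{b}^{J-N}$ --- a polynomial loss you correctly anticipated as the source of the $N^3$ slack.
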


\begin{algorithm}[htb]
   \caption{Parallel Picard Method for sampling}
   \label{alg:main}
\begin{algorithmic}[1]
   \STATE {\bfseries Input:}  $\boldsymbol{x}_0\sim \mu_0$, approximate score function  $\boldsymbol{s}\approx\nabla f$, the number of the iterations in outer loop $J$, the number of the iteration in inner loop $P$, the number of time slices $N$, the length of time slices $h$, the number of points on each time slices $M$.
  \FOR{$n=0,\ldots,N-1$}
 \FOR{$m=0,\ldots,M$ (in parallel)}
  \STATE $B_{nh+m/M h} = B_{nh} + \mathcal{N}(0,(mh/M)\boldsymbol{I}_d)$,
 \STATE $\boldsymbol{x}^j_{-1,M} = \boldsymbol{x}_0$, for $j= 0,\ldots,J$,
 \STATE $\boldsymbol{x}_{n,m}^0 = \boldsymbol{x}_{n-1,M}^{0} - \frac{hm}{M}\boldsymbol{s}(\boldsymbol{x}^0_{n-1,M} ) + \sqrt{2}(B_{nh+mh/M}-B_{nh})$,
 \ENDFOR
 \ENDFOR
 \FOR{$k =1,\ldots ,N$}
  \FOR{$j= 1,\ldots, \min\{k-1,J\}$ and $m =1,\ldots,M$ (in parallel)}
 \STATE let $n = k-j$,  $\boldsymbol{x}_{n,0}^j = \boldsymbol{x}_{n-1,M}^j$, and $\boldsymbol{x}_{n,m}^{j,0} = \boldsymbol{x}^{j-1}_{n,m}$,
 \FOR{$p=1,\ldots,P$}
 \STATE $\boldsymbol{x}^{j,p}_{n,m} =\boldsymbol{x}_{n,0}^j - \frac{h}{M}\sum\limits_{m'=0}^{m-1}\boldsymbol{s}(\boldsymbol{x}^{j,p-1}_{n,m'}) + \sqrt{2}(B_{nh+mh/M}-B_{nh})$,
 \ENDFOR
 \STATE $\boldsymbol{x}_{n,m}^j = \boldsymbol{x}^{j,P}_{n,m}$,
 \ENDFOR
 \ENDFOR
\FOR{$k = N+1,\ldots,N+J-1$}
 \FOR{$n= \max\{0,k-J\},\ldots, N-1$ and $m =1,\ldots,M$ (in parallel)}
\STATE let  $j = k-n$, $\boldsymbol{x}_{n,0}^j = \boldsymbol{x}_{n-1,M}^j$, and $\boldsymbol{x}_{n,m}^{j,0} = \boldsymbol{x}^{j-1}_{n,m}$,
\FOR{$p=1,\ldots,P$}
\STATE $\boldsymbol{x}^{j,p}_{n,m} =\boldsymbol{x}_{n,0}^j - \frac{h}{M}\sum\limits_{m'=0}^{m-1}\boldsymbol{s}({\boldsymbol{x}}^{j,p-1}_{n,m'}) + \sqrt{2}(B_{nh+mh/M}-B_{nh})$,
\ENDFOR
\STATE $\boldsymbol{x}_{n,m}^j = \boldsymbol{x}^{j,P}_{n,m}$,
\ENDFOR
\ENDFOR
\STATE \textbf{Return:}{${\boldsymbol{x}^{J}_{N-1,M}}$.}
\end{algorithmic}
\end{algorithm}

To make the guarantee more explicit, we can combine it with the following well-known initialization
bound, see, e.g., \citet[Section 3.2]{dwivedi2019log}.

\begin{corollary}
\label{cor:main1}
Suppose that $\pi = \exp(-f)$ is $\alpha$-strongly log-concave and $\beta$-log-smooth, and let $\kappa = \beta/\alpha$. Let
$\boldsymbol{x}^\star$ be the minimizer of $f$. Then, for $\mu_0 = \mathcal{N} (\boldsymbol{x}^\star,\beta^{-1})$, it holds that $\mathsf{KL}(\mu_0\|\pi)\leq \frac{d}{2}\log \kappa$.
Consequently, setting
\[h =\frac{1}{10\beta},\quad M = \frac{\kappa d}{\varepsilon^2},\quad N = 10\kappa \log \Big(\frac{d\log \kappa}{\varepsilon^2}\Big),\]
\[ \delta\leq {0.2\sqrt{\alpha}\varepsilon},\quad P\geq \frac{2\log \kappa}{3}+4,\]
\[\mbox{and}\quad J-N =  \mathcal{O}\Big(\log \frac{\kappa^2d\log\kappa}{\varepsilon^2}\Big),\]
then Algorithm \ref{alg:main} runs within $N + (N+J)P = \widetilde{\mathcal{O}}(\kappa \log\frac{d}{\varepsilon^2})$ iterations with at most $MN =\widetilde{\mathcal{O}}( \frac{\kappa^2 d}{\varepsilon^2} \log\frac{d}{\varepsilon^2})$ queries per iteration and outputs a sample with marginal distribution $\rho$ such that 
\[\max\left\{\frac{\sqrt{\alpha}}{2}\mathsf{W}_2(\rho,\pi),\mathsf{TV}(\rho,\pi)\right\}\leq\sqrt{\frac{\mathsf{KL(\rho,\pi)}}{2}}\leq2\varepsilon.\]
\end{corollary}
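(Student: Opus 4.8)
The plan is to derive Corollary~\ref{cor:main1} as a direct consequence of Theorem~\ref{the:main1}: all of the analytic content is already in the theorem, so the proof only has to (i) establish the initialization bound $\mathsf{KL}(\mu_0\|\pi)\le\frac d2\log\kappa$, (ii) check that the explicit parameter settings listed in the corollary satisfy every hypothesis of Theorem~\ref{the:main1}, and (iii) simplify the resulting expressions $N+(N+J)P$ and $MN$. I do not expect a genuine obstacle; the only place that needs a little care is confirming that the lower-order terms inside the logarithm defining the $J-N$ requirement are negligible, and that no spurious $\kappa$ or $\log$ factors appear when the $\widetilde{\gO}(\cdot)$ is collapsed.

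For (i), shift coordinates so that $\boldsymbol{x}^\star=\boldsymbol 0$ and $f(\boldsymbol x^\star)=0$, and write $\mathsf{KL}(\mu_0\|\pi)=\mathbb{E}_{\mu_0}[\log\mu_0]+\mathbb{E}_{\mu_0}[f]+\log Z_f$. Here $\mu_0=\mathcal N(\boldsymbol 0,\beta^{-1}\boldsymbol I_d)$ has $\mathbb{E}_{\mu_0}[\log\mu_0]=-\frac d2\log(2\pi e/\beta)$; $\beta$-smoothness with $\nabla f(\boldsymbol x^\star)=0$ gives $f(\boldsymbol x)\le\frac\beta2\|\boldsymbol x\|^2$, hence $\mathbb{E}_{\mu_0}[f]\le\frac\beta2\cdot\frac d\beta=\frac d2$; and $\alpha$-strong convexity gives $Z_f\le\int\exp(-\tfrac\alpha2\|\boldsymbol x\|^2)\,\mathrm d\boldsymbol x=(2\pi/\alpha)^{d/2}$, i.e. $\log Z_f\le\frac d2\log(2\pi/\alpha)$. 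Summing the three bounds, the $\pm\frac d2$ terms (the $e$ in the Gaussian entropy against $\mathbb{E}_{\mu_0}[f]$) and the $\pm\frac d2\log(2\pi)$ terms cancel, leaving $\frac d2\log(\beta/\alpha)=\frac d2\log\kappa$ (note $\kappa\ge1$). This is exactly the computation of \cite[Section 3.2]{dwivedi2019log}.

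For (ii): $h=\frac1{10\beta}$ gives $\beta h=0.1$; $M=\kappa d/\varepsilon^2$ meets $M\ge\kappa d/\varepsilon^2$; since $\mathsf{KL}(\mu_0\|\pi)\le\frac d2\log\kappa$ by (i), the choice $N=10\kappa\log(d\log\kappa/\varepsilon^2)$ satisfies $N\ge10\kappa\log(\mathsf{KL}(\mu_0\|\pi)/\varepsilon^2)$; and $\delta\le0.2\sqrt\alpha\varepsilon$, $P\ge\frac{2\log\kappa}3+4$ are taken verbatim. The only substantive check is the threshold on $J-N$: using $\delta\le0.2\sqrt\alpha\varepsilon$ and $h=\frac1{10\beta}$ one gets $\kappa\delta^2h/\varepsilon^2\le\frac{0.04\,\kappa\alpha}{10\beta}=4\times10^{-3}=O(1)$, while $\kappa\,\mathsf{KL}(\mu_0\|\pi)/\varepsilon^2=O(\kappa d\log\kappa/\varepsilon^2)$ and $\log N^3=O(\log\kappa+\log\log(d/\varepsilon^2))$, so $\log\!\left(N^3(\kappa\delta^2h+\kappa\,\mathsf{KL}(\mu_0\|\pi)+\kappa^2d)/\varepsilon^2\right)=O\!\left(\log(\kappa^2 d\log\kappa/\varepsilon^2)\right)$; hence choosing $J-N$ of exactly this order with a large enough absolute constant is admissible.

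For (iii), substitute into $N+(N+J)P$ and $MN$. Since $J=N+O(\log(\kappa^2 d\log\kappa/\varepsilon^2))$ and $N=\widetilde{\gO}(\kappa\log(d/\varepsilon^2))$, we get $N+J=\widetilde{\gO}(\kappa\log(d/\varepsilon^2))$, and multiplying by $P=O(\log\kappa)$ (absorbed into $\widetilde{\gO}$) gives iteration complexity $\widetilde{\gO}(\kappa\log(d/\varepsilon^2))$; likewise $MN=(\kappa d/\varepsilon^2)\cdot\widetilde{\gO}(\kappa\log(d/\varepsilon^2))=\widetilde{\gO}(\kappa^2 d\log(d/\varepsilon^2)/\varepsilon^2)$. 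The accuracy conclusion $\max\{\tfrac{\sqrt\alpha}2\mathsf{W}_2(\rho,\pi),\mathsf{TV}(\rho,\pi)\}\le\sqrt{\mathsf{KL}(\rho\|\pi)/2}\le2\varepsilon$ is inherited directly from Theorem~\ref{the:main1} once its hypotheses have been verified in (ii). The main (mild) obstacle is purely bookkeeping in steps (ii)--(iii): ensuring the $N^3$ and $\kappa\delta^2h$ terms genuinely are lower order and that the $\widetilde{\gO}$-collapse does not hide an extra power of $\kappa$.
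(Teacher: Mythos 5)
Your proposal is correct and follows exactly the route the paper intends: the paper gives no separate proof of this corollary beyond citing the initialization bound $\mathsf{KL}(\mu_0\|\pi)\le\frac d2\log\kappa$ from \cite[Section 3.2]{dwivedi2019log} and plugging the parameters into Theorem~\ref{the:main1}, which is precisely your steps (i)--(iii). Your verification that $\kappa\delta^2 h/\varepsilon^2=O(1)$ and that the $N^3$ factor only contributes $O(\log\kappa+\log\log(d/\varepsilon^2))$ inside the $J-N$ threshold is the right bookkeeping and matches the stated $J-N=\gO(\log(\kappa^2 d\log\kappa/\varepsilon^2))$.
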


\begin{remark}
The main drawback of our method is the sub-optimal space complexity due to its application to overdamped Langevin diffusion which has a less smooth trajectory  compared to underdamped Langevin diffusion. However, we anticipate that our method could achieve comparable space complexity when adapted to underdamped Langevin diffusion.
\end{remark}

\begin{remark}
Regarding the condition number $\kappa$, our method achieves the same adaptive complexity of $\mathcal{O}(\kappa)$ as both the state-of-the-art sequential method and existing parallel approaches~\cite{anari2024fast,yu2024parallelized,altschuler2024shifted}. Whether parallelization can improve the dependence on the condition number remains an open question, which we leave for future work.
\end{remark}

\begin{remark}
When the number of computation cores, denoted by  $\ell $, is limited, the adaptive complexity of our algorithm is  $\widetilde{\mathcal{O}}\big(\frac{\kappa^2 d}{\varepsilon^2 \ell} \log^2 \frac{d}{\varepsilon^2}\big)$ . This matches the state-of-the-art adaptive complexity of \citet{anari2024fast} and \citet{yu2024parallelized} when  $\ell \leq \frac{\kappa d}{\varepsilon^2} $. However, our algorithm achieves improved adaptive complexity when $\ell \geq \frac{\kappa d}{\varepsilon^2}$, with potential applications as demonstrated in ~\citet{nishihara2014parallel}, \citet{de2022parallel}, \citet{hafych2022parallelizing} and \citet{glatt2024parallel}.
\end{remark}


\subsection{Proof Sketch of Theorem \ref{the:main1}: Performance Analysis of Algorithm \ref{alg:main}}
\label{sec:sketch1}
The detailed proof of Theorem \ref{the:main1} is deferred to Appendix \ref{app:proof1}.
%
%
%
As discussed in Section~\ref{sec:tech}, by interpolation methods~\citep{anari2024fast}, we decompose the error w.r.t. the $\mathsf{KL}$ divergence into four error components~(corollary~\ref{cor:kl}):
\begin{align}
\label{eq:decomp}
&\mathsf{KL}(\rho\|\pi)\notag\\
~\lesssim ~& \underbrace{e^{-\Theta(N)}\mathsf{KL}(\mu_0\|\pi)}_{Convergence ~of~Langevin~dynamics}+ \underbrace{\frac{ dh }{M}}_{discretization~error}   \notag \\
&  + \underbrace{\sum\limits_{n=1}^{N-1} e^{-\Theta(n)} \mathcal{E}_{N-n}^J}_{parallization~error}+\underbrace{\delta^2}_{score~estimation~error}, 
\end{align}
where $\mathcal{E}_{n}^j$ represents the truncation error of the grids at $n$-th time slice after $j$ update.
For the right terms, with the choice of $N = \mathcal{O}(\log \frac{d}{\varepsilon^2})$, $M = \mathcal{O}(dh/\varepsilon^2)$ and $\delta\leq \varepsilon$, which ensures a sufficiently long time horizon the sufficiently long time horizon $T=Nh = \mathcal{O}(\log \frac{d}{\varepsilon^2})$, densely spaced grids with a gap $h/M=\mathcal{O}(\varepsilon^2/d)$ and a small score matching error, respectively, we can conclude that
\[e^{-\Theta(N)}\mathsf{KL}(\mu_0\|\pi)   + \frac{ dh }{M} +\delta^2\lesssim \varepsilon^2\]
Thus, we will focus on proving the convergence of the truncation error $\mathcal{E}_{n}^j$ in the Picard iterations, and avoiding the additional accumulation of the score estimation error during Picard iterations as discussed before.
Considering that the truncation error expands at most exponentially along the time direction, but diminishes exponentially with an increased depth of the Picard iterations, convergence can be achieved by ensuring that the depth of the Picard iterations surpasses the number of time slices as $J\geq N + \mathcal{O}(\log \frac{d}{\varepsilon^2})$ with initialization error bounded by $\mathcal{O}(d)$~(the second part of Corollary \ref{cor:dela} and second part of Corollary \ref{cor:e1}).

Due to the non-Lipschitzness of the score function,
%
%
we can only bound $\mathcal{E}_{n}^j$ by quantity  $\mathsf{a}\Delta_{n-1}^j + \mathsf{b} \mathcal{E}_{n}^{j-1} + \mathsf{c}\delta^2 h^2$~(Lemma \ref{lmm:dec3} and Lemma~\ref{lmm:dec2}), where $\Delta_{n-1}^j$ represents the truncation error at the right boundary of the previous time slice.
Here, the coefficients are given by:
$\mathsf{a} = 1 - 0.1 \frac{\beta h}{\kappa} + \mathcal{O}(\kappa)(3 \beta^2 h^2)^P$, $\mathsf{b} = \mathcal{O}((\beta^2 h^2)^P)$ and $\mathsf{c} = \mathcal{O}(\kappa \delta^2 h^2)$.
Intuitively, $\mathsf{a}$ comes from the contraction of the gradient mapping with an additional term from the Picard direction, $\mathsf{b}$ reflects convergence along the Picard direction, and $\mathsf{c}$ accounts for the accumulation of score estimation error $\delta$ over time length $h$, with an additional scaling by $\kappa$ due to Young's inequality.
To control the growth of the score error, it is essential that the coefficients $\mathsf{a}$ and $\mathsf{b}$ remain strictly less than one. Setting $P = \Theta(\log \kappa)$ is sufficient to ensure this condition.
%

\section{Parallel Picard Method for Sampling of Diffusion Models}

In this section, we present parallel Picard methods for diffusion models in Section~\ref{sec:alg2} and assumptions in Section \ref{ass:sde}. Then we show it holds improved convergence rate w.r.t. the $\mathsf{KL}$ divergence (Theorem \ref{the:main2}). 
 All the missing details can be found in Appendix \ref{app:diffusion}.

\subsection{Algorithm}
\label{sec:alg2}

Due to space limitations, the detailed methodology for the parallelization of Picard methods for diffusion models is provided in Appendix \ref{app:alg} and Algorithm \ref{alg:diffusion}.
It keeps same parallel structure as that illustrated in Figure \ref{fig:parallel_sampling}.
Notably, it exhibits the following distinctions in comparison to the parallel Picard methods for strongly log-concave sampling presented in Algorithm \ref{alg:main}:
\begin{itemize}
\vspace{-0.15in}
    \item Since the score function itself is Lipschitz, there will not be additional score matching error during Picard iterations. As a result, we perform single Picard iteration in one update, i.e., $P=1$;
     \vspace{-0.1in}
    \item Instead of uniform discrete grids, we employ a shrinking step size discretization scheme towards the data end, and the early stopping technique which is unvoidable to show the convergence for diffusion models~\citep{chen2024accelerating}. We show the details in Appendix~\ref{app:alg};
     \vspace{-0.1in}
    \item We use an exponential integrator instead of the Euler-Maruyama Integrator in Picard iterations,  where an additional high-order discretization error term would emerge~\citep{chen2023improved}, which we believe would not affect the overall $\mathcal{O}(\log d)$ adaptive complexity with parallel sampling.
\end{itemize}

\subsection{Assumptions}
\label{ass:sde}
Our theoretical analysis of the algorithm assumes mild conditions regarding the data distribution's regularity and the approximation properties of NNs as discussed in \citet{chen2024accelerating}. These assumptions align with those established in previous theoretical works, such as those described by~\citet{chen2022sampling,chen2023improved, chen2024probability,chen2024accelerating}.

\begin{assumption}[\bf{($L^2([0, t_N ])~ \delta_2$-accurate learned score}]
\label{ass:1}
The learned NN-based score $\boldsymbol{s}^\theta_t$ is $\delta_2$-accurate in the sense of
\begin{align*}
\mathbb{E}_{\cev{p}}\bigg[&\sum\limits_{n=0}^{N-1}\sum\limits_{m=0}^{M_n-1}\epsilon_{n,m}\big\|\boldsymbol{s}^\theta_{t_n+\tau_{n,m}}(\cev{\boldsymbol{x}}_{t_n+\tau_{n,m}})\\
&- \nabla \log \cev{p}_{t_n+\tau_{n,m}}(\cev{\boldsymbol{x}}_{t_n+\tau_{n,m}})\big\|^2\bigg]\leq \delta_2^2.   
\end{align*}
\end{assumption}

\begin{assumption}[\bf{Regular and normalized data distribution}]
\label{ass:2}
The data density $p_0$ has finite
second moments and is normalized such that $\mathsf{cov}_{p_0}(\boldsymbol{x}_0) = \boldsymbol{I}_d$.
\end{assumption}

\begin{assumption}[\bf{Bounded and Lipschitz learned NN-based score}]
\label{ass:3}
The learned NN-based score
function $\boldsymbol{s}^\theta_t$  has a bounded $\mathcal{C}^1$ norm, i.e. , $\left\|\left\|\boldsymbol{s}^\theta_t(\cdot)\right\|\right\|_{L^{\infty}([0,T])}\leq M_{\boldsymbol{s}}$ with Lipschitz constant $L_{\boldsymbol{s}}$.
\end{assumption}

\subsection{Theoretical Guarantees}

\begin{theorem}
\label{the:main2}
Under Assumptions \ref{ass:1}, \ref{ass:2}, and \ref{ass:3}, given the following choices of the order of the parameters 
\[h =\Theta(1), \quad N = \mathcal{O}\Big( \log \frac{d}{\varepsilon^2}\Big), \quad~M = \mathcal{O}\Big(\frac{d}{\varepsilon^2} \log \frac{d}{\varepsilon^2}\Big), \]
\[T = \mathcal{O}\Big(\log \frac{d}{\varepsilon^2}\Big), \quad \delta\leq \varepsilon,\quad\mbox{and}\quad J =  \mathcal{O}\Big(N + \log \frac{N d}{\varepsilon^2}\Big)\]
the parallel Picard algorithm for diffusion models (Algorithm \ref{alg:diffusion}) generates samples from satisfies the following error bound,
\begin{equation}
\label{eq:the2}
\mathsf{KL}(p_\eta\|\widetilde{q}_{t_N})\lesssim de^{-T} +\frac{dT}{M}+ \varepsilon^2 +\delta_2^2\lesssim \varepsilon^2,
\end{equation}
with total $2N+J= \widetilde{\mathcal{O}}\big( \log \frac{d}{\varepsilon^2}\big)$ adaptive complexity and $dM = \widetilde{\mathcal{O}}\big(  \frac{d^2}{\varepsilon^2}\big)$  space complexity for parallelizable $\delta_2$-accurate score function computations.
\end{theorem}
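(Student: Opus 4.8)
\textbf{Proof proposal for Theorem \ref{the:main2}.}

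The plan is to mirror the analysis of Theorem \ref{the:main1} while exploiting the Lipschitzness of the NN-based score (Assumption \ref{ass:3}), which removes the need for the inner repetition loop ($P=1$). First I would set up the continuous-process error decomposition: by Girsanov's theorem (or an interpolation argument as in \cite{chen2024accelerating,chen2023improved}), write the KL divergence between the output law $\widetilde q_{t_N}$ of the discretized backward SDE and the law $p_\eta$ of the early-stopped true backward process as a sum of (i) the mixing error $\mathsf{KL}(p_T\|\mathcal N(\boldsymbol 0_d,\boldsymbol I_d)) \lesssim d e^{-T}$ coming from initializing the backward SDE at a standard Gaussian rather than the true forward marginal $\cev p_0$, which decays because the OU process contracts exponentially; (ii) the score estimation error, controlled by Assumption \ref{ass:1} to give the $\delta_2^2$ term; and (iii) the discretization error $\sum_n \epsilon_{n,m}\mathbb{E}\|\text{drift terms}\|^2$ which, after the standard second-moment bounds on the true backward process (using Assumption \ref{ass:2} and the bounded score of Assumption \ref{ass:3}) and the choice of shrinking step sizes toward $t=\eta$, contributes the $dT/M$ term plus the high-order exponential-integrator correction that I claim is lower-order.

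The heart of the argument is bounding the ``discretization error'' term by the Picard truncation error on the collocation grid, and showing this truncation error converges under the diagonal updating scheme. Here I would introduce $\mathcal E_n^j$, the (expected squared) deviation between the Picard iterate $\boldsymbol x_{n,m}^j$ and the exact solution of the integral equation on the $n$-th slice fed with the exact left boundary, and establish a two-parameter recursion $\mathcal E_n^j \le \mathsf a\, \Delta_{n-1}^j + \mathsf b\, \mathcal E_n^{j-1}$ where $\Delta_{n-1}^j$ is the error at the right endpoint of slice $n-1$ after $j$ updates. Because the true drift $\tfrac12 \boldsymbol y + \boldsymbol s_t^\theta(\boldsymbol y)$ is globally Lipschitz with constant $O(1+L_{\boldsymbol s})$, and each slice has length $h=\Theta(1)$ chosen small enough, a single Picard iteration already contracts: $\mathsf b < 1$. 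Crucially --- and this is where the diffusion case is easier than Problem (a) --- there is \emph{no} additive $\mathsf c\delta^2$ term in the recursion, since the Lipschitz score introduces no fresh estimation error inside a Picard sweep; this is exactly why $P=1$ suffices and why the recursion is clean. Unrolling the recursion over the $N$ slices and $J$ rows of the diagonal scheme gives $\mathcal E_n^j \lesssim \mathsf a^n \mathsf b^{\,j-n}\,(\text{initial bias})$; with the initialization via the exponential-integrator one-step map the initial bias is $O(d)$ (more precisely $O(\mathsf{poly}(d))$ after accounting for $T$ and $M$), so choosing $J = N + \Theta(\log(Nd/\varepsilon^2))$ drives every $\mathcal E_n^j$ below $\varepsilon^2/(dT)\cdot(\text{grid count})$, which when summed against the $\epsilon_{n,m}$ weights yields the claimed $\varepsilon^2$ contribution. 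Finally I would tally the iteration count: the two triangular sweeps (the $k=1,\dots,N$ loop and the $k=N+1,\dots,N+J-1$ loop) plus the $N$-step initialization give $2N+J = \widetilde{\gO}(\log\frac{d}{\varepsilon^2})$ sequential score evaluations, and since each parallel round touches at most $M$ grid points over at most $N$ active slices, the per-round query count and hence the space complexity is $\widetilde{\gO}(dM) = \widetilde{\gO}(d^2/\varepsilon^2)$.

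The main obstacle I anticipate is controlling $\mathsf a$, the coefficient governing how the boundary error from slice $n-1$ propagates into slice $n$. Unlike the Picard contraction in the $j$-direction, which we can make strictly less than one by shrinking $h$, the propagation along the time direction is governed by the Lipschitz constant of the backward drift, and the factor $e^{O(h)(1+L_{\boldsymbol s})}$ can exceed one. In Problem (a) this was handled by the contraction of the gradient-descent map on a strongly-log-concave target; here there is no such contraction a priori, so I would instead rely on the fact that the \emph{true} backward process itself is exponentially contractive in a suitable sense over the time horizon (the OU forward process and its time-reversal contract, and the accumulated factor $\mathsf a^N$ is offset by running the Picard depth $J$ slightly beyond $N$). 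Making this offset quantitative --- i.e. showing $\mathsf a^N \mathsf b^{J}$ is $\le \varepsilon^2/\mathsf{poly}(d)$ with $J-N = \Theta(\log\frac{Nd}{\varepsilon^2})$ even when $\mathsf a = 1 + \Theta(h)$ --- is the delicate bookkeeping step, and it is what forces the precise form of $J$ in the theorem statement. A secondary technical point is justifying that the exponential-integrator's extra high-order term is genuinely negligible under shrinking step sizes; I would defer the detailed estimate to the appendix, citing the analogous bound in \cite{chen2023improved}.
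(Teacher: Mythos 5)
Your proposal follows essentially the same route as the paper: a Girsanov/interpolation decomposition of the KL into initialization ($de^{-T}$), discretization ($A_{t_n}$, giving $dT/M$ under the shrinking-step schedule), score-estimation ($B_{t_n}\le\delta_2^2$ via Assumption \ref{ass:1}), and a Picard-truncation term; a clean two-index recursion $\mathcal{E}_n^j\le \mathsf{a}\,\Delta_{n-1}^j+\mathsf{b}\,\mathcal{E}_n^{j-1}$ with no additive score term (hence $P=1$); an $\gO(d)$ initialization bias; and the depth choice $J=\gO(N+\log\frac{Nd}{\varepsilon^2})$. The one place where you diverge is the step you yourself flag as delicate: controlling the time-direction coefficient $\mathsf{a}$. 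Your first-choice resolution --- invoking exponential contractivity of the true backward process --- is not available and is not what the paper uses; the time-reversal of the OU process is not contractive in the relevant sense (near the data end the score can be badly behaved), and no such property appears in the analysis. What the paper actually does (Lemmas \ref{lmm:decsde2} and \ref{lmm:delta_diff}) is accept $\mathsf{a}=\Theta(1)>1$ (concretely, factors like $2e^{2h_n}$ and $3$ from Gronwall and Young-inequality losses, compounding to $160^n$ or $440^n$ after unrolling with the binomial path-counting terms) and compensate purely through the Picard contraction, which under $L_{\boldsymbol{s}}^2e^{2h_n}h_n\le 0.01$ gives $\mathsf{b}\approx 0.01$; this forces $J\ge CN+\log(N\mathcal{E}_I/\varepsilon^2)$ for a large absolute constant ($C=45$ in the paper) rather than your claimed $J-N=\Theta(\log\frac{Nd}{\varepsilon^2})$, though both are $\gO(N+\log\frac{Nd}{\varepsilon^2})$ and so yield the same theorem statement. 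Your fallback description ("$\mathsf{a}^N\mathsf{b}^J$ offset by running $J$ beyond $N$") is exactly the paper's mechanism, so I would classify this as a recoverable misdirection rather than a fatal gap --- but as written, the appeal to backward-process contractivity should be removed, and the unrolling must account for the $\binom{n+j}{j}$ path-count (bounded via $\binom{m}{n}\le(em/n)^n$), which inflates the effective base of $\mathsf{a}$ and is precisely why the constant in front of $N$ in the requirement on $J$ is large.
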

\begin{remark}
Compared to existing parallel methods, our method improves the adaptive complexity from $\mathcal{O}(\log^2\frac{d}{\varepsilon^2})$ to $\mathcal{O}(\log\frac{d}{\varepsilon^2})$. Its main drawback is suboptimal space complexity due to the less smooth trajectory in SDE implementations, but we believe it can achieve comparable space complexity when adapted to ODE implementations.
\end{remark}

\begin{remark}
When the number of computation cores $\ell$ is limited, the adaptive complexity is $\widetilde{\mathcal{O}}\big(\frac{\kappa^2 d}{\varepsilon^2 \ell} \log^2 \frac{d}{\varepsilon^2}\big)$, matching the state-of-the-art result of \citet{chen2024accelerating} for $\ell \leq \frac{\kappa d}{\varepsilon^2}$. However, employing a large batch size $\ell \geq \frac{\kappa d}{\varepsilon^2}$ in diffusion models may not always yield substantial benefits
~\citep{shih2024parallel,li2024swiftdiffusion,li2024distrifusion}.
\end{remark}

\begin{remark}
We note that the uniformly Lipschitz assumption (Assumption~\ref{ass:3}) may be too strong. In particular, the required Lipschitz constant can become quite large even becoming unbounded near the zero point~\cite{salmona2022can,yang2023lipschitz}. In this case, to ensure convergence of the Picard iterations under these conditions, the quantity $L_{\boldsymbol{s}}^2e^{h_n}h_n$ must be sufficiently small. This requirement implies that the length of each time slice, $h_n$, should scale as $\mathcal{O}(1/L_{\boldsymbol{s}}^2)$. Consequently, the number of time slices becomes $N = \mathcal{O}(L_{\boldsymbol{s}}^2\log d)$, leading to an overall iteration complexity of $N = \mathcal{O}(L_{\boldsymbol{s}}^2\log d)$.  
We also believe our algorithm, which is based on a diagonal-style update, is robust to this assumption by adaptively adjusting the length of the time slices.
\end{remark}

\section{Discussion and Conclusion}

In this work, we propvel parallel Picard methods for various sampling tasks.
Notably, we obtain $\varepsilon^2$-accurate sample w.r.t. the $\mathsf{KL}$ divergence within $\widetilde{\mathcal{O}}\left(\log \frac{d}{\varepsilon^2}\right)$, which 
represents a significant improvement from $\widetilde{\mathcal{O}}\left(\log^2 \frac{d}{\varepsilon^2}\right)$ for diffusion models.
Furthermore compared with the existing methods applied to the overdamped Langevin dynamics or the SDE implementations for diffusion models, our space complexity only scales by a logarithmic factor.

Our study opens several promising theoretical directions. First, as an analogue to simulation methods in scientific computing, it highlights the potential of leveraging alternative discretization techniques for faster and more efficient sampling. Another direction is exploring smoother dynamics to reduce space complexity in these methods.

Lastly, although our highly parallel methods may introduce engineering challenges, such as the memory bandwidth, we believe our theoretical works will motivates the empirical development of parallel algorithms for both sampling and diffusion models.

\section*{Acknowledgment}
The authors thank Sinho Chewi for very helpful conversations. HZ was supported by International
Graduate Program of Innovation for Intelligent World and Next Generation Artificial Intelligence
Research Center.
MS was supported by JST ASPIRE Grant Number JPMJAP2405.

\section*{Impact Statement}
This work focuses on the theory of accelerating the sampling via parallelism. As far as we can see, there is no foreseeable negative impact on the society.


\bibliography{example_paper}
\bibliographystyle{icml2025}

\newpage
\appendix
\onecolumn
\section{Useful tools}

\subsection{Girsanov's Theorem}
Following the notation introduced in \citet[Appendix A.2]{chen2024accelerating}, we consider a probability space $(\Omega, \mathcal{F}, p)$ on which $(\boldsymbol{w}_t(\omega))_{t \geq 0}$ is a Wiener process in $\mathbb{R}^d$, with the filtration $\{\mathcal{F}_t\}_{t \geq 0}$. For an It\^o process $\boldsymbol{z}_t(\omega)$ satisfies $\mathrm{d}\boldsymbol{z}_t(\omega) = \boldsymbol{\alpha}(t,\omega) \mathrm{d}t + \boldsymbol{\Sigma}(t,\omega) \mathrm{d}\boldsymbol{w}_t(\omega)$,   we denote the marginal distribution of $\boldsymbol{z}_t$ by $p_t$, 
and the path measure of the process $z_t$ by $p_{t_1:t_2}$.
\begin{definition}
Assume $\mathcal{B}$ is the Borel $\sigma$-algebra on $\mathbb{R}^d$.
For any $0 \leq t_1 < t_2$, we define $\mathcal{V}$ as the class of functions $f(t,\omega) : [0,+\infty) \times \Omega \to \mathbb{R}$ which is $\mathcal{B} \times \mathcal{F}_t$ measurable, $\mathcal{F}_t$-adapted  for any $t \geq 0$ and satisfies 
\[ \mathbb{E} \left[ \exp \left( \int_{0}^{t} f^2(t,\omega) dt \right) \right] < +\infty,\quad \forall t>0.\]
For vectors and matrices, we say it belongs to $\mathcal{V}^n$ or $\mathcal{V}^{m \times n}$ if each component of the vector or each entry of the matrix belongs to $\mathcal{V}$.
\end{definition}
\begin{theorem}[\bf{\citep[Corollary A.4]{chen2024accelerating}}]
\label{the:gir}
Let $\boldsymbol{\alpha}(t,\omega) \in \mathcal{V}^m$, $\boldsymbol{\Sigma}(t,\omega) \in \mathcal{V}^{m \times n}$, and $(\boldsymbol{w}_t(\omega))_{t \geq 0}$ be a Wiener process on the probability space $(\Omega, \mathcal{F}, q)$. For $t \in [0,T]$, suppose $\boldsymbol{z}_t(\omega)$ is an It\^o process with the following SDE:
\begin{equation}
\label{eq:gir1}
\mathrm{d}\boldsymbol{z}_t(\omega) = \boldsymbol{\alpha}(t,\omega) \mathrm{d}t + \boldsymbol{\Sigma}(t,\omega) \mathrm{d}\boldsymbol{w}_t(\omega), 
\end{equation}
and there exist processes $\boldsymbol{\delta}(t,\omega) \in \mathcal{V}^n$ and $\boldsymbol{\beta}(t,\omega) \in \mathcal{V}^m$ such that:
\begin{enumerate}
    \item $\boldsymbol{\Sigma}(t,\omega)\boldsymbol{\delta}(t,\omega) = \boldsymbol{\alpha}(t,\omega) - \boldsymbol{\beta}(t,\omega)$;
    \item The process $M_t(\omega)$ as defined below is a martingale with respect to the filtration $\{\mathcal{F}_t\}_{t \geq 0}$ and probability measure $q$:
    \[
    M_t(\omega) = \exp\left( - \int_0^t \boldsymbol{\delta}(s,\omega)^\top \mathrm{d}\boldsymbol{w}_s(\omega) - \frac{1}{2} \int_0^t \|\boldsymbol{\delta}(s,\omega)\|^2 \mathrm{d}s \right),
    \]
\end{enumerate}
then there exists another probability measure $p$ on $(\Omega, \mathcal{F})$ such that:
    \begin{enumerate}
        \item $p \ll q$ with the Radon-Nikodym derivative $\frac{\mathrm{d}p}{\mathrm{d}q}(\omega) = M_T(\omega)$,
        \item The process $\widetilde{\boldsymbol{w}}_t(\omega)$ as defined below is a Wiener process on $(\Omega, \mathcal{F}, p)$:
\[
\widetilde{\boldsymbol{w}}_t(\omega) = \boldsymbol{w}_t(\omega) + \int_0^t \boldsymbol{\delta}(s,\omega)\mathrm{d}s,
\]
\item  Any continuous path in $\mathcal{C}([t_1, t_2], \mathbb{R}^m)$ generated by the process $\boldsymbol{z}_t$ satisfies the following SDE under the probability measure $p$:
\begin{equation}
\label{eq:gir2}
\mathrm{d}\widetilde{\boldsymbol{z}}_t(\omega) = \boldsymbol{\beta}(t,\omega) \mathrm{d}t + \boldsymbol{\Sigma}(t,\omega) \mathrm{d}\widetilde{\boldsymbol{w}}_t(\omega). 
\end{equation}
\end{enumerate}
\end{theorem}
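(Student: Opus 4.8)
The plan is to read this as a packaging of the classical Cameron--Martin--Girsanov theorem under the hypothesis that the exponential supermartingale $M_t$ is in fact a genuine $q$-martingale, and to assemble the three conclusions from three standard ingredients: the construction of the tilted measure, the abstract Bayes rule, and L\'evy's characterization of Brownian motion. First I would establish conclusion~1. Since $(M_t)_{t\in[0,T]}$ is a martingale under $q$ and $M_0=\exp(0)=1$, we have $\mathbb{E}_q[M_T]=\mathbb{E}_q[M_0]=1$, and $M_T>0$ $q$-a.s.; hence $\mathrm{d}p:=M_T\,\mathrm{d}q$ defines a probability measure on $(\Omega,\mathcal{F})$ with $p\ll q$ and Radon--Nikodym derivative $\frac{\mathrm{d}p}{\mathrm{d}q}=M_T$. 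Restricting to $\mathcal{F}_t$ and using the martingale property, $\frac{\mathrm{d}(p|_{\mathcal{F}_t})}{\mathrm{d}(q|_{\mathcal{F}_t})}=M_t$.

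Next I would record two tools. The \textbf{abstract Bayes rule}: for $0\le s\le t\le T$ and bounded $\mathcal{F}_t$-measurable $Y$, one has $\mathbb{E}_p[Y\mid\mathcal{F}_s]=M_s^{-1}\,\mathbb{E}_q[YM_t\mid\mathcal{F}_s]$ $p$-a.s.; consequently an $\{\mathcal{F}_t\}$-adapted continuous process $X_t$ is a local martingale under $p$ if and only if $(M_tX_t)$ is a local martingale under $q$. The \textbf{SDE for $M$}: applying It\^o's formula to $x\mapsto e^x$ together with the definition of $M_t$ shows that $M$ is the Dol\'eans--Dade exponential of $-\int_0^\cdot\boldsymbol{\delta}_s^\top\,\mathrm{d}\boldsymbol{w}_s$, so
\[\mathrm{d}M_t=-M_t\,\boldsymbol{\delta}_t^\top\,\mathrm{d}\boldsymbol{w}_t,\qquad M_0=1,\qquad \mathrm{d}\langle M,w^i\rangle_t=-M_t\,\delta_t^i\,\mathrm{d}t.\]

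For conclusion~2, set $\widetilde{w}_t^i:=w_t^i+\int_0^t\delta_s^i\,\mathrm{d}s$. Using the product rule and the line above,
\[\mathrm{d}(M_t\widetilde{w}_t^i)=M_t(\mathrm{d}w_t^i+\delta_t^i\,\mathrm{d}t)-M_t\widetilde{w}_t^i\,\boldsymbol{\delta}_t^\top\,\mathrm{d}\boldsymbol{w}_t-M_t\delta_t^i\,\mathrm{d}t=M_t\,\mathrm{d}w_t^i-M_t\widetilde{w}_t^i\,\boldsymbol{\delta}_t^\top\,\mathrm{d}\boldsymbol{w}_t,\]
which has no finite-variation part, so $(M_t\widetilde{w}_t^i)$ is a $q$-local martingale and hence, by the Bayes criterion, $(\widetilde{w}_t^i)$ is a $p$-local martingale. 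An analogous computation for $M_t(\widetilde{w}_t^i\widetilde{w}_t^j-\delta_{ij}t)$ shows the $\mathrm{d}t$-terms cancel, so $(\widetilde{w}_t^i\widetilde{w}_t^j-\delta_{ij}t)$ is a $p$-local martingale, i.e. $\langle\widetilde{w}^i,\widetilde{w}^j\rangle_t=\delta_{ij}t$ under $p$. Since $\widetilde{\boldsymbol{w}}$ is continuous with $\widetilde{\boldsymbol{w}}_0=0$, L\'evy's characterization theorem gives that $\widetilde{\boldsymbol{w}}$ is a standard $n$-dimensional Wiener process under $p$; the exponential integrability built into $\mathcal{V}$ is what allows the localization and the promotion of local martingales to martingales where needed. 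Finally, for conclusion~3, substituting $\mathrm{d}\boldsymbol{w}_t=\mathrm{d}\widetilde{\boldsymbol{w}}_t-\boldsymbol{\delta}_t\,\mathrm{d}t$ into $\mathrm{d}\boldsymbol{z}_t=\boldsymbol{\alpha}_t\,\mathrm{d}t+\boldsymbol{\Sigma}_t\,\mathrm{d}\boldsymbol{w}_t$ and using hypothesis~1, $\boldsymbol{\Sigma}_t\boldsymbol{\delta}_t=\boldsymbol{\alpha}_t-\boldsymbol{\beta}_t$, yields $\mathrm{d}\boldsymbol{z}_t=\boldsymbol{\beta}_t\,\mathrm{d}t+\boldsymbol{\Sigma}_t\,\mathrm{d}\widetilde{\boldsymbol{w}}_t$; since $p\ll q$, the It\^o integral $\int_0^\cdot\boldsymbol{\Sigma}_s\,\mathrm{d}\widetilde{\boldsymbol{w}}_s$ computed under $q$ agrees $q$-a.s., hence $p$-a.s., with the one computed under $p$, so the same path $t\mapsto\boldsymbol{z}_t(\omega)$ solves the displayed SDE under $p$.

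I expect the main obstacle to be the verification underlying conclusion~2 — the two product-rule computations feeding into the abstract Bayes criterion, finished off by L\'evy's characterization, together with the bookkeeping of local-versus-true martingales (controlled through the integrability encoded in $\mathcal{V}$). Everything else is either immediate ($p$ is a probability measure once $M$ is a $q$-martingale) or a purely algebraic substitution (conclusion~3); the hypothesis that $M_t$ is a $q$-martingale is precisely what carries the weight of the argument, so I would emphasize at the outset that we take it as given rather than attempting to deduce it from a Novikov-type condition.
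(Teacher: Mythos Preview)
The paper does not supply its own proof of this statement: Theorem~\ref{the:gir} is simply quoted as a tool from \cite[Corollary~A.4]{chen2024accelerating} in the ``Useful tools'' appendix, with no argument given. So there is nothing to compare against on the paper's side.

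That said, your proposal is the standard proof of Girsanov's theorem and is correct. Defining $p$ via $\mathrm{d}p=M_T\,\mathrm{d}q$, invoking the abstract Bayes rule to transfer the (local) martingale property, computing $\mathrm{d}(M_t\widetilde{w}_t^i)$ and $\mathrm{d}\big(M_t(\widetilde{w}_t^i\widetilde{w}_t^j-\delta_{ij}t)\big)$ to feed L\'evy's characterization, and then substituting $\mathrm{d}\boldsymbol{w}_t=\mathrm{d}\widetilde{\boldsymbol{w}}_t-\boldsymbol{\delta}_t\,\mathrm{d}t$ together with $\boldsymbol{\Sigma}_t\boldsymbol{\delta}_t=\boldsymbol{\alpha}_t-\boldsymbol{\beta}_t$ is exactly the textbook route (e.g., \O{}ksendal or Karatzas--Shreve). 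Your identification of the one genuinely delicate point---that the martingale hypothesis on $M_t$ is \emph{assumed} rather than deduced from a Novikov-type criterion, and that this is what lets you avoid the usual integrability headaches---is also accurate and worth stating explicitly.
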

\begin{corollary}[\bf{\citep[Corollary A.5]{chen2024accelerating}}]
\label{coro:gir}
Suppose the conditions in Theorem \ref{the:gir} hold, then for any $t_1, t_2 \in [0, T]$ with $t_1 < t_2$, the path measure of the SDE~\eqref{eq:gir2} under the probability measure $p$ in the sense of $p_{t_1:t_2} = p(\boldsymbol{z}^{-1}_{t_1:t_2}(\cdot))$ is absolutely continuous with respect to the path measure of the SDE~\eqref{eq:gir1} in the sense of $q_{t_1:t_2} = q(\boldsymbol{z}_{t_1:t_2}^{-1}(\cdot))$. Moreover, the KL divergence between the two \emph{path measures} is given by
\[
\mathsf{KL}(p_{t_1:t_2} \| q_{t_1:t_2}) = \mathsf{KL}(p_{t_1} \| q_{t_1}) + \mathbb{E}_{\omega \sim p|_{\mathcal{F}_{t_1}}} \left[ \frac{1}{2} \int_{t_1}^{t_2} \|\boldsymbol{\delta}(t, \omega)\|^2 \mathrm{d}t \right].
\]

\end{corollary}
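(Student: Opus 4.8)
\textbf{Proof plan for Corollary \ref{coro:gir}.}

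The plan is to derive the two claims directly from the conclusions of Theorem \ref{the:gir}, which has already been established. First I would fix $t_1 < t_2$ in $[0,T]$ and observe that, by item (3) of Theorem \ref{the:gir}, any continuous path generated by $\boldsymbol{z}_t$ satisfies the SDE \eqref{eq:gir2} driven by the $p$-Wiener process $\widetilde{\boldsymbol{w}}_t$. Pushing the measures $p$ and $q$ forward through the path map $\omega \mapsto \boldsymbol{z}_{t_1:t_2}(\omega) \in \mathcal{C}([t_1,t_2],\mathbb{R}^m)$ gives the path measures $p_{t_1:t_2}$ and $q_{t_1:t_2}$. Since $p \ll q$ on $(\Omega,\mathcal{F})$ with $\frac{\mathrm{d}p}{\mathrm{d}q} = M_T$ (item (1) of Theorem \ref{the:gir}), absolute continuity is preserved under pushforward, so $p_{t_1:t_2} \ll q_{t_1:t_2}$; this settles the first assertion.

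For the KL identity I would compute the Radon–Nikodym derivative of $p_{t_1:t_2}$ with respect to $q_{t_1:t_2}$ by conditioning. Write $\mathcal{F}_{t_1:t_2}^{\boldsymbol{z}} = \sigma(\boldsymbol{z}_s : t_1 \le s \le t_2)$ for the $\sigma$-algebra generated by the path segment, and use the tower property to express $\frac{\mathrm{d}p_{t_1:t_2}}{\mathrm{d}q_{t_1:t_2}}$ as a conditional expectation of $M_T = \exp(-\int_0^T \boldsymbol{\delta}^\top \mathrm{d}\boldsymbol{w}_s - \frac12 \int_0^T \|\boldsymbol{\delta}\|^2 \mathrm{d}s)$. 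Factoring $M_T$ into the contribution up to $t_1$ and the increment over $[t_1,t_2]$, the first factor yields $\frac{\mathrm{d}p_{t_1}}{\mathrm{d}q_{t_1}}$ after taking expectations, while the increment $M_{t_1 \to t_2} := \exp(-\int_{t_1}^{t_2} \boldsymbol{\delta}^\top \mathrm{d}\boldsymbol{w}_s - \frac12 \int_{t_1}^{t_2} \|\boldsymbol{\delta}\|^2 \mathrm{d}s)$ is already adapted to the path $\sigma$-algebra (since $\boldsymbol{\delta}$ is recovered from $\boldsymbol{z}$ via the relation $\boldsymbol{\Sigma}\boldsymbol{\delta} = \boldsymbol{\alpha} - \boldsymbol{\beta}$ and the dynamics). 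Then
\[
\mathsf{KL}(p_{t_1:t_2}\|q_{t_1:t_2}) = \mathbb{E}_{p_{t_1:t_2}}\!\left[\log \frac{\mathrm{d}p_{t_1:t_2}}{\mathrm{d}q_{t_1:t_2}}\right] = \mathsf{KL}(p_{t_1}\|q_{t_1}) + \mathbb{E}_{\omega \sim p}\!\left[\log M_{t_1\to t_2}\right]^{-},
\]
and I would rewrite $-\log M_{t_1\to t_2} = \int_{t_1}^{t_2}\boldsymbol{\delta}^\top \mathrm{d}\boldsymbol{w}_s + \frac12\int_{t_1}^{t_2}\|\boldsymbol{\delta}\|^2\mathrm{d}s$. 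Under $p$, by item (2) of Theorem \ref{the:gir}, $\widetilde{\boldsymbol{w}}_t = \boldsymbol{w}_t + \int_0^t \boldsymbol{\delta}\,\mathrm{d}s$ is a Wiener process, so $\mathrm{d}\boldsymbol{w}_s = \mathrm{d}\widetilde{\boldsymbol{w}}_s - \boldsymbol{\delta}\,\mathrm{d}s$; substituting, the stochastic integral against $\widetilde{\boldsymbol{w}}$ is a $p$-martingale with zero expectation (conditionally on $\mathcal{F}_{t_1}$), and the cross term combines with $\frac12\|\boldsymbol{\delta}\|^2$ to leave $\mathbb{E}_{p}[-\log M_{t_1\to t_2}\mid \mathcal{F}_{t_1}] = \mathbb{E}_p[\frac12\int_{t_1}^{t_2}\|\boldsymbol{\delta}\|^2\mathrm{d}s \mid \mathcal{F}_{t_1}]$. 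Taking expectations over the conditioning gives exactly the claimed formula with the term $\mathbb{E}_{\omega\sim p|_{\mathcal{F}_{t_1}}}[\frac12\int_{t_1}^{t_2}\|\boldsymbol{\delta}(t,\omega)\|^2\mathrm{d}t]$.

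The main obstacle I anticipate is the measurability bookkeeping: one must verify carefully that the increment $M_{t_1\to t_2}$ is indeed $\sigma(\boldsymbol{z}_{t_1:t_2})$-measurable (so that conditioning on the path segment is legitimate and the first factor cleanly produces $\mathsf{KL}(p_{t_1}\|q_{t_1})$), and that the martingale property of $M_t$ under $q$ transfers to the integrability needed to justify the change of measure in the stochastic integral under $p$. This is where the hypotheses $\boldsymbol{\delta}\in\mathcal{V}^n$, $\boldsymbol{\beta}\in\mathcal{V}^m$ and the martingale assumption on $M_t$ are used; the remaining computations are the standard Girsanov manipulations and require no new ideas. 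I would cite \cite[Appendix A.2]{chen2024accelerating} for the precise statements and simply assemble them.
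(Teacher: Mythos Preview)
The paper does not give its own proof of this corollary: it is quoted verbatim as a tool from \cite[Corollary~A.5]{chen2024accelerating} and used without further justification. Your plan—pushing forward $p\ll q$ through the path map for absolute continuity, factoring the exponential martingale $M_T$ at time $t_1$, and evaluating $\mathbb{E}_p[-\log M_{t_1\to t_2}]$ via the substitution $\mathrm{d}\boldsymbol{w}_s = \mathrm{d}\widetilde{\boldsymbol{w}}_s - \boldsymbol{\delta}\,\mathrm{d}s$—is exactly the standard Girsanov computation one would find in that reference, so there is nothing to compare against here beyond noting that your sketch matches the intended derivation.
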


\subsection{Comparison Inequalities}

\begin{theorem}[\bf{Gronwall inequality~\citep[Theorem 1]{dragomir2003some}}]
\label{the:gron}
Let $x$, $\Psi$ and $\chi$ be real continuous functions
defined in $[a, b]$, $\chi (t) \geq 0$ for $t \in [a, b]$. We suppose that on $[a, b]$ we have the inequality  
\[x (t) \leq \Psi (t) + \int_a^t\chi (s) x (s) \mathrm{d}s.\]
Then 
\[x(t) \leq \Psi(t) + \int_a^t \chi(s)\Psi(s)\exp\left[\int_s^t \chi(u)\mathrm{d}u\right]\mathrm{d}s.\]
\end{theorem}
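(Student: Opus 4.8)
The plan is to reduce the integral inequality to a first-order linear differential inequality for an auxiliary function and then solve it via an integrating factor. Concretely, I would introduce $y(t) := \int_a^t \chi(s) x(s)\,\mathrm{d}s$, which is continuously differentiable on $[a,b]$ with $y(a)=0$ and $y'(t)=\chi(t)x(t)$ by the fundamental theorem of calculus (all integrands are continuous). The hypothesis reads $x(t)\le \Psi(t)+y(t)$, so since $\chi(t)\ge 0$ we may multiply through to get $y'(t)=\chi(t)x(t)\le \chi(t)\Psi(t)+\chi(t)y(t)$ for every $t\in[a,b]$.

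Next I would bring in the integrating factor $\mu(t) := \exp\!\left(-\int_a^t \chi(u)\,\mathrm{d}u\right)$, which is strictly positive and $C^1$ with $\mu'(t)=-\chi(t)\mu(t)$. Multiplying the differential inequality by $\mu(t)>0$ and rearranging yields $\mu(t)y'(t)-\chi(t)\mu(t)y(t)\le \mu(t)\chi(t)\Psi(t)$, i.e. $\frac{\mathrm{d}}{\mathrm{d}t}\big(\mu(t)y(t)\big)\le \mu(t)\chi(t)\Psi(t)$. Integrating from $a$ to $t$ and using $\mu(a)y(a)=0$ gives $\mu(t)y(t)\le \int_a^t \mu(s)\chi(s)\Psi(s)\,\mathrm{d}s$. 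Dividing by $\mu(t)$ and noting that $\mu(s)/\mu(t)=\exp\!\left(\int_s^t \chi(u)\,\mathrm{d}u\right)$, we obtain $y(t)\le \int_a^t \chi(s)\Psi(s)\exp\!\left(\int_s^t \chi(u)\,\mathrm{d}u\right)\mathrm{d}s$. Substituting this back into $x(t)\le \Psi(t)+y(t)$ gives exactly the asserted bound.

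There is no serious obstacle here; the only care needed is in the bookkeeping. One must not attempt to run the differential-inequality argument on $x$ itself, since $x$ is assumed only continuous, not differentiable; the auxiliary $y$ is what supplies the required $C^1$ regularity, and it is automatically $C^1$ because its integrand is continuous. The nonnegativity $\chi\ge 0$ is used exactly once and is essential: it is what preserves the inequality when passing from $x(t)\le \Psi(t)+y(t)$ to $\chi(t)x(t)\le \chi(t)\big(\Psi(t)+y(t)\big)$. No sign or monotonicity conditions on $\Psi$ or $x$ are used anywhere. The identity $\mu(s)/\mu(t)=\exp\!\left(\int_s^t\chi(u)\,\mathrm{d}u\right)$ follows from $-\int_a^s\chi+\int_a^t\chi=\int_s^t\chi$, and continuity of $\chi$, $\Psi$, $x$ on the compact interval $[a,b]$ ensures every integral written above is finite, so each step is justified.
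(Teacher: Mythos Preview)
Your proof is correct and is the standard integrating-factor argument for this form of Gronwall's inequality. Note, however, that the paper does not supply its own proof of this statement: it is listed in the appendix as a cited tool (attributed to \cite[Theorem~1]{dragomir2003some}) and invoked without proof, so there is nothing in the paper to compare your argument against.
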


\subsection{Help Lemmas for diffusion models}

\begin{lemma}[\bf{\citep[Lemma 9]{chen2023improved}}]
\label{lmm:ou_con}
For $\widehat{q}_0 \sim  \mathcal{N} (0, I_d)$ and $\cev{p}= p_T$ is the distribution of the solution to the forward process (\eqref{eq:backward}), we have
\[\mathsf{KL}(\cev{p}_0\|\widehat{q}_0)\lesssim de^{-T}.\]
\end{lemma}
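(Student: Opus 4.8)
The plan is to reduce the claim to the classical fact that the Ornstein--Uhlenbeck channel mixes exponentially, and to quantify that mixing in $\mathsf{KL}$ by a coupling (``strong data-processing'') argument. Write $q_0 = \boldsymbol{p}_0$ for the data distribution, normalized so that $\mathsf{cov}_{q_0}(\boldsymbol{x}_0) = \boldsymbol{I}_d$ by Assumption~\ref{ass:2}, write $\gamma_d = \mathcal{N}(\boldsymbol{0}_d,\boldsymbol{I}_d) = \widehat{q}_0$, and let $p_T = \cev{p}_0$ be the law of the forward process \myeqref{eq:forward} at the terminal time $T$. Integrating the linear SDE \myeqref{eq:forward} gives the closed form $\boldsymbol{x}_T \stackrel{d}{=} e^{-T/2}\boldsymbol{x}_0 + \sqrt{1-e^{-T}}\,\boldsymbol{\xi}$ with $\boldsymbol{\xi}\sim\gamma_d$ independent of $\boldsymbol{x}_0$; equivalently $p_T = q_0 K_T$, where $K_T$ is the Gaussian Markov kernel $K_T(\boldsymbol{x},\cdot) = \mathcal{N}\!\big(e^{-T/2}\boldsymbol{x},\,(1-e^{-T})\boldsymbol{I}_d\big)$. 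Since $e^{-T}\boldsymbol{I}_d + (1-e^{-T})\boldsymbol{I}_d = \boldsymbol{I}_d$, the standard Gaussian is a fixed point, $\gamma_d K_T = \gamma_d$, so that $\mathsf{KL}(\cev{p}_0\|\widehat{q}_0) = \mathsf{KL}(p_T\|\gamma_d) = \mathsf{KL}(q_0 K_T \| \gamma_d K_T)$.

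Next I would upgrade the ordinary data-processing inequality to a Wasserstein-controlled one via joint convexity of the relative entropy. For any coupling $\Pi$ of $(q_0,\gamma_d)$ with marginals $X\sim q_0$, $Y\sim\gamma_d$, write $p_T = \mathbb{E}_{(X,Y)\sim\Pi}[K_T(X,\cdot)]$ and $\gamma_d = \mathbb{E}_{(X,Y)\sim\Pi}[K_T(Y,\cdot)]$; joint convexity of $\mathsf{KL}$ then gives
\[
\mathsf{KL}(p_T\|\gamma_d)\ \le\ \mathbb{E}_{(X,Y)\sim\Pi}\big[\mathsf{KL}\big(K_T(X,\cdot)\,\big\|\,K_T(Y,\cdot)\big)\big].
\]
The inner term is the KL between two isotropic Gaussians with common covariance $(1-e^{-T})\boldsymbol{I}_d$ and means $e^{-T/2}X,\,e^{-T/2}Y$, hence equals $\tfrac{e^{-T}}{2(1-e^{-T})}\|X-Y\|^2$. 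Minimizing over couplings yields
\[
\mathsf{KL}(p_T\|\gamma_d)\ \le\ \frac{e^{-T}}{2(1-e^{-T})}\,\mathsf{W}_2^2(q_0,\gamma_d).
\]

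Finally I would bound the Wasserstein term by the second moment. Using the independent coupling and $\mathsf{cov}_{q_0}(\boldsymbol{x}_0)=\boldsymbol{I}_d$,
\[
\mathsf{W}_2^2(q_0,\gamma_d)\ \le\ \mathbb{E}\big\|\boldsymbol{x}_0-\boldsymbol{\xi}\big\|^2\ \le\ 2\,\mathbb{E}\|\boldsymbol{x}_0\|^2 + 2d\ =\ \gO(d),
\]
where the data is taken centered so that $\mathbb{E}\|\boldsymbol{x}_0\|^2 = \mathrm{tr}\,\mathsf{cov}_{q_0}(\boldsymbol{x}_0) = d$ (otherwise one carries an extra $\|\mathbb{E}\boldsymbol{x}_0\|^2$ term, which is $\gO(d)$ by convention in this literature). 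For $T\ge\log 2$ one has $\tfrac{e^{-T}}{2(1-e^{-T})}\le e^{-T}$, so combining the last three displays gives $\mathsf{KL}(\cev{p}_0\|\widehat{q}_0) = \mathsf{KL}(p_T\|\gamma_d) = \gO(d\,e^{-T})$, i.e.\ $\lesssim d e^{-T}$.

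The computational content is light; the points needing care are these. First, the strong data-processing step should be made rigorous: joint convexity of $\mathsf{KL}$ for measure-valued integrals is standard, but if preferred one runs the coupled channels $(X,Y,Z)$ with $Z\sim K_T(X,\cdot)$ versus $Z\sim K_T(Y,\cdot)$ over the same $\Pi$ and applies the ordinary data-processing inequality to $(x,y,z)\mapsto z$, using $\mathsf{KL} = \mathsf{KL}(\Pi\|\Pi) + \mathbb{E}_\Pi[\mathsf{KL}(K_T(X,\cdot)\|K_T(Y,\cdot))]$. Second, the clean bound genuinely needs $T$ bounded away from $0$ (the prefactor $1/(1-e^{-T})$ diverges as $T\to 0$, and $\mathsf{KL}(p_T\|\gamma_d)$ need not be $\gO(de^{-T})$ for tiny $T$ if $q_0$ is singular); this is harmless since the diffusion horizon is chosen with $T=\Omega(\log d)$. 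Third, one may alternatively obtain exponential decay from the entropy-dissipation identity $\tfrac{\mathrm{d}}{\mathrm{d}t}\mathsf{KL}(p_t\|\gamma_d) = -\mathsf{FI}(p_t\|\gamma_d) \le -2\,\mathsf{KL}(p_t\|\gamma_d)$ (the Gaussian log-Sobolev inequality with constant $1$), but starting that recursion still requires a finite bound on $\mathsf{KL}(p_{t_0}\|\gamma_d)$ at some positive $t_0$, which reduces to essentially the same smoothing estimate; the coupling route above is the most self-contained, and I expect it to be the only mild obstacle.
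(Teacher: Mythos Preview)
Your argument is correct. The paper itself does not give a proof of this lemma; it is simply quoted from \cite{chen2023improved} as a standard tool, so there is no in-paper argument to compare against. Your route---pushing both $q_0$ and $\gamma_d$ through the explicit OU kernel, invoking joint convexity of $\mathsf{KL}$ to reduce to a Gaussian--Gaussian relative entropy, and then optimizing over couplings to land on a $\mathsf{W}_2^2$ bound---is one of the standard clean derivations of this fact, and the computations (the closed form $\boldsymbol{x}_T\stackrel{d}{=}e^{-T/2}\boldsymbol{x}_0+\sqrt{1-e^{-T}}\,\boldsymbol{\xi}$, the Gaussian KL $\tfrac{e^{-T}}{2(1-e^{-T})}\|X-Y\|^2$, and the second-moment bound on $\mathsf{W}_2^2$) are all correct.

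The only minor point is the one you already flag: Assumption~\ref{ass:2} fixes only the covariance, so in general $\mathbb{E}\|\boldsymbol{x}_0\|^2 = d + \|\mathbb{E}\boldsymbol{x}_0\|^2$; the stated $\lesssim de^{-T}$ implicitly assumes centering (or absorbs the mean into the hidden constant), exactly as you note. Your caveat about needing $T$ bounded away from $0$ is also accurate and irrelevant in the regime $T=\Theta(\log(d/\varepsilon^2))$ used later.
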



\section{Missing Proof for Log-concave Sampling}
\label{app:proof1}
We denotes $\mathsf{KL}_{n}^j = \mathsf{KL}(\mu_{n,M}^j\| \pi)$ where $\mu_{n,M}^j$ represents the law of $\boldsymbol{x}_{n,M}^j$.
We define the truncation error from prior update as, $\mathcal{E}_{n}^j := \max\limits_{m=1,\ldots,M}\mathbb{E}\left[\left\|\boldsymbol{x}_{n,m}^{j,P}-{\boldsymbol{x}}^{j,P-1}_{n,m}\right\|^2\right],$
and truncation error at right boundary point as, $\Delta_{n}^j :=\mathbb{E}\big[\big\|\boldsymbol{x}_{n,M}^{j}-{\boldsymbol{x}}^{j-1}_{n,M}\big\|^2\big].$

\subsection{One Step Analysis of $\mathsf{KL}_{n}^j$: From $\mathsf{KL}$'s Convergence to Picard Convergence}
In this section, we use the interpolation method to analyse the change of $\mathsf{KL}_{n}^j$ along time direction, which will be bounded by discretization error and score error. 
\begin{lemma}
\label{lmm:dec1}
Assume $\beta h\leq 0.1$.
For any $j = 1,\ldots, J$, $n=1,\ldots,N-1$, we have 
\[\mathsf{KL}^j_{n}  \leq \exp(-1.2\alpha h)\mathsf{KL}^j_{n-1}   + \frac{0.5\beta dh}{M} + 4.4\beta^2 h\mathcal{E}_n^j + 2.1\delta^2 h.\]
Furthermore,  for initialization part, i.e., $j = 0$, $n=0,\ldots,N-1$, we have 
\[\mathsf{KL}_n^0\leq \exp\left(-\alpha (n+1) h\right)\mathsf{KL}(\mu_0\|\pi) + \frac{8\beta^2d h}{\alpha},\]
\end{lemma}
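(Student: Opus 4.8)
\textbf{Proof plan for Lemma~\ref{lmm:dec1}.}

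The plan is to run the standard interpolation argument for one Langevin step and bookkeep the three sources of error: the genuine continuous-time decay, the discretization coming from evaluating $\boldsymbol{s}$ only at grid points instead of along the true trajectory, and the score-matching error $\delta$. The first display is the one-step estimate for a generic update at time slice $n$ and Picard depth $j$: between time $t_{n-1}=(n-1)h$ (endpoint $\boldsymbol{x}_{n-1,M}^j$) and $t_n=nh$, the law $\mu_{n,M}^j$ is obtained from $\mu_{n-1,M}^j$ by a process whose drift, along the interpolation $\boldsymbol{x}_t$, is $-\boldsymbol{s}(\boldsymbol{x}_{\mathrm{grid}(t)}^{j,P-1})$. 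First I would write the time-derivative of $\mathsf{KL}(\mathrm{law}(\boldsymbol{x}_t)\|\pi)$ using the Fokker–Planck calculation à la Vempala–Wibisono: this gives $\frac{\mathrm d}{\mathrm dt}\mathsf{KL}\le -\tfrac34\mathsf{FI}(\mathrm{law}(\boldsymbol{x}_t)\|\pi) + \mathbb{E}\|\nabla f(\boldsymbol{x}_t) - \boldsymbol{s}(\boldsymbol{x}_{\mathrm{grid}(t)}^{j,P-1})\|^2$ (with the usual $\tfrac34/\tfrac14$ split of the Fisher information term). Then I would split the error term via $\|\nabla f(\boldsymbol{x}_t)-\boldsymbol{s}(\cdot)\|^2\le 3\|\nabla f(\boldsymbol{x}_t)-\nabla f(\boldsymbol{x}_{\mathrm{grid}(t)}^{j,P-1})\|^2 + 3\|\nabla f - \boldsymbol{s}\|^2$-type bounds; $\beta$-smoothness converts the first piece into $\beta^2$ times a mean-square displacement, the second is $\le \delta^2$ by the $\delta$-accuracy assumption. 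The mean-square displacement between $\boldsymbol{x}_t$ and the grid point it is compared against splits again into (i) the Brownian/drift movement within one grid cell of width $h/M$, which contributes the $O(\beta dh/M)$ term after integrating and using $\mathbb{E}\|\boldsymbol{B}\text{-increment}\|^2 = d\cdot(h/M)$, and (ii) the Picard truncation gap $\mathbb{E}\|\boldsymbol{x}_{n,m}^{j,P}-\boldsymbol{x}_{n,m}^{j,P-1}\|^2\le\mathcal{E}_n^j$, which contributes the $\beta^2 h\,\mathcal{E}_n^j$ term. Then apply LSI ($\mathsf{FI}\ge 2\alpha\,\mathsf{KL}$) to turn $-\tfrac34\mathsf{FI}$ into $-\tfrac32\alpha\,\mathsf{KL}$, and Grönwall over $[t_{n-1},t_n]$ (length $h$) produces the $\exp(-1.2\alpha h)$ contraction factor once $\beta h\le 0.1$ is used to absorb lower-order terms and fix the constants $0.5,4.4,2.1$.

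For the initialization bound ($j=0$), the plan is essentially the same calculation but now iterated over $n=0,1,\dots$ starting from $\mu_0$, with the subtlety that the initial grid is built by plain Langevin Monte Carlo (Lines 4--7 of Algorithm~\ref{alg:main}), so there is no Picard truncation term $\mathcal{E}$ at this stage — only the one-cell discretization error remains. I would get $\mathsf{KL}_n^0 \le \exp(-\Theta(\alpha h))\,\mathsf{KL}_{n-1}^0 + O(\beta^2 dh^2/1)$-type recursion (the discretization error per step being $O(\beta^2 d h^2)$ for a step of size $h$ rather than $h/M$, since the LMC grid uses step $h$), unroll the geometric sum over $n+1$ steps, and bound $\sum_{k} \exp(-\alpha k h)\cdot O(\beta^2 d h^2) \le O(\beta^2 dh/\alpha)$; the stated constant $8$ comes from tracking the $\beta h\le 0.1$ inequalities. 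This is exactly the classical LMC-under-LSI bound (Vempala–Wibisono), just restated in the notation of the algorithm.

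The main obstacle I anticipate is not the structure but the careful constant-chasing: getting $\exp(-1.2\alpha h)$ rather than a weaker $\exp(-\alpha h)$ or $\exp(-1.5\alpha h)$ requires being precise about how much of the $\tfrac32\alpha$ from LSI is "spent" absorbing the displacement-error cross terms, and which Young's-inequality split is used — the choice of splitting constants must be tuned so that, under $\beta h\le 0.1$, the residual is exactly $\le 0.2\alpha h$ worth of slack, leaving $1.2\alpha h$. A secondary subtlety is making sure the conditioning is handled correctly: $\mathcal{E}_n^j$ and $\Delta_{n-1}^j$ are defined with respect to specific couplings of the Brownian motion (the noise is fixed in Lines 1--3), so the interpolation must be set up to reuse exactly that Brownian path, which is what lets the within-cell increment contribute only $d(h/M)$ in expectation and lets the comparison to $\boldsymbol{x}_{n,m}^{j,P-1}$ be pathwise. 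I would handle this by conditioning on $\mathcal{F}_{t_{n-1}}$ and on the fixed noise, doing the interpolation argument pathwise, then taking expectations at the end.
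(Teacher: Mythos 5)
Your plan is correct and follows essentially the same route as the paper: the Vempala--Wibisono interpolation/Fokker--Planck differential inequality with the $\tfrac34/\tfrac14$ split, decomposition of the drift error into score-matching error, within-cell displacement (yielding the $\beta dh/M$ term via the Brownian increment and a bound on $\mathbb{E}\|\nabla f(\boldsymbol{x}_t)\|^2$ by $\mathsf{FI}+2\beta d$), and the Picard truncation gap $\mathcal{E}_n^j$, followed by LSI and integration over the slice; the $j=0$ case is likewise dispatched as classical LMC under LSI. The only cosmetic difference is your three-term Young split versus the paper's two-term split, which only affects constants.
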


\begin{remark}
In the first equation, the term $ \exp(-1.2\alpha h)\mathsf{KL}^j_{n-1}$ characterizes the convergence of the continuous diffusion. Additionally, the second and third terms quantify the discretization error. Adopting $P=0$ and $M=1$ reverts to the classical scenario, where the discretization error approximates $\mathcal{O}(hd)$, as discussed in Section 4.1 of \citet{chewi2023log}. Moreover, the second term is influenced by the density of the grids, while the third term is dependent on the convergence of the Picard iterations. The fourth term accounts for the score error.
\end{remark}

\begin{proof}
We will use the interpolation method and follow the proof of Theorem 13 in \citet{anari2024fast}. For $j\in [J]$, $n = 0,\ldots,N-1$ and $m=0,\ldots,M-1$, it is easy to see that
\[\boldsymbol{x}^j_{n,m+1} = \boldsymbol{x}^j_{n,m} - \frac{h}{M}\boldsymbol{s}({\boldsymbol{x}}^{j,P-1}_{n,m}) + \sqrt{2}(B_{nh+(m+1)/h}-B_{nh+mh/M}).\]
Let $\boldsymbol{x}_t$ denote the linear interpolation between $\boldsymbol{x}^j_{n,m+1}$ and $\boldsymbol{x}^j_{n,m}$, i.e., for $t \in \left[nh + \frac{mh}{M}, nh + \frac{(m+1)hh}{M}\right]$, let
\[
\boldsymbol{x}_t = \boldsymbol{x}^j_{n,m} - \left(t - nh - \frac{mh}{M}\right) \boldsymbol{s}({\boldsymbol{x}}^{j,P-1}_{n,m}) + \sqrt{2} (B_t - B_{nh+m h/M}).
\]
Note that $\boldsymbol{s}({\boldsymbol{x}}^{j,P}_{n,m})$ is a constant vector field. Let $\mu_t$ be the law of $\boldsymbol{x}_t$. The same argument as in \citet[Lemma 3/Equation 32]{vempala2019rapid} yields the differential inequality
\begin{align}
\label{eq:app3}
\partial_t \mathsf{KL}(\mu_t \| \pi) 
&~=~ -\mathsf{FI}(\mu_t\| \pi) + \mathbb{E}\Big\langle \nabla f(\boldsymbol{x}_t) -\boldsymbol{s}({\boldsymbol{x}}^{j,P-1}_{n,m}), \nabla \log \frac{\mu_t(\boldsymbol{x}_t)}{\pi(\boldsymbol{x}_t)} \Big\rangle\notag\\
&~\leq~ -\frac{3}{4}\mathsf{FI}(\mu_t\| \pi) + \mathbb{E}\left[\left\|\nabla f(\boldsymbol{x}_t)- \boldsymbol{s}({\boldsymbol{x}}^{j,P-1}_{n,m})\right\|^2\right],
\end{align}
where we used $(a, b) \leq \frac{1}{4} \|a\|^2 + \|b\|^2$ and $\mathbb{E}\left[\left\| \nabla \log \frac{\mu_t(\boldsymbol{x}_t)}{\pi(\boldsymbol{x}_t)}\right\|^2\right] = \mathsf{FI}(\mu_t \| \pi)$. 
For the first term, by $\alpha$ strongly-log-concavity of $\pi$, we have $\mathsf{KL}(\mu_t \| \pi)\leq \frac{1}{2\alpha}\mathsf{FI}(\mu_t \| \pi).$
For the second term, we have 
\begin{align}
\label{eq:app4}
 &\mathbb{E}\left[\left\|\nabla f(\boldsymbol{x}_t)- \boldsymbol{s}({\boldsymbol{x}}^{j,P-1}_{n,m})\right\|^2\right]\notag\\
 ~\leq ~&2\mathbb{E}\left[\left\|\nabla f(\boldsymbol{x}_t)- \nabla f({\boldsymbol{x}}^{j,P-1}_{n,m})\right\|^2\right] + 2\mathbb{E}\left[\left\|\nabla f({\boldsymbol{x}}^{j,P-1}_{n,m})- \boldsymbol{s}({\boldsymbol{x}}^{j,P-1}_{n,m})\right\|^2\right]\notag\\
  ~\leq ~&2\beta^2\mathbb{E}\left[\left\|\boldsymbol{x}_t-{\boldsymbol{x}}^{j,P-1}_{n,m}\right\|^2\right] + 2\delta^2.
\end{align}
Moreover, 
\begin{equation}
\label{eq:app1}
\mathbb{E}\left[\left\|\boldsymbol{x}_t-{\boldsymbol{x}}^{j,P-1}_{n,m}\right\|^2\right] \leq 2\mathbb{E}\left[\left\|\boldsymbol{x}_t-\boldsymbol{x}_{n,m}^j\right\|^2\right] + 2\mathbb{E}\left[\left\|\boldsymbol{x}_{n,m}^{j,P}-{\boldsymbol{x}}^{j,P-1}_{n,m}\right\|^2\right]
\end{equation}
For the first term, which will be influenced by density of grids, we have 
\begin{align}
\label{eq:app2}
    &\mathbb{E}\left[\left\|\boldsymbol{x}_t-{\boldsymbol{x}}^j_{n,m}\right\|^2\right]\nonumber\\
    ~\leq~&\left(t - nh - \frac{mh}{M}\right)^2 \mathbb{E}\left[\left\|\boldsymbol{s}({\boldsymbol{x}}^{j,P-1}_{n,m}) \right\|^2\right]+ d\left(t - nh - \frac{mh}{M}\right)\nonumber\\
    ~\leq~&\frac{h^2}{M^2} \mathbb{E}\left[\left\|\boldsymbol{s}({\boldsymbol{x}}^{j,P-1}_{n,m}) \right\|^2\right]+ d\left(t - nh - \frac{mh}{M}\right)\nonumber\\
    ~\leq~&\frac{2h^2}{M^2} \mathbb{E}\left[\left\|\nabla f({\boldsymbol{x}}^{j,P-1}_{n,m}) \right\|^2\right]+\frac{2\delta^2h^2}{M^2} + \frac{dh}{M}\nonumber\\
    ~\leq~&\frac{4\beta^2 h^2}{M^2} \mathbb{E}\left[\left\|\boldsymbol{x}_t - {\boldsymbol{x}}^{j,P-1}_{n,m} \right\|^2\right]+ \frac{4h^2}{M^2}\mathbb{E}\left[\left\|\nabla f(\boldsymbol{x}_t)\right\|^2\right]+\frac{2\delta^2h^2}{M^2} + \frac{dh}{M}.
\end{align}
Taking $\beta h \leq \frac{1}{10}$, and combining  \eqref{eq:app1} and  \eqref{eq:app2}, we have 
\begin{equation}
\label{eq:app5}
\mathbb{E}\left[\left\|\boldsymbol{x}_t - {\boldsymbol{x}}^{j,P-1}_{n,m} \right\|^2\right] \leq \frac{4.4h^2}{M^2}\mathbb{E}\left[\left\|\nabla f(\boldsymbol{x}_t)\right\|^2\right]+\frac{2.2\delta^2h^2}{M^2} + \frac{1.1dh}{M} + 2.2\mathbb{E}\left[\left\|\boldsymbol{x}_{n,m}^j-{\boldsymbol{x}}^{j,P-1}_{n,m}\right\|^2\right].
\end{equation}
For the first term, we recall the following lemma.
\begin{lemma}[\bf{\citep[Lemma 16]{chewi2024analysis}}]
\[\mathbb{E}\left[\left\|\nabla f(\boldsymbol{x}_t)\right\|^2\right]\leq \mathsf{FI}(\mu_t \| \pi )+2\beta d.\]
\end{lemma}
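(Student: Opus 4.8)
The plan is to reduce the claim to a single integration by parts, exploiting that $\nabla\log\pi=-\nabla f$ since $\pi\propto e^{-f}$. Write $\mu:=\mu_t$ for brevity. First I would record the regularity we rely on: for $t$ strictly inside a sub-interval the interpolating process $\boldsymbol{x}_t$ is the sum of a fixed random vector and an independent Gaussian $\sqrt{2}(B_t-B_{nh+mh/M})$ of positive variance, so $\mu$ has a smooth, strictly positive Lebesgue density with at-least-Gaussian tail decay; consequently $\mathbb{E}_\mu[\|\nabla\log\mu\|^2]<\infty$, and since $\beta$-smoothness forces $\nabla f$ to grow at most linearly while $\mu$ has finite second moment, also $\mathbb{E}_\mu[\|\nabla f\|^2]<\infty$. (If one wanted the statement for a general $\mu$ one would instead argue by mollification, but here the built-in Gaussian smoothing suffices.)

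Next I would decompose the relative Fisher information. Using $\nabla\log(\mu/\pi)=\nabla\log\mu+\nabla f$ and expanding the square under $\mu$,
\[
\mathsf{FI}(\mu\|\pi)=\mathbb{E}_\mu\big[\|\nabla\log\mu\|^2\big]+2\,\mathbb{E}_\mu\big[\langle\nabla\log\mu,\nabla f\rangle\big]+\mathbb{E}_\mu\big[\|\nabla f\|^2\big].
\]
Then I would evaluate the cross term by integration by parts: since $\mu\,\nabla\log\mu=\nabla\mu$, we have $\mathbb{E}_\mu[\langle\nabla\log\mu,\nabla f\rangle]=\int\langle\nabla\mu,\nabla f\rangle\,\mathrm{d}x=-\int\mu\,\Delta f\,\mathrm{d}x=-\mathbb{E}_\mu[\Delta f]$, where the boundary term at infinity vanishes because of the Gaussian decay of $\mu$ against the linear growth of $\nabla f$. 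Substituting and rearranging yields
\[
\mathbb{E}_\mu\big[\|\nabla f\|^2\big]=\mathsf{FI}(\mu\|\pi)-\mathbb{E}_\mu\big[\|\nabla\log\mu\|^2\big]+2\,\mathbb{E}_\mu[\Delta f].
\]

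Finally I would drop the nonnegative term $\mathbb{E}_\mu[\|\nabla\log\mu\|^2]\ge 0$ and use $\beta$-smoothness: $\nabla^2 f\preceq\beta\boldsymbol{I}$ gives $\Delta f=\mathrm{tr}(\nabla^2 f)\le\beta d$ pointwise, hence $\mathbb{E}_\mu[\Delta f]\le\beta d$, and therefore $\mathbb{E}_{\mu_t}[\|\nabla f(\boldsymbol{x}_t)\|^2]\le\mathsf{FI}(\mu_t\|\pi)+2\beta d$, as claimed. The only step that is not pure formula-pushing is the justification of the integration by parts (vanishing surface term and finiteness of the expectations involved); I expect this to be the main — and rather mild — obstacle, and it is settled by the nondegenerate Brownian component of $\boldsymbol{x}_t$ together with the linear growth of $\nabla f$ implied by $\beta$-smoothness.
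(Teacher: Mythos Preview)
Your argument is correct and is exactly the standard proof of this inequality: expand $\mathsf{FI}(\mu\|\pi)$ via $\nabla\log(\mu/\pi)=\nabla\log\mu+\nabla f$, integrate by parts on the cross term to get $-2\,\mathbb{E}_\mu[\Delta f]$, drop the nonnegative self-Fisher information $\mathbb{E}_\mu[\|\nabla\log\mu\|^2]$, and bound $\Delta f\le\beta d$ by smoothness. Note, however, that the paper does not actually supply a proof of this lemma: it simply quotes the statement as ``Lemma 16 in \cite{chewi2024analysis}'' and uses it as a black box, so there is no in-paper proof to compare against. Your write-up is the argument one finds in that reference, and your extra care about the regularity of $\mu_t$ (positive-variance Gaussian convolution ensuring smooth density and justifying the vanishing boundary term) is appropriate for making the integration by parts rigorous in this specific interpolated-process setting.
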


Combining  \eqref{eq:app3}, \eqref{eq:app4}, \eqref{eq:app5} and $\beta h \leq \frac{1}{10}$, we have for $j\in [J]$, $n = 0,\ldots,n-1$, $m=0,\ldots,M-1$, and $t \in \left[nh + \frac{mh}{M}, nh + \frac{(m+1)hh}{M}\right]$, 
\begin{align*}
&\partial_t \mathsf{KL}(\mu_t \| \pi) \\
~\leq~& -\frac{3}{4}\mathsf{FI}(\mu_t\| \pi) + \mathbb{E}\left[\left\|\nabla f(\boldsymbol{x}_t)- \boldsymbol{s}({\boldsymbol{x}}^{j,P-1}_{n,m})\right\|^2\right]    \\
~\leq~& -\frac{3}{4}\mathsf{FI}(\mu_t\| \pi) + 2\beta^2\mathbb{E}\left[\left\|\boldsymbol{x}_t-{\boldsymbol{x}}^{j,P-1}_{n,m}\right\|^2\right] + 2\delta^2   \\
~\leq~& -\frac{3}{4}\mathsf{FI}(\mu_t\| \pi) + \frac{8.8 \beta^2h^2}{M^2}\mathbb{E}\left[\left\|\nabla f(\boldsymbol{x}_t)\right\|^2\right]+\frac{4.4\beta^2\delta^2h^2}{M^2} + \frac{2.2\beta^2dh}{M} + 4.4\beta^2\mathbb{E}\left[\left\|\boldsymbol{x}_{n,m}^{j,P}-{\boldsymbol{x}}^{j,P-1}_{n,m}\right\|^2\right] + 2\delta^2   \\
~\leq~& -\frac{3}{4}\mathsf{FI}(\mu_t\| \pi) + \frac{0.1}{M^2}\mathbb{E}\left[\left\|\nabla V(X_t)\right\|^2\right]+\frac{0.1\delta^2}{M^2} + \frac{2.2\beta^2dh}{M} + 4.4\beta^2\mathcal{E}_n^j + 2\delta^2   \\
~\leq~& -\frac{3}{4}\mathsf{FI}(\mu_t\| \pi) + \frac{0.1}{M^2}\left(\mathsf{FI}(\mu_t\| \pi)+2\beta d\right)+\frac{0.1\delta^2}{M^2} + \frac{2.2\beta^2dh}{M} + 4.4\beta^2\mathcal{E}_n^j + 2\delta^2   \\
~\leq~& -1.2\alpha\mathsf{KL}(\mu_t\| \pi) +  \frac{0.5\beta d}{M} + 4.4\beta^2\mathcal{E}_n^j + 2.1\delta^2  
\end{align*}
Since this inequality holds independently of $m$, we integral from $t = nh$ to $t = (n+1)h$, 
\[\mathsf{KL}^j_{n}  \leq \exp(-1.2\alpha h)\mathsf{KL}^j_{n-1}   + \frac{0.5\beta dh}{M} + 4.4\beta^2 h\mathcal{E}_n^j + 2.1\delta^2 h. \]

As for $j =0$, actually, Line 4-7 performs a Langevin Monte Carlo with step size $h$, by Theorem 4.2.6 in \citet{chewi2023log}, we have 
\[\mathsf{KL}_n^0\leq \exp\left(-\alpha n h\right)\mathsf{KL}_0^0 + \frac{8dh\beta^2}{\alpha}, \]
with $0<h \leq \frac{1}{4L}$.

\end{proof}
\begin{corollary}
\label{cor:kl}
Assume $\beta h\leq 0.1$. We have 
\[ \mathsf{KL}^J_{N-1}~\leq~ e^{-1.2\alpha (N-1) h}\left(\mathsf{KL}(\mu_0\|\pi) + 4.4\beta^2 h \Delta_0^J\right)   + \sum\limits_{n=1}^{N-1} e^{-1.2\alpha (n-1) h}  4.4\beta^2 h\mathcal{E}_{N-n}^J  + \frac{0.5\beta d}{\alpha M} +\frac{2.1\delta^2}{\alpha}.\]
Furthermore, if $\mathcal{E}_{N-n}^J$ has a uniform bound as $\mathcal{E}_{N-n}^J \leq \mathcal{E} + 500\delta^2h^2$, we have
\[ \mathsf{KL}^J_{N-1}~\leq~ e^{-1.2\alpha (N-1) h}\left(\mathsf{KL}(\mu_0\|\pi) + 4.4\beta^2 h \Delta_0^J\right)   + 5\beta \kappa \mathcal{E} + \frac{0.5\beta d}{\alpha M} +\frac{2.5\delta^2}{\alpha}.\]
\end{corollary}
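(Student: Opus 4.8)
The plan is to iterate the one-step recursion from Lemma~\ref{lmm:dec1} along the time direction, with the iteration index $j$ held fixed at $j=J$. Writing $a := \exp(-1.2\alpha h)\le 1$, Lemma~\ref{lmm:dec1} reads $\mathsf{KL}^J_n \le a\,\mathsf{KL}^J_{n-1} + \tfrac{0.5\beta dh}{M} + 4.4\beta^2 h\,\mathcal{E}^J_n + 2.1\delta^2 h$ for $n=1,\ldots,N-1$. First I would unroll this from $n=N-1$ down to $n=1$, collecting the geometric sum of the inhomogeneous terms:
\[
\mathsf{KL}^J_{N-1}\le a^{N-1}\mathsf{KL}^J_{0} + \sum_{n=1}^{N-1} a^{\,n-1}\Big(4.4\beta^2 h\,\mathcal{E}^J_{N-n} + \tfrac{0.5\beta dh}{M} + 2.1\delta^2 h\Big).
\]
Then I would bound the constant-in-$n$ tail $\sum_{n\ge 1} a^{\,n-1}\le \tfrac{1}{1-a} = \tfrac{1}{1-e^{-1.2\alpha h}}\le \tfrac{1}{\alpha h}$ (using $1-e^{-x}\ge x - x^2/2 \ge x/1.2$ for the small $x=1.2\alpha h\le 0.12$, or more simply $1-e^{-x}\ge 0.9x$ on that range; $\beta h\le 0.1$ and $\alpha\le\beta$ guarantee smallness). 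This converts $\tfrac{0.5\beta dh}{M}\cdot\tfrac{1}{\alpha h}$ into $\tfrac{0.5\beta d}{\alpha M}$ and $2.1\delta^2 h\cdot\tfrac{1}{\alpha h}$ into $\tfrac{2.1\delta^2}{\alpha}$, matching the last two terms of the claimed bound.

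Next I would handle the boundary term $\mathsf{KL}^J_0$. The point is that $\boldsymbol{x}^J_{0,M}$ is produced from the left endpoint $\boldsymbol{x}^J_{-1,M}=\boldsymbol{x}_0$ (by the initialization convention in Algorithm~\ref{alg:main}, $\boldsymbol{x}^j_{-1,M}=\boldsymbol{x}_0$ for all $j$), so the $n=0$ slice is updated over the interval $[0,h]$ starting from $\boldsymbol{x}_0\sim\mu_0$. Applying Lemma~\ref{lmm:dec1} one more time at $n=0$ with $\mathsf{KL}^J_{-1}=\mathsf{KL}(\mu_0\|\pi)$ gives $\mathsf{KL}^J_0\le a\,\mathsf{KL}(\mu_0\|\pi) + \tfrac{0.5\beta dh}{M} + 4.4\beta^2 h\,\mathcal{E}^J_0 + 2.1\delta^2 h$; absorbing the extra $\mathcal{E}^J_0$ via the notation $\mathcal{E}^J_0 = $ (truncation error at slice $0$) and noting $\Delta^J_0 = \mathbb{E}\|\boldsymbol{x}^J_{0,M}-\boldsymbol{x}^{J-1}_{0,M}\|^2$ is how the statement packages the slice-$0$ discrepancy. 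Actually, re-reading the target, the $4.4\beta^2 h\,\Delta_0^J$ term sits \emph{inside} the $e^{-1.2\alpha(N-1)h}$ factor alongside $\mathsf{KL}(\mu_0\|\pi)$, which suggests that the intended route is to not expand the $n=0$ step via the KL-recursion but instead to directly bound $\mathsf{KL}^J_0$ in terms of $\mathsf{KL}(\mu_0\|\pi)$ plus a truncation contribution $4.4\beta^2 h\Delta^J_0$ from the single interpolation step on $[0,h]$ (dropping the $\tfrac{0.5\beta dh}{M}$ and $\delta^2 h$ pieces into the already-present tail, or noting they are dominated). So I would prove the refined single-step bound $\mathsf{KL}^J_0\le a\,\mathsf{KL}(\mu_0\|\pi)+4.4\beta^2 h\,\Delta^J_0 + (\text{terms absorbed})$ and fold it in, yielding the first displayed inequality of the corollary after renaming $4.4\beta^2 h\le 5\beta\kappa\cdot(\text{something})$ — no, more precisely just carrying $4.4\beta^2 h$ as is.

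For the second displayed inequality, I would substitute the uniform bound $\mathcal{E}^J_{N-n}\le \mathcal{E} + 500\delta^2 h^2$ into $\sum_{n=1}^{N-1} a^{\,n-1} 4.4\beta^2 h\,\mathcal{E}^J_{N-n}$, split it as $4.4\beta^2 h(\mathcal{E}+500\delta^2 h^2)\sum_{n\ge1}a^{n-1}\le 4.4\beta^2 h(\mathcal{E}+500\delta^2 h^2)\cdot\tfrac{1}{\alpha h} = 4.4\kappa\beta\,\mathcal{E} + 2200\kappa\beta\delta^2 h^2$, then bound $4.4\kappa\beta\le 5\beta\kappa$ and check that $2200\kappa\beta\delta^2 h^2 = 2200\kappa\,\delta^2\,(\beta h)\,h\le 2200\kappa\delta^2\cdot 0.1\cdot h$ is dominated once multiplied through — here I'd want $\beta h\le 0.1$ and also compare against $\tfrac{2.1\delta^2}{\alpha}$: since $\beta h^2 = (\beta h)h\le 0.1 h$ and $h\le 1/(10\beta)\le \kappa/(10\alpha)$... the cleanest is to note $2200\kappa\beta\delta^2h^2 \le 2200\,\delta^2\,\kappa\,(\beta h)^2/\beta \le 22\delta^2\kappa/\beta = 22\delta^2/\alpha\cdot(1/\kappa)\cdot\kappa$, hmm, this needs care. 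The main obstacle is precisely this bookkeeping: making the $500\delta^2h^2$ correction term collapse into the $\tfrac{0.4\delta^2}{\alpha}$ of headroom between $\tfrac{2.1\delta^2}{\alpha}$ and $\tfrac{2.5\delta^2}{\alpha}$ in the final bound, which forces using $\beta h\le 0.1$ together with $\alpha\le\beta$ sharply (so that $\beta^2 h^2/\alpha \le (\beta h)\cdot(\beta h)/\alpha \cdot \alpha/\alpha$... $= \beta^2h^2/\alpha \le 0.01\beta/\alpha\cdot$ — one needs $h\le$ something, and indeed $4.4\beta^2h\cdot 500\delta^2h^2/(\alpha h) = 2200\beta^2 h^2\delta^2/\alpha \le 2200\cdot 0.01\cdot\delta^2\beta/\alpha = 22\kappa\delta^2$, which is \emph{not} $\le 0.4\delta^2/\alpha$ in general). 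This discrepancy tells me the uniform bound on $\mathcal{E}$ must be stated differently or $\mathcal{E}$ already carries a $1/\beta^2$ scaling; I would therefore revisit the definition of $\mathcal{E}$ and likely find $\mathcal{E}_{N-n}^J \le \mathcal{E}+500\delta^2h^2$ is meant with $\mathcal{E}$ of order (discretization/Picard error)$/\beta^2$, so that $5\beta\kappa\mathcal{E}$ comes out right and $2200\beta^2h^2\delta^2/\alpha$ with $\beta h\le 0.1$ gives $\le 22\delta^2\kappa\beta h^2\cdot\beta/(\alpha\beta h^2)$—I'll simply track constants carefully and invoke $\beta h = 0.1$, $h\le 1/(10\beta)$ to close the $\tfrac{2.1\delta^2}{\alpha}\to\tfrac{2.5\delta^2}{\alpha}$ gap, which is the one genuinely fiddly point; everything else is a routine geometric-series estimate.
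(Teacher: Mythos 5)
Your proposal follows the paper's proof essentially step for step: unroll the one-step recursion of Lemma~\ref{lmm:dec1} along $n$ with $j=J$ fixed, bound the resulting geometric series by $\frac{1}{1-\exp(-1.2\alpha h)}\leq \frac{1.1}{\alpha h}$ (the paper justifies this via the elementary inequality $1.1-1.1e^{-1.2x}-x>0$ for $0<x<0.4$), absorb the $n=0$ slice into the $e^{-1.2\alpha(N-1)h}\bigl(\mathsf{KL}(\mu_0\|\pi)+4.4\beta^2h\,\Delta_0^J\bigr)$ term, and substitute the uniform bound $\mathcal{E}_{N-n}^J\leq\mathcal{E}+500\delta^2h^2$ for the second display. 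So the route is the same.

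The one place you got stuck — closing the gap between $\frac{2.1\delta^2}{\alpha}$ and $\frac{2.5\delta^2}{\alpha}$ — is not a defect of your argument but of the stated corollary. As you computed, the $500\delta^2h^2$ correction contributes $4.4\beta^2h\cdot 500\delta^2h^2\cdot\frac{1.1}{\alpha h}=2420\,(\beta h)^2\,\frac{\delta^2}{\alpha}\approx 24\,\frac{\delta^2}{\alpha}$ under $\beta h\leq 0.1$ (note it is $\frac{\delta^2}{\alpha}$, not $\kappa\delta^2$, in your penultimate display), which cannot be absorbed into a $0.4\,\frac{\delta^2}{\alpha}$ headroom. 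Indeed the paper's own proof of this corollary ends with $5\kappa\beta\mathcal{E}+\frac{0.6\beta d}{\alpha M}+\frac{28\delta^2}{\alpha}$, silently contradicting the constants $\frac{0.5\beta d}{\alpha M}$ and $\frac{2.5\delta^2}{\alpha}$ displayed in the corollary statement; the downstream proof of Theorem~\ref{the:main1} actually invokes the $\frac{28\delta^2}{\alpha}$ version (and correspondingly requires $\delta^2\leq\frac{\alpha\varepsilon^2}{29}$), so the final result is unaffected. Your instinct to distrust the $2.5$ was correct, and you should simply state the corollary with the larger constants rather than search for a missing cancellation.
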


\begin{proof}
By Lemma \ref{lmm:dec1}, we decompose $\mathsf{KL}^J_{N-1}$ as 
\begin{align*}
 \mathsf{KL}^J_{N-1}~\leq~&  e^{-1.2\alpha (N-1) h}\mathsf{KL}^J_{0}   + \sum\limits_{n=1}^{N-1} e^{-1.2\alpha (n-1) h}\left( \frac{0.5\beta dh}{M} + 4.4\beta^2 h\mathcal{E}_{N-n}^J +2.1\delta^2 h\right)\\
 ~\leq~&  e^{-1.2\alpha (N-1) h}\left(\mathsf{KL}(\mu_0\|\pi) + 4.4\beta^2 h \Delta_0^J\right)  
 \\
 &+ \frac{4.4\beta^2 h (\mathcal{E} + 500\delta^2h^2) +  \frac{0.5\beta dh}{M} +2.1\delta^2 h}{1- \exp(-1.2\alpha  h)}\\
  ~\leq~&  e^{-1.2\alpha (N-1) h}\left(\mathsf{KL}(\mu_0\|\pi) + 4.4\beta^2 h \Delta_0^J\right)   + \frac{1.1}{\alpha h}4.4\beta^2 h\mathcal{E}  + \frac{1.1}{\alpha h} \frac{0.5\beta d h }{M} \\
  &+ \frac{1.1}{\alpha h}  25\delta^2 h\\
   ~=~&  e^{-1.2\alpha (N-1) h}\left(\mathsf{KL}(\mu_0\|\pi) + 4.4\beta^2 h \Delta_0^J\right)   + 5\kappa \beta \mathcal{E}  + \frac{0.6\beta d}{\alpha M} + \frac{28 \delta^2}{\alpha},
\end{align*}
where the third inequality holds since $0<x<0.4$, we have $1.1-1.1\exp(-1.2x)-x>0$. It is clear that $\alpha h <\beta h <0.1$. 
\end{proof}

\subsection{One Step Analysis of $\Delta_n^j$}

In this section, we analyze the one step change of $\Delta_n^j$ first.
\begin{lemma}
\label{lmm:dec3}
Assume $\beta h = \frac{1}{10}$ and $P\geq \frac{2\log \kappa}{3}+4$. For any $j=2,\ldots,J$, $n = 1,\ldots,N-1$, we have 
\[\Delta_n^j \leq \left(1-\frac{0.005}{\kappa}\right)\Delta_{n-1}^j + 4.4\left(\frac{1}{M}+10\kappa\right) {h^2}\delta^2
+4.4\left(\frac{1}{M}+10\kappa\right) {\beta^2h^2}\mathcal{E}_{n}^{j-1}.\]
Furthermore, for $j=1$, $n=1,\ldots,N-1$, we have
\[\Delta_n^1\leq \Delta_{n-1}^1 + \left(\frac{1}{M}+10\kappa\right) \left( 5\delta^2 h^2 + 6\beta^2 dh^3+0.4{\beta^2 h^2}\frac{\mathsf{KL}_{n-1}^0}{\alpha}\right).\]
\end{lemma}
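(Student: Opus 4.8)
The plan is to process one time slice at a time, coupling $\boldsymbol{x}_{n,M}^{j}$ and $\boldsymbol{x}_{n,M}^{j-1}$ along the common Brownian path and following the difference of the two inner trajectories across the $M$ grid points. First I would substitute the inner Picard update (Lines~13 and~18) at $m=M$ into $\boldsymbol{x}_{n,M}^{j}-\boldsymbol{x}_{n,M}^{j-1}=\boldsymbol{x}_{n,M}^{j,P}-\boldsymbol{x}_{n,M}^{j-1,P}$; the increment $\sqrt{2}\,(B_{(n+1)h}-B_{nh})$ is common to both and cancels, leaving
\[
\boldsymbol{x}_{n,M}^{j,P}-\boldsymbol{x}_{n,M}^{j-1,P}
=\bigl(\boldsymbol{x}_{n-1,M}^{j}-\boldsymbol{x}_{n-1,M}^{j-1}\bigr)
-\frac{h}{M}\sum_{m=0}^{M-1}\bigl(\boldsymbol{s}(\boldsymbol{x}_{n,m}^{j,P-1})-\boldsymbol{s}(\boldsymbol{x}_{n,m}^{j-1,P-1})\bigr),
\]
where the first bracket has squared expected norm $\Delta_{n-1}^{j}$ and at $m=0$ the iterates $\boldsymbol{x}_{n,0}^{j,P-1}$, $\boldsymbol{x}_{n,0}^{j-1,P-1}$ equal $\boldsymbol{x}_{n-1,M}^{j}$, $\boldsymbol{x}_{n-1,M}^{j-1}$. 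Writing $\boldsymbol{s}=\nabla f+\boldsymbol{e}$ with $\|\boldsymbol{e}\|\le\delta$, the $\boldsymbol{e}$-part sits in the single sum $\frac{h}{M}\sum_{m}(\boldsymbol{e}(\cdot)-\boldsymbol{e}(\cdot))$, of norm at most $2h\delta$ --- a one-time $\gO(h\delta)$, not a per-step error --- while the $\nabla f$-part is what must be turned into a contraction.

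For the contraction, for each of the two updates I would introduce the Picard fixed point over the slice, i.e.\ the forward-Euler curve $\boldsymbol{z}_{m+1}=\boldsymbol{z}_{m}-(h/M)\,\boldsymbol{s}(\boldsymbol{z}_{m})+\Delta B_{m}$ with $\boldsymbol{z}_{0}=\boldsymbol{x}_{n-1,M}^{j}$ (and $\boldsymbol{z}'$ with $\boldsymbol{z}'_{0}=\boldsymbol{x}_{n-1,M}^{j-1}$), and use $\|\boldsymbol{x}_{n,M}^{j,P}-\boldsymbol{x}_{n,M}^{j-1,P}\|\le\|\boldsymbol{x}_{n,M}^{j,P}-\boldsymbol{z}_{M}\|+\|\boldsymbol{z}_{M}-\boldsymbol{z}'_{M}\|+\|\boldsymbol{z}'_{M}-\boldsymbol{x}_{n,M}^{j-1,P}\|$. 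The middle term is the source of $1-0.005/\kappa$: iterating $\boldsymbol{z}_{m+1}-\boldsymbol{z}'_{m+1}=(\boldsymbol{z}_{m}-\boldsymbol{z}'_{m})-\tfrac{h}{M}\bigl(\boldsymbol{s}(\boldsymbol{z}_{m})-\boldsymbol{s}(\boldsymbol{z}'_{m})\bigr)$ over the $M$ grid steps, using the contraction of the gradient-descent step (as highlighted in the technical overview) for the $\nabla f$-part, gives $\|\boldsymbol{z}_{M}-\boldsymbol{z}'_{M}\|\le e^{-\alpha h}\|\boldsymbol{z}_{0}-\boldsymbol{z}'_{0}\|+\gO(h\delta)$ --- the per-step $\boldsymbol{e}$-errors, geometrically damped and with $\alpha h\le0.1$, accumulate to only $\gO(h\delta)$ --- so after one Young inequality of parameter of order $\alpha h$, $\mathbb{E}\|\boldsymbol{z}_{M}-\boldsymbol{z}'_{M}\|^{2}\le(1-\Theta(1/\kappa))\Delta_{n-1}^{j}+\gO\bigl((\tfrac1M+10\kappa)h^{2}\delta^{2}\bigr)$ (here $\beta h=0.1$ turns $\Theta(\alpha h)$ into $\Theta(1/\kappa)$). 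The two outer terms measure how far $P$ inner Picard steps land from that fixed point; since the Picard map contracts geometrically (ratio $\beta h=0.1$ in the natural per-grid bound) up to an additive $\gO(h\delta)$, and the $j$-th update is initialized at the $(j-1)$-st update's output $\boldsymbol{x}_{n,\cdot}^{j-1,P}$, each outer term has $\mathbb{E}\|\cdot\|^{2}\lesssim 0.1^{2P}\bigl(\Delta_{n-1}^{j}+\mathcal{E}_{n}^{j-1}\bigr)+\gO(h^{2}\delta^{2})$ (the $\Delta_{n-1}^{j}$ part coming from the discrepancy between the two slice-fixed-points, the $\mathcal{E}_{n}^{j-1}$ part from the consecutive-iterate gap of the previous update). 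Squaring, taking expectations, and collecting --- the $\boldsymbol{e}$-errors and the Picard floors forming the $h^{2}\delta^{2}$ term (scaled by the $\gO(\kappa)$ Young factor to $\gO((\tfrac1M+10\kappa)h^{2}\delta^{2})$), the surviving Picard import forming the $\beta^{2}h^{2}\mathcal{E}_{n}^{j-1}$ term --- yields the first inequality. The hypothesis $P\ge\tfrac23\log\kappa+4$ is exactly what makes every $\kappa^{\gO(1)}\,0.1^{2P}$ factor that appears a small constant, which is what prevents the $\gO(\kappa)$ Young inflation and the $\gO(1/(\alpha h))=\gO(\kappa)$ accumulation across a slice from letting the imported error leak an $\Omega(1/\kappa)\,\Delta_{n-1}^{j}$ term back in beyond the $1-0.005/\kappa$ budget.

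For $j=1$ the skeleton is unchanged, but the ``previous iterate'' $\boldsymbol{x}_{n,\cdot}^{0}$ is not a Picard output --- it is the single-big-step curve $\boldsymbol{x}_{n,m}^{0}=\boldsymbol{x}_{n-1,M}^{0}-\tfrac{hm}{M}\boldsymbol{s}(\boldsymbol{x}_{n-1,M}^{0})+\sqrt{2}\,(B_{nh+mh/M}-B_{nh})$ of Lines~4--7 --- so there is no $\mathcal{E}_{n}^{0}$; instead the gap between the genuine forward-Euler trajectory over the slice and this coarse curve is a within-slice discretization error of order $\beta^{2}h^{3}\bigl(d+\mathbb{E}\|\nabla f(\boldsymbol{x}_{n-1,M}^{0})\|^{2}\bigr)$, and $\mathbb{E}\|\nabla f(\boldsymbol{x}_{n-1,M}^{0})\|^{2}$ I would bound by $\mathsf{FI}(\mu_{n-1,M}^{0}\|\pi)+2\beta d$ (Lemma~16 of \cite{chewi2024analysis}) and re-express through $\mathsf{KL}_{n-1}^{0}$ via the log-Sobolev inequality, producing the $6\beta^{2}dh^{3}$ and $0.4\beta^{2}h^{2}\mathsf{KL}_{n-1}^{0}/\alpha$ terms; the coefficient on $\Delta_{n-1}^{1}$ stays at $1$ because the crude initialization supplies no contraction buffer. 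The main obstacle throughout is the clash of powers of $\kappa$: the per-slice gradient-descent contraction is only $1-\Theta(1/\kappa)$ --- the grid gap $h/M=\Theta(\varepsilon^{2}/d)$ is minuscule, so each of the $M$ composed maps is barely contractive --- while the naive accumulation of any per-step error across a slice picks up a factor $\Theta(\kappa)$ and a second $\Theta(\kappa)$ appears in the Young step that protects the contraction factor; the argument closes only because running $P=\Theta(\log\kappa)$ inner Picard iterations shrinks the errors imported from the same slice by $\kappa^{-\Theta(1)}$, keeping every surviving coefficient safely below the $\Theta(1/\kappa)$ budget. Tracking those competing powers of $\kappa$ to leading order is the real work, not any single estimate.
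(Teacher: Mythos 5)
Your proposal is sound and rests on the same two mechanisms as the paper's proof --- the contraction of the gradient-descent map $\phi(\boldsymbol{x})=\boldsymbol{x}-\frac{h}{M}\nabla f(\boldsymbol{x})$ for the $1-\Theta(1/\kappa)$ factor, the geometric convergence of the inner Picard iteration (ratio $\approx\beta h$ per sweep, hence $P=\Theta(\log\kappa)$ to shrink the same-slice import below the $\Theta(1/\kappa)$ contraction budget), and a Young inequality at scale $\Theta(1/\kappa)$ that inflates the score and Picard residuals by $\Theta(\kappa)$ --- but you organize the bookkeeping differently. The paper works grid point by grid point: at each of the $M$ micro-steps it applies the contraction plus Young with $\eta=\alpha h/M$, which produces a recursion whose right-hand side contains the \emph{self-referential} quantity $\mathcal{E}_n^{j}$; it then separately derives $\mathcal{E}_n^{j}\leq 2\cdot 0.03^{P-1}\Delta_{n-1}^{j}+6\cdot 0.03^{P}\mathcal{E}_n^{j-1}+6.6\delta^2h^2$ from the inner-Picard contraction and substitutes this back, which is exactly where $P\geq\frac{2}{3}\log\kappa+4$ is consumed. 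You instead introduce the exact sequential forward-Euler trajectories over the slice (the fixed points of the inner Picard map) and split by triangle inequality into a fixed-point-to-fixed-point contraction plus two Picard-convergence-to-fixed-point terms; this avoids the explicit self-reference and makes the role of $P$ more transparent, at the cost of having to be careful that the single slice-level Young parameter is chosen of order $\alpha h$ (not smaller) so that the $\Theta(\kappa)$ inflation of the $\mathcal{E}_n^{j-1}$-sized outer term stays at $\gO(\kappa\beta^2h^2)\mathcal{E}_n^{j-1}$, as the stated constants require. Both routes give the same leading-order coefficients, and your identification of where each term of the lemma comes from (the one-time $\gO(h\delta)$ score error, the $\gO(\kappa)$ Young inflation, the $\beta^2h^2$ damping on $\mathcal{E}_n^{j-1}$, and the coefficient $1$ on $\Delta_{n-1}^1$ for $j=1$) matches the paper's accounting.

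One small repair is needed in your $j=1$ step: you propose to bound $\mathbb{E}\|\nabla f(\boldsymbol{x}^0_{n-1,M})\|^2$ by $\mathsf{FI}+2\beta d$ and then ``re-express through $\mathsf{KL}_{n-1}^0$ via the log-Sobolev inequality,'' but the LSI runs in the wrong direction for this ($\mathsf{KL}\leq\frac{1}{2\alpha}\mathsf{FI}$, not the reverse). The correct tool, and the one the paper invokes, is the direct bound $\mathbb{E}_\rho\|\nabla f\|^2\leq 2\beta d+\frac{4\beta^2}{\alpha}\mathsf{KL}(\rho\|\pi)$ (Lemma 10 of \cite{vempala2019rapid}); with that substitution your $j=1$ argument goes through and produces the $6\beta^2dh^3$ and $0.4\beta^2h^2\mathsf{KL}_{n-1}^0/\alpha$ terms as claimed. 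Also note that the dominant contribution to the $\beta^2 dh^3$ term is the Brownian variance $dh$ of the coarse initialization curve within the slice (multiplied by the $\beta^2h^2$ prefactor), not only the drift discretization you describe.
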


\begin{proof}
\textbf{Decomposition when $j\geq 2$.} 
In fact, for $j\in [J]$, $n = 0,\ldots,N-1$, $m=0,\ldots,M-1$, and $p =1,\ldots,P$,
it is easy to see that
\[\boldsymbol{x}^{j,p}_{n,m+1} = \boldsymbol{x}^{j,p}_{n,m} - \frac{h}{M}\boldsymbol{s}({\boldsymbol{x}}^{j,p-1}_{n,m}) + \sqrt{2}(B_{nh+(m+1)/h}-B_{nh+mh/M}).\]

For any $j = 2,\ldots, J$, $n=1,\ldots,N-1$, by the contraction of $\phi(\boldsymbol{x}) = \boldsymbol{x} - \frac{h}{M}\nabla f(\boldsymbol{x})$~ (Lemma 2.2 in \citet{altschuler2022resolving}), for any $m=1,\ldots,M$, we have, 
\allowdisplaybreaks
\begin{align*}
&\mathbb{E}\left[\left\|\boldsymbol{x}_{n,m}^{j,P}-{\boldsymbol{x}}^{j-1,P}_{n,m}\right\|^2\right]\\
~=~&\mathbb{E}\left[\left\|\boldsymbol{x}_{n,m-1}^{j,P} -\frac{h}{M}\boldsymbol{s}({\boldsymbol{x}}^{j,P-1}_{n,m-1}) -\left({\boldsymbol{x}}^{j-1,P}_{n,m-1}-\frac{h}{M}\boldsymbol{s}({\boldsymbol{x}}^{j-1,P-1}_{n,m-1})\right) \right\|^2\right]\\
~\leq~&(1+\eta)\mathbb{E}\left[\left\|\boldsymbol{x}_{n,m-1}^{j,P} -\frac{h}{M}\nabla f({\boldsymbol{x}}^{j,P}_{n,m-1}) -\left({\boldsymbol{x}}^{j-1,P}_{n,m-1}-\frac{h}{M}\nabla f({\boldsymbol{x}}^{j-1,P}_{n,m-1})\right) \right\|^2\right]\\
& +\left(2+\frac{2}{\eta}\right) \mathbb{E}\left[\left\|\frac{h}{M}\nabla f({\boldsymbol{x}}^{j,P}_{n,m-1}) - \frac{h}{M}\nabla f({\boldsymbol{x}}^{j,P-1}_{n,m-1}) 
+ \frac{h}{M}\nabla f({\boldsymbol{x}}^{j-1,P}_{n,m-1}) - \frac{h}{M}\nabla f({\boldsymbol{x}}^{j-1,P-1}_{n,m-1}) \right\|^2\right]\\
& +\left(2+\frac{2}{\eta}\right) \mathbb{E}\left[\left\|
\frac{h}{M}\nabla f({\boldsymbol{x}}^{j,P-1}_{n,m-1}) - \frac{h}{M}\boldsymbol{s}({\boldsymbol{x}}^{j,P-1}_{n,m-1}) + \frac{h}{M}\nabla f({\boldsymbol{x}}^{j-1,P-1}_{n,m-1}) - \frac{h}{M}\boldsymbol{s}({\boldsymbol{x}}^{j-1,P-1}_{n,m-1})\right\|^2\right]\\
~\leq~&(1+\eta)\left(1- \frac{\alpha h}{M}\right)^2\mathbb{E}\left[\left\|\boldsymbol{x}_{n,m-1}^{j,P} -{\boldsymbol{x}}^{j-1,P}_{n,m-1} \right\|^2\right] + \left(4+\frac{4}{\eta}\right)\frac{h^2}{M^2}\delta^2\\
&+\left(4+\frac{4}{\eta}\right) \frac{\beta^2 h^2}{M^2}\mathbb{E}\left[\left\|{\boldsymbol{x}}^{j,P}_{n,m-1} - {\boldsymbol{x}}^{j,P-1}_{n,m-1} \right\|^2\right] 
+  \left(4+\frac{4}{\eta}\right) \frac{\beta^2 h^2}{M^2}\mathbb{E}\left[\left\|{\boldsymbol{x}}^{j-1,P}_{n,m-1} - {\boldsymbol{x}}^{j-1,P-1}_{n,m-1}\right\|^2\right].
\end{align*}
By setting $\eta = \frac{\alpha h}{M} = \frac{1}{10\kappa M }$, we have
\begin{align}
\label{eq:app8}
&\mathbb{E}\left[\left\|\boldsymbol{x}_{n,M}^{j,P}-{\boldsymbol{x}}^{j-1,P}_{n,M}\right\|^2\right]\notag\\
~\leq~&\left(1- \frac{\alpha h}{M}\right)^M\mathbb{E}\left[\left\|\boldsymbol{x}_{n,0}^{j,P}-{\boldsymbol{x}}^{j-1,P}_{n,0}\right\|^2\right] + \left(4+\frac{4}{\eta}\right) \frac{h^2}{M}\delta^2\notag\\
&+\sum\limits_{m=1}^M\left(4+\frac{4}{\eta}\right) \frac{\beta^2 h^2}{M^2}\mathbb{E}\left[\left\|{\boldsymbol{x}}^{j,P}_{n,m-1} - {\boldsymbol{x}}^{j,P-1}_{n,m-1} \right\|^2\right] \notag\\
&+ \sum\limits_{m=1}^M \left(4+\frac{4}{\eta}\right) \frac{\beta^2 h^2}{M^2}\mathbb{E}\left[\left\|{\boldsymbol{x}}^{j-1,P}_{n,m-1} - {\boldsymbol{x}}^{j-1,P-1}_{n,m-1}\right\|^2\right] \notag\\
~\leq~&\exp(-\alpha h)\Delta_{n-1}^j + \left(4+\frac{4}{\eta}\right) \frac{h^2}{M}\delta^2+\left(4+\frac{4}{\eta}\right) \frac{\beta^2 h^2}{M}\mathcal{E}_{n}^j
+ \left(4+\frac{4}{\eta}\right) \frac{\beta^2 h^2}{M}\mathcal{E}_{n}^{j-1}\notag\\
~\leq~&(1-0.1\alpha h)\Delta_{n-1}^j + \left(4+\frac{4}{\eta}\right) \frac{h^2}{M}\delta^2+\left(4+\frac{4}{\eta}\right) \frac{\beta^2 h^2}{M}\mathcal{E}_{n}^j
+ \left(4+\frac{4}{\eta}\right) \frac{\beta^2 h^2}{M}\mathcal{E}_{n}^{j-1}\notag\\
~=~& \left(1-\frac{0.01}{\kappa}\right)\Delta_{n-1}^j + 4\left(\frac{1}{M}+10\kappa\right) h^2\delta^2+4\left(\frac{1}{M}+10\kappa\right){\beta^2 h^2}\mathcal{E}_{n}^j\notag\\
&+ 4\left(\frac{1}{M}+10\kappa\right) {\beta^2 h^2}\mathcal{E}_{n}^{j-1}.
\end{align}

In the following, we further decompose $\mathcal{E}_{n}^j$.
For any $n = 0,\ldots, N-1$, $j\in [J]$, $p = 2,\ldots,P$, and $m = 1,\ldots,M$, we can decompose $\mathbb{E}\left[\left\|{\boldsymbol{x}}^{j,p}_{n,m} - {\boldsymbol{x}}^{j,p-1}_{n,m} 
\right\|^2\right] $ as follows. By definition (Line 12 or 18 in Algorithm \ref{alg:main}), we have
\begin{align}
\label{eq:app6}
&\mathbb{E}\left[\left\|{\boldsymbol{x}}^{j,p}_{n,m} - {\boldsymbol{x}}^{j,p-1}_{n,m} 
\right\|^2\right] \notag \\
~=~&\frac{h^2}{M^2}\mathbb{E}\left[\left\|\sum\limits_{m'=0}^{m-1}\boldsymbol{s}({\boldsymbol{x}}^{j,p-1}_{n,m'}) - \sum\limits_{m'=0}^{m-1}\boldsymbol{s}({\boldsymbol{x}}^{j,p-2}_{n,m'}) \right\|^2\right]\notag \\
~\leq ~&\frac{h^2m}{M^2}\sum\limits_{m'=0}^{m-1}\mathbb{E}\left[\left\|\boldsymbol{s}({\boldsymbol{x}}^{j,p-1}_{n,m'}) - \boldsymbol{s}({\boldsymbol{x}}^{j,p-2}_{n,m'}) \right\|^2\right]\notag \\
~\leq ~&\frac{h^2m}{M^2}\sum\limits_{m'=0}^{m-1}3\left[\mathbb{E}\left[\left\|\nabla f({\boldsymbol{x}}^{j,p-1}_{n,m'}) - \nabla f({\boldsymbol{x}}^{j,p-2}_{n,m'}) \right\|^2\right] + \mathbb{E}\left[\left\|\nabla f({\boldsymbol{x}}^{j,p-1}_{n,m'}) - \boldsymbol{s}({\boldsymbol{x}}^{j,p-1}_{n,m'}) \right\|^2\right]\right.\notag \\
&+ \left.\mathbb{E}\left[\left\|\nabla f({\boldsymbol{x}}^{j,p-2}_{n,m'}) - \boldsymbol{s}({\boldsymbol{x}}^{j,p-2}_{n,m'}) \right\|^2\right]\right]\notag \\
~\leq~& 3\beta^2h^2\max\limits_{m'=1,\ldots,M}\mathbb{E}\left[\left\|{\boldsymbol{x}}^{j,p-1}_{n,m'} - {\boldsymbol{x}}^{j,p-2}_{n,m'}  \right\|^2\right] + 6\delta^2h^2.
\end{align}

Furthermore, 
\begin{align}
\label{eq:app7}
&\mathbb{E}\left[\left\|{\boldsymbol{x}}^{j,1}_{n,m-1} - {\boldsymbol{x}}^{j,0}_{n,m-1} 
\right\|^2\right] \notag \\
 ~=~&\mathbb{E}\left[\left\|
 \boldsymbol{x}_{n-1,M}^j - \frac{h}{M}\sum\limits_{m'=0}^{m-1}\boldsymbol{s}({\boldsymbol{x}}^{j,0}_{n,m'}) - \left(\boldsymbol{x}_{n-1,M}^{j-1} - \frac{h}{M}\sum\limits_{m'=0}^{m-1}\boldsymbol{s}({\boldsymbol{x}}^{j-1,P-1}_{n,m'})\right)\right\|^2\right]\notag  \\
~\leq~& 2\mathbb{E}\left[\left\|
 \boldsymbol{x}_{n-1,M}^j -\boldsymbol{x}_{n-1,M}^{j-1}\right\|^2\right] + 2\frac{h^2m}{M^2}\sum\limits_{m'=0}^{m-1}\mathbb{E}\left[\left\|
\boldsymbol{s}({ \boldsymbol{x}}^{j-1,P}_{n,m'}) - \boldsymbol{s}({ \boldsymbol{x}}^{j-1,P-1}_{n,m'})\right\|^2\right] \notag  \\
~\leq~& 2\Delta_{n-1}^j + 6\beta^2h^2 \mathcal{E}_{n}^{j-1} + 12 \delta^2 h^2.
\end{align}
Combining \eqref{eq:app6} and \eqref{eq:app7}, we have 
\begin{align}
\label{eq:app12}
\mathcal{E}_{n}^j = &\mathbb{E}\left[\left\|{ \boldsymbol{x}}^{j,P}_{n,m-1} - { \boldsymbol{x}}^{j,P-1}_{n,m-1} 
\right\|^2\right] 
~\leq~2\cdot  0.03^{P-1}\Delta_{n-1}^j + 6 \cdot  0.03^{P}\mathcal{E}_{n}^{j-1} + 6.6\delta^2h^2.
\end{align}

Substitute it into \eqref{eq:app8},  we have  for any $j = 2,\ldots, J$, $n=1,\ldots,N-1$,
\begin{align}
\label{eq:p}
\Delta_n^j ~\leq~& \left(1-\frac{0.01}{\kappa} + 8 \left(\frac{1}{M}+10\kappa\right)0.03^P\right)\Delta_{n-1}^j + 4.4\left(\frac{1}{M}+10\kappa\right) {h^2}\delta^2\notag\\
&+4.4\left(\frac{1}{M}+10\kappa\right) {\beta^2h^2}\mathcal{E}_{n}^{j-1}\\
~\leq~& \left(1-\frac{0.005}{\kappa}\right)\Delta_{n-1}^j + 4.4\left(\frac{1}{M}+10\kappa\right) {h^2}\delta^2+4.4\left(\frac{1}{M}+10\kappa\right) {\beta^2h^2}\mathcal{E}_{n}^{j-1},
\end{align}
where the second inequality holds since $P\geq \frac{2\log \kappa}{3}+4$ implies $8 \left(\frac{1}{M}+10\kappa\right)0.03^P \leq \frac{0.005}{\kappa}$.

\textbf{Decomposition when $j=1$.}
When $j=1$, similarly, we have for $p=1,\ldots,P$,
\[\boldsymbol{x}^{1,p}_{n,m+1} = \boldsymbol{x}^{1,p}_{n,m} - \frac{h}{M}\boldsymbol{s}({\boldsymbol{x}}^{1,p-1}_{n,m}) + \sqrt{2}(B_{nh+(m+1)/h}-B_{nh+mh/M}),\]
and 
\[\boldsymbol{x}^{0}_{n,m+1} = \boldsymbol{x}^{0}_{n,m} - \frac{h}{M}\boldsymbol{s}({\boldsymbol{x}}^{0}_{n-1,M}) + \sqrt{2}(B_{nh+(m+1)/h}-B_{nh+mh/M}).\]
Thus by the contraction of $\phi(\boldsymbol{x}) = \boldsymbol{x} - \frac{h}{M}\nabla f(\boldsymbol{x})$~ (Lemma 2.2 in \citet{altschuler2022resolving}), we have
\begin{align*}
&\mathbb{E}\left[\left\|\boldsymbol{x}^{1,P}_{n,m+1}  - \boldsymbol{x}^{0}_{n,m+1}
\right\|^2\right]     \\
~=~&\mathbb{E}\left[\left\|\boldsymbol{x}_{n,m}^{1,P} - \frac{h}{M}\boldsymbol{s}({\boldsymbol{x}}^{1,P-1}_{n,m'})  -  \left(\boldsymbol{x}_{n,m}^{0} - \frac{h}{M}\boldsymbol{s}(\boldsymbol{x}^0_{n-1,M} )\right)\right\|^2\right]  \\
~\leq~&(1+\eta)\mathbb{E}\left[\left\|\boldsymbol{x}_{n,m}^{1,P} -\frac{h}{M}\nabla f({\boldsymbol{x}}^{1,P}_{n,m}) -\left({\boldsymbol{x}}^{0}_{n,m}-\frac{h}{M}\nabla f({\boldsymbol{x}}^{0}_{n,m})\right) \right\|^2\right]\\
& +\left(2+\frac{2}{\eta}\right) \mathbb{E}\left[\left\|\frac{h}{M}\nabla f({\boldsymbol{x}}^{1,P}_{n,m}) - \frac{h}{M}\nabla f({\boldsymbol{x}}^{1,P-1}_{n,m}) 
+ \frac{h}{M}\nabla f({\boldsymbol{x}}^{0}_{n,m}) - \frac{h}{M}\nabla f({\boldsymbol{x}}^{0}_{n-1,M}) \right\|^2\right]\\
& +\left(2+\frac{2}{\eta}\right) \mathbb{E}\left[\left\|
\frac{h}{M}\nabla f({\boldsymbol{x}}^{1,P-1}_{n,m}) - \frac{h}{M}\boldsymbol{s}({\boldsymbol{x}}^{1,P-1}_{n,m}) + \frac{h}{M}\nabla f({\boldsymbol{x}}^{0}_{n-1,M}) - \frac{h}{M}\boldsymbol{s}({\boldsymbol{x}}^{0}_{n-1,M})\right\|^2\right]\\
~\leq~&(1+\eta)\left(1- \frac{\alpha h}{M}\right)^2\mathbb{E}\left[\left\|\boldsymbol{x}_{n,m}^{1,P} -{\boldsymbol{x}}^{0}_{n,m} \right\|^2\right] + \left(4+\frac{4}{\eta}\right) \frac{\delta^2 h^2}{M^2}\\
& +\left(4+\frac{4}{\eta}\right) \frac{\beta^2 h^2}{M^2}
\mathbb{E}\left[\left\|{\boldsymbol{x}}^{1,P}_{n,m} - {\boldsymbol{x}}^{1,P-1}_{n,m} \right\|^2\right]
+\left(4+\frac{4}{\eta}\right) \frac{\beta^2 h^2}{M^2}\mathbb{E}\left[\left\|{\boldsymbol{x}}^{0}_{n,m} - {\boldsymbol{x}}^{0}_{n-1,M} \right\|^2\right].
\end{align*}
For third term $\mathbb{E}\left[\left\|{\boldsymbol{x}}^{1,P}_{n,m} - {\boldsymbol{x}}^{1,P-1}_{n,m} \right\|^2\right]$, we have 
\begin{align*}
&\mathbb{E}\left[\left\|{\boldsymbol{x}}^{1,P}_{n,m} - {\boldsymbol{x}}^{1,P-1}_{n,m} \right\|^2\right]   \\
~=~&\mathbb{E}\left[\left\|\frac{h}{M}\sum\limits_{m'=0}^m \boldsymbol{s}({\boldsymbol{x}}^{1,P-1}_{n,m'})  - \boldsymbol{s}({\boldsymbol{x}}^{1,P-2}_{n,m'})\right\|^2\right]  \\
~\leq~&\frac{mh^2}{M^2}\sum\limits_{m'=0}^m\mathbb{E}\left[\left\|\boldsymbol{s}({\boldsymbol{x}}^{1,P-1}_{n,m'})  - \boldsymbol{s}({\boldsymbol{x}}^{1,P-2}_{n,m'}) \right\|^2\right]\\
~\leq~&3\beta^2h^2\max\limits_{m'=0,\ldots,M}\mathbb{E}\left[\left\|{\boldsymbol{x}}^{1,P-1}_{n,m'}-{\boldsymbol{x}}^{1,P-2}_{n,m'} \right\|^2\right] + 6\delta^2h^2.
\end{align*}
Thus
\begin{equation}
\label{eq:app9}
\mathbb{E}\left[\left\|{\boldsymbol{x}}^{1,P}_{n,m} - {\boldsymbol{x}}^{1,P-1}_{n,m} \right\|^2\right]\leq 0.03^{P-1}\max\limits_{m'=0,\ldots,M}\mathbb{E}\left[\left\|{\boldsymbol{x}}^{1,1}_{n,m'}-{\boldsymbol{x}}^{1,0}_{n,m'} \right\|^2\right] + 6.2\delta^2h^2.
\end{equation}

For $\mathbb{E}\left[\left\|{\boldsymbol{x}}^{1,1}_{n,m}-{\boldsymbol{x}}^{1,0}_{n,m} \right\|^2\right]$, by definition, we have 
\begin{align}
\label{eq:app10}
&\mathbb{E}\left[\left\|{\boldsymbol{x}}^{1,1}_{n,m}-{\boldsymbol{x}}^{1,0}_{n,m} \right\|^2\right]  \notag \\
~=~&\mathbb{E}\left[\left\|\boldsymbol{x}_{n-1,M}^{1} - \frac{h}{M}\sum\limits_{m'=0}^{m-1}\boldsymbol{s}({\boldsymbol{x}}^{0}_{n,m'})  -  \left(\boldsymbol{x}_{n-1,M}^{0} - \frac{h}{M}\sum\limits_{m'=0}^{m-1}\boldsymbol{s}(\boldsymbol{x}^0_{n-1,M} )\right)\right\|^2\right]\notag \\
~\leq~&2\mathbb{E}\left[\left\|\boldsymbol{x}_{n-1,M}^{1} - \boldsymbol{x}_{n-1,M}^{0}\right\|^2\right] +2\mathbb{E}\left[\left\|\frac{h}{M}\sum\limits_{m'=0}^{m-1}\boldsymbol{s}({\boldsymbol{x}}^{0}_{n,m'})  - \frac{h}{M}\sum\limits_{m'=0}^{m-1}\boldsymbol{s}(\boldsymbol{x}^0_{n-1,M} )\right\|^2\right]\notag  \\
~\leq~&2\mathbb{E}\left[\left\|\boldsymbol{x}_{n-1,M}^{1} - \boldsymbol{x}_{n-1,M}^{0}\right\|^2\right] +2\frac{h^2m}{M^2}\sum\limits_{m'=0}^{m-1}\mathbb{E}\left[\left\|\boldsymbol{s}({\boldsymbol{x}}^{0}_{n,m'})  - \boldsymbol{s}(\boldsymbol{x}^0_{n-1,M} )\right\|^2\right] \notag \\
~\leq~&2\mathbb{E}\left[\left\|\boldsymbol{x}_{n-1,M}^{1} - \boldsymbol{x}_{n-1,M}^{0}\right\|^2\right] +6\beta^2 h^2\max\limits_{m'\in [M]}\mathbb{E}\left[\left\|{\boldsymbol{x}}^{0}_{n,m'}  -\boldsymbol{x}^0_{n-1,M} \right\|^2\right] +12 \delta^2h^2.
\end{align}

For $\mathbb{E}\left[\left\|{\boldsymbol{x}}^{0}_{n,m} - {\boldsymbol{x}}^{0}_{n-1,M} \right\|^2\right]$, by definition of  ${\boldsymbol{x}}^{0}_{n,m}$ (Line 7 in Algorithm \ref{alg:main}), we have 
\begin{align}
\label{eq:app11}
&\mathbb{E}\left[\left\|{\boldsymbol{x}}^{0}_{n,m} - {\boldsymbol{x}}^{0}_{n-1,M} \right\|^2\right]\notag  \\
~=~&\frac{h^2m^2}{M^2}\mathbb{E}\left[\left\|\boldsymbol{s}({\boldsymbol{x}}^{0}_{n-1,M})\right\|^2\right]  + \frac{dhm}{M} \notag\\
~\leq ~&2\delta^2h^2 + 2h^2 \mathbb{E}\left[\left\|\nabla f({\boldsymbol{x}}^{0}_{n-1,M})\right\|^2\right]   + dh \notag\\
~\leq ~&2\delta^2h^2 + 2h^2 \left(2\beta d +\frac{4\beta^2}{\alpha}\mathsf{KL}(\mu^{0}_{n-1,M}\| \pi )\right)  + dh\notag \\
~=~ & 4h^2\beta d+ 2h^2\delta^2 +\frac{8\beta^2 h^2 }{\alpha}\mathsf{KL}_{n-1}^0+ dh,
\end{align}
where the last inequality is implied from the following lemma, \citep[Lemma 10]{vempala2019rapid}
\[\mathbb{E}\left[\left\|\nabla f({\boldsymbol{x}}^{0}_{n-1,M})\right\|^2\right] \leq 2\beta d +\frac{4\beta^2}{\alpha}\mathsf{KL}(\mu^{0}_{n-1,M}\| \pi ).\]

Combining \eqref{eq:app9}, \eqref{eq:app10}, and \eqref{eq:app11}, and $P\geq 4$, we have 
\begin{align}
\label{eq:app13}
&\mathbb{E}\left[\left\|{\boldsymbol{x}}^{1,P}_{n,m} - {\boldsymbol{x}}^{1,P-1}_{n,m} \right\|^2\right]\notag\\
~\leq~& 0.03^{P-1}\max\limits_{m'=0,\ldots,M}\mathbb{E}\left[\left\|{\boldsymbol{x}}^{1,1}_{n,m'}-{\boldsymbol{x}}^{1,0}_{n,m'} \right\|^2\right] + 6.2h^2\delta^2\notag\\
~\leq~& 0.03^{P-1}\left[2\Delta_{n-1}^1 + 6\beta^2 h^2\left(4h^2\beta d+ 2h^2\delta^2 +\frac{8\beta^2 h^2 }{\alpha}\mathsf{KL}_{n-1}^0+ dh\right) + 12\delta^2h^2\right] + 6.2h^2\delta^2\notag\\
~\leq~& 2\cdot 0.03^{P-1}\Delta_{n-1}^1 + 6.3h^2\delta^2 + 0.01dh + 0.01\frac{\beta^2h^2}{\alpha}\mathsf{KL}_{n-1}^0.
\end{align}

By setting $\eta = \frac{\alpha h}{M} = \frac{1}{10\kappa M }$, we have 
\begin{align*}
&\mathbb{E}\left[\left\|{\boldsymbol{x}}^{1,P}_{n,M} - {\boldsymbol{x}}^{0}_{n,M} 
\right\|^2\right]     \\
~\leq~&\left(1- \frac{\alpha h}{M}\right)^M\mathbb{E}\left[\left\|\boldsymbol{x}_{n,0}^{1,P} -{\boldsymbol{x}}^{0}_{n,0} \right\|^2\right] + \left(4+\frac{4}{\eta}\right) \frac{\delta^2 h^2}{M}\\
&+\left(4+\frac{4}{\eta}\right) \frac{\beta^2 h^2}{M}\left(2\cdot 0.03^{P-1}\Delta_{n-1}^1 + 6.3h^2\delta^2 + 0.01dh + 0.01\frac{\beta^2h^2}{\alpha}\mathsf{KL}_{n-1}^0\right)\\
&+\left(4+\frac{4}{\eta}\right) \frac{\beta^2 h^2}{M}\left(4h^2\beta d+ 2h^2\delta^2 +\frac{8\beta^2 h^2 }{\alpha}\mathsf{KL}_{n-1}^0+ dh\right)\\
~\leq~&\left(1-\frac{0.01}{\kappa}+ 4\left(\frac{1}{M}+10\kappa\right)0.03^P\right)\Delta_{n-1}^1 + \left(\frac{1}{M}+10\kappa\right) \left( 5\delta^2 h^2 + 6\beta^2 dh^3+0.4{\beta^2 h^2}\frac{\mathsf{KL}_{n-1}^0}{\alpha}\right)\\
~\leq~&\Delta_{n-1}^1 + \left(\frac{1}{M}+10\kappa\right) \left( 5\delta^2 h^2 + 6\beta^2 dh^3+0.4{\beta^2 h^2}\frac{\mathsf{KL}_{n-1}^0}{\alpha}\right)\\
\end{align*}
where the last inequality holds since $P\geq \frac{2\log \kappa}{3}+4$ implies $8 \left(\frac{1}{M}+10\kappa\right)0.03^P \leq \frac{0.005}{\kappa}$.
\end{proof}

When $n=0$, the update is identical to the Picard iteration shown in \citet{anari2024fast},  thus we have the following lemma.
\begin{lemma}[Lemma 18 in \citet{anari2024fast}]
\label{lmm:dec4}
For $j = 1,\ldots,J$, we have 
\[\Delta_0^j \leq 0.03^P\Delta_0^{j-1} +6.2\delta^2h^2,\]
with $\Delta_{0}^0 := \max\limits_{m=0,\ldots,M}\mathbb{E}\left[\left\|\boldsymbol{x}_{0,m}^{0}-\boldsymbol{x}_0\right\|^2\right]\leq \frac{4\beta^2h^2}{\alpha}\mathsf{KL}(\mu_0\|\pi) + 1.4dh + 2\delta^2h^2$.
\end{lemma}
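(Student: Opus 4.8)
The plan rests on the observation that the $n=0$ slice is degenerate in a favorable way: by the initialization in Lines~6--7 we have $\boldsymbol{x}^j_{-1,M}=\boldsymbol{x}_0$, so the left endpoint $\boldsymbol{x}^j_{0,0}=\boldsymbol{x}^j_{-1,M}=\boldsymbol{x}_0$ of this slice is frozen for \emph{every} outer index $j$. Consequently the iterates $\{\boldsymbol{x}^{j,p}_{0,\cdot}\}$ form a single Picard sequence with a fixed initial condition---the re-indexing $\boldsymbol{x}^{j,0}_{0,\cdot}=\boldsymbol{x}^{j-1}_{0,\cdot}=\boldsymbol{x}^{j-1,P}_{0,\cdot}$ of Line~11 merely splices consecutive updates, and no information other than the previous iterate is passed between updates. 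This is precisely the classical Picard setting of \cite[Lemma~18]{anari2024fast}, whose mechanism I reproduce.

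First I would prove a one-sweep contraction. Subtracting the inner update (Line~12 or~18) at consecutive inner indices, the Brownian increments cancel and
\[\boldsymbol{x}^{j,p}_{0,m}-\boldsymbol{x}^{j,p-1}_{0,m}=-\frac{h}{M}\sum_{m'=0}^{m-1}\bigl(\boldsymbol{s}(\boldsymbol{x}^{j,p-1}_{0,m'})-\boldsymbol{s}(\boldsymbol{x}^{j,p-2}_{0,m'})\bigr).\]
Since $\boldsymbol{s}$ is not Lipschitz but $\nabla f$ is, I would split
\[\|\boldsymbol{s}(u)-\boldsymbol{s}(v)\|^2\le 3\|\nabla f(u)-\nabla f(v)\|^2+3\|\boldsymbol{s}(u)-\nabla f(u)\|^2+3\|\boldsymbol{s}(v)-\nabla f(v)\|^2\le 3\beta^2\|u-v\|^2+6\delta^2,\]
using $\beta$-smoothness and $\delta$-accuracy. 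Combined with Cauchy--Schwarz over the at most $M$ summands (prefactor $h^2m^2/M^2\le h^2$), this gives $\max_m\mathbb{E}\|\boldsymbol{x}^{j,p}_{0,m}-\boldsymbol{x}^{j,p-1}_{0,m}\|^2\le 3\beta^2h^2\,\max_m\mathbb{E}\|\boldsymbol{x}^{j,p-1}_{0,m}-\boldsymbol{x}^{j,p-2}_{0,m}\|^2+6\delta^2h^2$, i.e.\ a factor $3\beta^2h^2\le 0.03$ per sweep since $\beta h=1/10$. The same identity at the first inner step of update $j$ handles the splice $\boldsymbol{x}^{j,0}_{0,\cdot}=\boldsymbol{x}^{j-1}_{0,\cdot}$ (and for $j=1$, the frozen value $\boldsymbol{x}_0$ plays the role of the previous iterate, since $\boldsymbol{x}^0_{0,m}-\boldsymbol{x}_0=-\frac{h}{M}\sum_{m'<m}\boldsymbol{s}(\boldsymbol{x}_0)+\sqrt2(B_{mh/M}-B_0)$).

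Then I would compose: chaining the one-sweep bound over the $P$ inner steps of update $j$ multiplies the rate to $0.03^P$ and collapses the additive terms into the geometric sum $6\delta^2h^2(1+0.03+0.03^2+\cdots)\le 6\delta^2h^2/0.97\le 6.2\delta^2h^2$; since $\Delta_0^j\le\max_m\mathbb{E}\|\boldsymbol{x}^{j,P}_{0,m}-\boldsymbol{x}^{j-1,P}_{0,m}\|^2$ and $\boldsymbol{x}^{j-1,P}_{0,\cdot}$ is exactly the input of update $j$, this yields $\Delta_0^j\le 0.03^P\Delta_0^{j-1}+6.2\delta^2h^2$. For the base value, substitute $\boldsymbol{x}^0_{-1,M}=\boldsymbol{x}_0$ into Line~7 to obtain $\boldsymbol{x}^0_{0,m}-\boldsymbol{x}_0=-\frac{hm}{M}\boldsymbol{s}(\boldsymbol{x}_0)+\sqrt2(B_{mh/M}-B_0)$; taking expectations (the deterministic drift and the mean-zero increment are uncorrelated), bounding the Brownian term by $O(dh)$, using $\mathbb{E}\|\boldsymbol{s}(\boldsymbol{x}_0)\|^2\le 2\mathbb{E}\|\nabla f(\boldsymbol{x}_0)\|^2+2\delta^2$ together with the bound on $\mathbb{E}\|\nabla f(\boldsymbol{x}_0)\|^2$ in terms of $\beta d$ and $\mathsf{KL}(\mu_0\|\pi)$ from \cite[Lemma~10]{vempala2019rapid}, and $\beta h\le 1/10$ to absorb the $O(\beta d h^2)$ drift contribution into an $O(dh)$ term, gives the claimed bound on $\Delta_0^0$ after routine constant tracking.

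The step I expect to be the main obstacle is the bookkeeping in the composition: one must work throughout with the grid-maximum residual $\max_m\mathbb{E}\|\cdots\|^2$---it dominates the right-boundary quantity $\Delta_0^j$ (consistent with $\Delta_0^0$ being stated as a maximum), and it is the quantity with respect to which the one-sweep bound is a genuine contraction, whereas the right-boundary iterate alone is a partial sum over the grid and is not visibly monotone in $m$---and one must carefully splice the last inner step of one update to the first inner step of the next so that the contraction factor accumulates to exactly $0.03^P$ per update rather than $0.03$. None of the individual estimates is hard; getting the exponent $P$ right is what matters, since it is this exponent that forces the downstream choice $P\ge\frac{2\log\kappa}{3}+4$. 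Of the per-slice recursions the slice-$0$ case treated here is the cleanest (contrast Lemma~\ref{lmm:dec3}, which must additionally control motion of the left endpoint, hence contraction along the time direction) precisely because the left endpoint never moves.
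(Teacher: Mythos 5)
The paper does not actually prove this lemma---it imports it from \cite{anari2024fast} with the one-line remark that the $n=0$ slice reduces to the classical Picard iteration---so there is no in-paper proof to match against; but your reconstruction uses exactly the machinery the paper deploys for the general-$n$ analogues (the one-sweep estimate of \myeqref{eq:app6} with factor $3\beta^2h^2=0.03$ and additive $6\delta^2h^2$, and the initialization computation of \myeqref{eq:app11} via \cite[Lemma 10]{vempala2019rapid}), and the key structural observation---that the frozen left endpoint $\boldsymbol{x}^j_{0,0}=\boldsymbol{x}_0$ makes the slice-$0$ iterates a single Picard sequence---is the right one. One caveat on execution: the displayed identity you give is the \emph{lag-one} recursion $\boldsymbol{x}^{j,p}-\boldsymbol{x}^{j,p-1}$, and chaining lag-one bounds over $P$ sweeps controls only the last one-step increment, not the lag-$P$ quantity $\Delta_0^j=\mathbb{E}\|\boldsymbol{x}^{j,P}_{0,M}-\boldsymbol{x}^{j,0}_{0,M}\|^2$; summing the one-step increments by the triangle inequality would collapse the rate to a single factor of $0.03$. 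To get $0.03^P$ you must instead apply the same one-sweep estimate to the difference of the two \emph{parallel} sequences at matching inner index, $D^{j,p}_m:=\boldsymbol{x}^{j,p}_{0,m}-\boldsymbol{x}^{j-1,p}_{0,m}$ (legitimate because both share the frozen endpoint $\boldsymbol{x}_0$ and the same Brownian increments), giving $\max_m\mathbb{E}\|D^{j,p}_m\|^2\le 0.03\max_m\mathbb{E}\|D^{j,p-1}_m\|^2+6\delta^2h^2$ and hence $0.03^P$ after $P$ sweeps, with $D^{j,0}=D^{(j-1),P}$ supplying the splice to $\Delta_0^{j-1}$. You flag this splice as the delicate point, so I read this as an imprecision in the write-up rather than a missing idea; note also that for $j=1$ the ``previous sequence'' is the Euler initialization of Line~7, so the comparison there needs the direct bound on $\Delta_0^0$ exactly as you compute it (the stated constants $1.4dh$ and $\frac{4\beta^2h^2}{\alpha}$ are inherited from the cited lemma and are not reproducible to the digit from the paper's own \myeqref{eq:app11}, which is a bookkeeping discrepancy in the source, not in your argument).
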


\begin{corollary}
\label{cor:dela}
For $n=1,\ldots,N-1$, we have
\[\Delta_n^1\leq  n \left(\frac{1}{M}+10\kappa\right) \left( 5.1\delta^2 h^2  + 0.5\frac{\beta^2 h^2}{\alpha}\mathsf{KL}(\mu_0\|\pi)  + 10\kappa^2 \beta^2 dh^3 \right).\]
Furthermore, for $j=1,\ldots,J$ and $n=0$, we have
\[\Delta_0^j \leq 0.03^{jP} \frac{4\beta^2h^2}{\alpha}\mathsf{KL}(\mu_0\|\pi) + 1.4\cdot 0.03^{jP}dh  +6.7\delta^2h^2. \]
\end{corollary}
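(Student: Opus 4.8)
The plan is to obtain both displays by unrolling the two one-step recursions that have already been established. For the bound on $\Delta_n^1$ I would start from the $j=1$ case of Lemma~\ref{lmm:dec3}, namely $\Delta_k^1\leq \Delta_{k-1}^1 + (\tfrac1M+10\kappa)\big(5\delta^2h^2 + 6\beta^2dh^3 + \tfrac{0.4\beta^2h^2}{\alpha}\mathsf{KL}_{k-1}^0\big)$, and telescope it from $k=1$ to $k=n$ to get
\[\Delta_n^1 \leq \Delta_0^1 + \left(\tfrac1M+10\kappa\right)\sum_{k=1}^n\left(5\delta^2h^2 + 6\beta^2dh^3 + \tfrac{0.4\beta^2h^2}{\alpha}\mathsf{KL}_{k-1}^0\right).\]
Next I would replace each $\mathsf{KL}_{k-1}^0$ by the crude stationary bound $\mathsf{KL}_{k-1}^0 \leq \mathsf{KL}(\mu_0\|\pi) + \tfrac{8\beta^2dh}{\alpha}$ supplied by the second part of Lemma~\ref{lmm:dec1} (the exponential factor there is not needed), which converts $\sum_{k=1}^n\tfrac{0.4\beta^2h^2}{\alpha}\mathsf{KL}_{k-1}^0$ into $n\tfrac{0.4\beta^2h^2}{\alpha}\mathsf{KL}(\mu_0\|\pi) + 3.2\,n\kappa^2\beta^2dh^3$ after substituting $\kappa=\beta/\alpha$. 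The remaining base term $\Delta_0^1$ I would control with Lemma~\ref{lmm:dec4}, so that $\Delta_0^1\leq 0.03^P\big(\tfrac{4\beta^2h^2}{\alpha}\mathsf{KL}(\mu_0\|\pi)+1.4dh+2\delta^2h^2\big)+6.2\delta^2h^2$; since $P\geq 4$ the prefactor $0.03^P$ is tiny, so this is dominated by a single step of the telescoped sum. Finally I would group the contributions into the three families $\delta^2h^2$, $\tfrac{\beta^2h^2}{\alpha}\mathsf{KL}(\mu_0\|\pi)$ and $\beta^2dh^3$, and use $\kappa\geq 1$ together with $10\kappa\leq \tfrac1M+10\kappa$ to fold the stray numerical constants and the extra $\kappa^2$ from the stationary bound into the advertised coefficients $5.1$, $0.5$ and $10\kappa^2$.

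For the bound on $\Delta_0^j$ the plan is simply to iterate the recursion $\Delta_0^j\leq 0.03^P\Delta_0^{j-1}+6.2\delta^2h^2$ of Lemma~\ref{lmm:dec4} down to the base value $\Delta_0^0$, which yields
\[\Delta_0^j \leq 0.03^{jP}\Delta_0^0 + 6.2\delta^2h^2\sum_{i=0}^{j-1}0.03^{iP} \leq 0.03^{jP}\Delta_0^0 + \frac{6.2\delta^2h^2}{1-0.03^P}.\]
Plugging in $\Delta_0^0\leq \tfrac{4\beta^2h^2}{\alpha}\mathsf{KL}(\mu_0\|\pi)+1.4dh+2\delta^2h^2$ from the same lemma and using $P\geq 4$ (so $0.03^{jP}\leq 0.03^P$ and $\tfrac{1}{1-0.03^P}$ is within a hair of $1$) collapses every $\delta^2h^2$ contribution to at most $6.7\delta^2h^2$ while leaving the two $0.03^{jP}$-weighted terms untouched, which is exactly the claimed inequality.

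The main obstacle lies entirely in the first display. Unlike the $j\geq 2$ recursion of Lemma~\ref{lmm:dec3}, the $j=1$ recursion has time-direction coefficient exactly $1$, so there is no geometric damping along $n$ and the truncation error of the bottom Picard row genuinely accumulates linearly in $n$; the per-step constants therefore cannot be swallowed by a contraction and must be tracked by hand. Two points need care: first, the Picard-initialization terms $\Delta_0^1$ and $\Delta_0^0$ must be shown to be of strictly lower order, which is precisely what the hypothesis $P\geq \tfrac{2\log\kappa}{3}+4$ provides through the factor $0.03^P$; second, the $\kappa^2$ produced when $\mathsf{KL}_{k-1}^0$ is replaced by its stationary level $\tfrac{8\beta^2dh}{\alpha}$ has to be matched against the $10\kappa^2\beta^2dh^3$ term rather than leaking into the $\mathsf{KL}(\mu_0\|\pi)$ term, which dictates how the final regrouping must be organized.
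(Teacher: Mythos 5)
Your proposal matches the paper's own proof essentially step for step: the second display is obtained by iterating Lemma~\ref{lmm:dec4} and summing the geometric series in $0.03^{P}$, and the first display by telescoping the $j=1$ recursion of Lemma~\ref{lmm:dec3}, replacing $\mathsf{KL}_{k-1}^0$ with the initialization bound from Lemma~\ref{lmm:dec1} (which yields the $3.2\kappa^2\beta^2 dh^3$ contribution), controlling $\Delta_0^1$ via Lemma~\ref{lmm:dec4}, and regrouping constants. The approach and the key ingredients are the same, so no further comment is needed.
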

\begin{proof}
By Lemma \ref{lmm:dec4}, we have
\begin{align*}
\Delta_0^j 
~\leq~& 0.03^P\Delta_0^{j-1} +6.2\delta^2h^2\\
~\leq~& 0.03^{jP}\Delta_0^{0} +6.6\delta^2h^2\\
~\leq~& 0.03^{jP}\left(\frac{4\beta^2h^2}{\alpha}\mathsf{KL}(\mu_0\|\pi) + 1.4dh + 2\delta^2h^2\right) +6.6\delta^2h^2\\
~\leq~& 0.03^{jP} \frac{4\beta^2h^2}{\alpha}\mathsf{KL}(\mu_0\|\pi) + 1.4\cdot 0.03^{jP}dh  +6.7\delta^2h^2.
\end{align*}
Combining Lemma \ref{lmm:dec1} and Lemma \ref{lmm:dec3}, we have 
\begin{align*}
\Delta_n^1~\leq~&  \Delta_0^1 + \sum\limits_{i=1}^n\left(\frac{1}{M}+10\kappa\right) \left( 5\delta^2 h^2 + 6\beta^2 dh^3+0.4{\beta^2 h^2}\frac{\mathsf{KL}_{i-1}^0}{\alpha}\right)\\
~\leq~&  \Delta_0^1 +n \left(\frac{1}{M}+10\kappa\right) \left( 5\delta^2 h^2 + 6\beta^2 dh^3\right)\\
&+ \sum\limits_{i=1}^n\left(\frac{1}{M}+10\kappa\right) 0.4\frac{\beta^2 h^2}{\alpha}\left(\exp\left(-\alpha n h\right)\mathsf{KL}(\mu_0\|\pi) + \frac{8\beta^2d h}{\alpha}\right)\\
~\leq~&  \Delta_0^1 +n \left(\frac{1}{M}+10\kappa\right) \left( 5\delta^2 h^2 + 6\beta^2 dh^3 + 0.4\frac{\beta^2 h^2}{\alpha}\mathsf{KL}(\mu_0\|\pi)  + 3.2 \kappa^2 \beta^2 dh^3 \right)\\
~\leq~&  n \left(\frac{1}{M}+10\kappa\right) \left( 5.1\delta^2 h^2  + 0.5\frac{\beta^2 h^2}{\alpha}\mathsf{KL}(\mu_0\|\pi)  + 10\kappa^2 \beta^2 dh^3 \right).
\end{align*}
\end{proof}

\subsection{One Step Analysis of $\mathcal{E}_n^j$}
In this section, we analyze the one step change of $\mathcal{E}_n^j$.
\begin{lemma}
\label{lmm:dec2}
For any $j=2,\ldots,J$, $n = 1,\ldots,N-1$, we have 
\[\mathcal{E}_n^j\leq 2\cdot 0.03^{P-1}\Delta_{n-1}^j + 2\cdot 0.03^P\mathcal{E}_{n}^{j-1} + 7\delta^2h^2.\]
Furthermore, for $n = 1,\ldots,N-1$, we have 
\[\mathcal{E}_n^1\leq 2\cdot 0.03^{P-1}\Delta_{n-1}^1 + 6.3h^2\delta^2 + 0.01dh + 0.01\frac{\beta^2h^2}{\alpha}\mathsf{KL}_{n-1}^0.\]
\end{lemma}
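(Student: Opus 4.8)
The bound is obtained by the same Picard-unrolling argument already used inside the proof of Lemma~\ref{lmm:dec3}; the only task is to organize it as a self-contained recursion in $(\mathcal{E}_n^j,\Delta_n^j,\mathsf{KL}_n^0)$ and keep the constants tight. Fix $j\ge 2$ and $n\in\{1,\dots,N-1\}$, and recall $\mathcal{E}_n^j=\max_{m}\mathbb{E}\big[\big\|\boldsymbol{x}^{j,P}_{n,m}-\boldsymbol{x}^{j,P-1}_{n,m}\big\|^2\big]$. First I would write the inner update (Lines~12 and~18 of Algorithm~\ref{alg:main}) for two consecutive iterates, so that for $p\ge 2$ one has $\boldsymbol{x}^{j,p}_{n,m}-\boldsymbol{x}^{j,p-1}_{n,m}=\tfrac{h}{M}\sum_{m'=0}^{m-1}\big(\boldsymbol{s}(\boldsymbol{x}^{j,p-1}_{n,m'})-\boldsymbol{s}(\boldsymbol{x}^{j,p-2}_{n,m'})\big)$; applying Cauchy--Schwarz over the at most $M$ summands, the triangle inequality through $\nabla f$, $\beta$-smoothness of $f$ and $\delta$-accuracy of $\boldsymbol{s}$ reproduces \eqref{eq:app6}, i.e.\ a contraction with factor $3\beta^2h^2\le 0.03$ (using $\beta h\le 0.1$) plus an additive $6\delta^2h^2$. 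Iterating from $p=P$ down to $p=1$ and summing $\sum_{i\ge 0}0.03^{\,i}\le 1.04$ gives $\mathcal{E}_n^j\le 0.03^{\,P-1}\max_{m}\mathbb{E}\big[\big\|\boldsymbol{x}^{j,1}_{n,m}-\boldsymbol{x}^{j,0}_{n,m}\big\|^2\big]+6.2\,\delta^2h^2$.

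Next I would control the base ($p=1$) layer. Here $\boldsymbol{x}^{j,0}_{n,m}=\boldsymbol{x}^{j-1}_{n,m}=\boldsymbol{x}^{j-1,P}_{n,m}$, whereas $\boldsymbol{x}^{j,1}_{n,m}$ is a single Picard step from $\boldsymbol{x}^{j}_{n,0}=\boldsymbol{x}^{j}_{n-1,M}$ with score inputs $\boldsymbol{x}^{j,0}_{n,\cdot}$. Comparing these two and peeling off (a) the discrepancy of the starting points $\boldsymbol{x}^{j}_{n-1,M}$ versus $\boldsymbol{x}^{j-1}_{n-1,M}$, which is exactly $\Delta_{n-1}^j$, and (b) the discrepancy of the score inputs $\boldsymbol{x}^{j-1,P}_{n,\cdot}$ versus $\boldsymbol{x}^{j-1,P-1}_{n,\cdot}$, which is $\mathcal{E}_n^{j-1}$, while again paying $\delta$-accuracy, yields \eqref{eq:app7}: the base layer is at most $2\Delta_{n-1}^j+6\beta^2h^2\,\mathcal{E}_n^{j-1}+12\delta^2h^2$. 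Substituting and simplifying with $6\beta^2h^2\cdot 0.03^{\,P-1}=2\cdot 0.03^{\,P}$ and $12\cdot 0.03^{\,P-1}+6.2\le 7$ (valid since $P\ge 4$) gives the first inequality of the lemma.

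For $j=1$ the only structural change is that the layer ``$p=0$'' is the initialization $\boldsymbol{x}^0_{n,m}$ of Line~7, which is itself one Euler--Maruyama step from $\boldsymbol{x}^0_{n-1,M}$ with frozen drift $\boldsymbol{s}(\boldsymbol{x}^0_{n-1,M})$. I would rerun the unrolling (giving \eqref{eq:app9}) and then bound the base term $\max_m\mathbb{E}\big[\big\|\boldsymbol{x}^{1,1}_{n,m}-\boldsymbol{x}^{1,0}_{n,m}\big\|^2\big]$ via \eqref{eq:app10}--\eqref{eq:app11}; the one new ingredient is the second-moment bound $\mathbb{E}\big[\big\|\nabla f(\boldsymbol{x}^0_{n-1,M})\big\|^2\big]\le 2\beta d+\tfrac{4\beta^2}{\alpha}\mathsf{KL}_{n-1}^0$ from \citep[Lemma~10]{vempala2019rapid}, which turns $\max_m\mathbb{E}\big[\big\|\boldsymbol{x}^0_{n,m}-\boldsymbol{x}^0_{n-1,M}\big\|^2\big]$ into $4\beta dh^2+2\delta^2h^2+\tfrac{8\beta^2h^2}{\alpha}\mathsf{KL}_{n-1}^0+dh$. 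Absorbing the $0.03^{\,P-1}$ prefactors using $P\ge 4$ then reproduces \eqref{eq:app13}, which is exactly the second inequality of the lemma.

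The only genuinely delicate point --- and the reason this lemma matters downstream --- is that the recursion must \emph{contract along the outer index} $j$: the coefficient multiplying $\mathcal{E}_n^{j-1}$ must be strictly below one, and here it is $2\cdot 0.03^{\,P}<1$ precisely because the inner-loop gain $0.03^{\,P-1}$ is preserved. The bookkeeping temptation to bound $6\beta^2h^2$ crudely (say by $1$) must be resisted; one has to use the exact identity $6\beta^2h^2=2\cdot 0.03$ so that the geometric factor from unrolling survives as $2\cdot 0.03^{\,P}$. Everything else is the same Cauchy--Schwarz-plus-smoothness manipulation as in the proof of Lemma~\ref{lmm:dec3}, so I expect no further obstacle.
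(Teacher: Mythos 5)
Your proposal is correct and follows essentially the same route as the paper: the paper's proof of this lemma is literally a citation of Eqs.~(\ref{eq:app12}) and (\ref{eq:app13}), which are derived inside the proof of Lemma~\ref{lmm:dec3} by exactly the unrolling of \myeqref{eq:app6} with base layers \myeqref{eq:app7} and \myeqref{eq:app9}--\myeqref{eq:app11} that you describe, including the use of $6\beta^2h^2\cdot 0.03^{P-1}\le 2\cdot 0.03^{P}$ to preserve the contraction in $j$. (Minor note: with the hypothesis $\beta h\le 0.1$ this is an inequality rather than an identity, but your constants go through unchanged.)
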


\begin{proof}
By \eqref{eq:app12}, the first inequality holds. By \eqref{eq:app13}, the second inequality holds.
\end{proof}

\begin{corollary}
\label{cor:e1}
For $n = 1,\ldots,N-1$, we have 
\[\mathcal{E}_n^1\leq  n \left( 5.5\delta^2 h^2  + 0.1\frac{\beta^2 h^2}{\alpha}\mathsf{KL}(\mu_0\|\pi)  + 0.1\kappa^2  dh \right).\]
\end{corollary}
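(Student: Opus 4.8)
The plan is to obtain Corollary~\ref{cor:e1} by substituting the bounds already established for $\Delta_{n-1}^1$ and $\mathsf{KL}_{n-1}^0$ directly into the one-step estimate of Lemma~\ref{lmm:dec2}; no new recursion is needed, since the right-hand sides appearing in Lemma~\ref{lmm:dec2} are themselves self-contained, so the whole proof is an aggregation of constants together with the identity $\beta h=\tfrac1{10}$.

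First I would start from the second inequality of Lemma~\ref{lmm:dec2},
\[\mathcal{E}_n^1\leq 2\cdot 0.03^{P-1}\Delta_{n-1}^1 + 6.3h^2\delta^2 + 0.01dh + 0.01\tfrac{\beta^2h^2}{\alpha}\mathsf{KL}_{n-1}^0,\]
and bound $\mathsf{KL}_{n-1}^0$ via the second part of Lemma~\ref{lmm:dec1}, dropping the decaying exponential to get $\mathsf{KL}_{n-1}^0\le \mathsf{KL}(\mu_0\|\pi)+\tfrac{8\beta^2dh}{\alpha}$. Using $\beta h=0.1$ twice — once to write $\tfrac{8\beta^2dh}{\alpha}=0.8\kappa d$ and once to write $\tfrac{\beta^2h^2}{\alpha}\kappa d=0.1\kappa^2dh$ — turns the last term into $0.01\tfrac{\beta^2h^2}{\alpha}\mathsf{KL}(\mu_0\|\pi)+0.0008\kappa^2dh$, and the term $0.01dh$ is already of the claimed form because $\kappa\ge1$ gives $dh\le\kappa^2dh$.

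It then remains to absorb $2\cdot 0.03^{P-1}\Delta_{n-1}^1$. For $n\ge2$ I would plug in the first part of Corollary~\ref{cor:dela}, $\Delta_{n-1}^1\le(n-1)\big(\tfrac1M+10\kappa\big)\big(5.1\delta^2h^2+0.5\tfrac{\beta^2h^2}{\alpha}\mathsf{KL}(\mu_0\|\pi)+10\kappa^2\beta^2dh^3\big)$, and for $n=1$ the second part of Corollary~\ref{cor:dela} shows that $\Delta_0^1$ is only $6.7\delta^2h^2$ up to $0.03^P$-small terms. The key observation is that $P\ge\tfrac{2\log\kappa}{3}+4$ forces $0.03^{P-1}\big(\tfrac1M+10\kappa\big)$ to be negligible — exactly the estimate already used in the proof of Lemma~\ref{lmm:dec3}, where $8\big(\tfrac1M+10\kappa\big)0.03^P\le\tfrac{0.005}{\kappa}$ — so, after rewriting $10\kappa^2\beta^2dh^3=0.1\kappa^2dh$, this whole contribution is bounded by a tiny multiple of $(n-1)\big(\delta^2h^2+\tfrac{\beta^2h^2}{\alpha}\mathsf{KL}(\mu_0\|\pi)+\kappa^2dh\big)$. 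Collecting the coefficients of $\delta^2h^2$, $\tfrac{\beta^2h^2}{\alpha}\mathsf{KL}(\mu_0\|\pi)$ and $\kappa^2dh$ then gives the stated bound $n\big(5.5\delta^2h^2+0.1\tfrac{\beta^2h^2}{\alpha}\mathsf{KL}(\mu_0\|\pi)+0.1\kappa^2dh\big)$.

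The only point that needs genuine care is the numerical bookkeeping: one must check that $2\cdot 0.03^{P-1}\Delta_{n-1}^1$, despite $\Delta_{n-1}^1$ growing linearly in $n$ and carrying the factor $\tfrac1M+10\kappa$, ends up being lower order — this is precisely what the standing assumption $P\ge\tfrac{2\log\kappa}{3}+4$ buys — and that at $n=1$ one invokes the small estimate for $\Delta_0^1$ from the second part of Corollary~\ref{cor:dela} rather than the linear-in-$n$ one, so that the slack in the $\kappa^2dh$ and $\mathsf{KL}(\mu_0\|\pi)$ coefficients suffices. Beyond this constant-tracking, the argument is a routine summation of geometric-series remainders, so I expect no real obstacle.
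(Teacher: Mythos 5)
Your proposal is correct and follows essentially the same route as the paper: start from the second inequality of Lemma~\ref{lmm:dec2}, bound $\mathsf{KL}_{n-1}^0$ by the initialization bound in Lemma~\ref{lmm:dec1}, substitute the bound on $\Delta_{n-1}^1$ from Corollary~\ref{cor:dela}, and use $P\geq \frac{2\log\kappa}{3}+4$ so that $\left(\frac{1}{M}+10\kappa\right)0.03^{P-1}$ is small enough to absorb that contribution. Your explicit separation of the $n=1$ case (invoking the second part of Corollary~\ref{cor:dela} for $\Delta_0^1$) is if anything slightly more careful than the paper, which simply uses the linear-in-$n$ bound with $n-1$ replaced by $n$.
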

\begin{proof}
Combining Lemma \ref{lmm:dec1}, Lemma \ref{lmm:dec2} and Corollary \ref{cor:dela}, we have 
\begin{align*}
\mathcal{E}_n^1~\leq~& 2\cdot 0.03^{P-1}\Delta_{n-1}^1 + 6.3h^2\delta^2 + 0.01dh + 0.01\frac{\beta^2h^2}{\alpha}\mathsf{KL}_{n-1}^0\\
~\leq~& 2\cdot 0.03^{P-1}\Delta_{n-1}^1 + 6.3h^2\delta^2 + 0.01dh \\
&+ 0.01\frac{\beta^2h^2}{\alpha}\left(\exp\left(-\alpha (n+1) h\right)\mathsf{KL}(\mu_0\|\pi) + \frac{8\beta^2d h}{\alpha}\right)\\
~\leq~& 2\cdot 0.03^{P-1}\Delta_{n-1}^1 + 6.3h^2\delta^2 + 0.02\kappa dh + 0.01\frac{\beta^2h^2}{\alpha}\mathsf{KL}(\mu_0\|\pi)\\
~\leq~& 2\cdot 0.03^{P-1}\left( n \left(\frac{1}{M}+10\kappa\right) \left( 5.1\delta^2 h^2  + 0.5\frac{\beta^2 h^2}{\alpha}\mathsf{KL}(\mu_0\|\pi)  + 10\kappa^2 \beta^2 dh^3 \right)\right)\\
&+ 6.3h^2\delta^2 + 0.02\kappa dh + 0.01\frac{\beta^2h^2}{\alpha}\mathsf{KL}(\mu_0\|\pi)\\
~\leq~&  n\cdot 0.06 \left( 5.1\delta^2 h^2  + 0.5\frac{\beta^2 h^2}{\alpha}\mathsf{KL}(\mu_0\|\pi)  + 0.1\kappa^2  dh \right)\\
&+ 6.3h^2\delta^2 + 0.02\kappa dh + 0.01\frac{\beta^2h^2}{\alpha}\mathsf{KL}(\mu_0\|\pi)\\
~\leq~&  n \left( 5.5\delta^2 h^2  + 0.1\frac{\beta^2 h^2}{\alpha}\mathsf{KL}(\mu_0\|\pi)  + 0.1\kappa^2  dh \right).
\end{align*}
where the fifth inequality holds since $P\geq \frac{2\log \kappa}{3}+4$ implies $\left(\frac{1}{M}+10\kappa\right)0.03^{P-1} \leq 0.03$.
\end{proof}

\subsection{Proof of Theorem \ref{the:main1}}

We define an energy function as 
\[L_n^j = \Delta_{n-1}^j  + \kappa \mathcal{E}_n^{j-1}.\] 
We note that $2\cdot 0.03^{P-1}L_n^j + 7\delta^2h^2 \geq \mathcal{E}_n^j$.
By Lemma \ref{lmm:dec3} and Lemma \ref{lmm:dec2}, we can decompose $L_n^j$ as
\begin{align}
\label{eq:decdis1}
L_n^j~=~& \Delta_{n-1}^j  + \kappa \mathcal{E}_n^{j-1}\notag\\
~\leq~& \left(1-\frac{0.005}{\kappa}\right)\Delta_{n-2}^j + 4.4\left(\frac{1}{M}+10\kappa\right) {h^2}\delta^2
+4.4\left(\frac{1}{M}+10\kappa\right) {\beta^2h^2}\mathcal{E}_{n-1}^{j-1}\notag\\
&+ \kappa (0.03^{P-1}\Delta_{n-1}^{j-1} + 2\cdot 0.03^P\mathcal{E}_{n}^{j-2} + 7\delta^2h^2)\notag\\
~\leq~&  \left(1-\frac{0.005}{\kappa}\right)\Delta_{n-2}^j 
+\kappa\left(1-\frac{0.005}{\kappa}\right)\mathcal{E}_{n-1}^{j-1} +\kappa \cdot 0.03^{P-1}\Delta_{n-1}^{j-1} + \kappa \cdot 0.03^{P-1} \cdot \kappa \mathcal{E}_{n}^{j-2}\notag\\
&+ 56\kappa\delta^2h^2\notag\\
~=~&\left(1-\frac{0.005}{\kappa}\right)L_{n-1}^j +\left(\kappa \cdot 0.03^{P-1}\right) L_{n}^{j-1}+ 56\kappa\delta^2h^2.
\end{align}
Combining $P\geq \frac{2\log \kappa}{3}+4$ implies  $\kappa \cdot 0.03^{P-1}\leq 0.04$, we recursively bound $L_n^j$ as 
\begin{align}
\label{eq:app110}
L_n^j~\leq ~& \sum\limits_{a=2}^n 0.04^{j-2}\binom{n-a+j-2}{j-2}L_a^2 +  \sum\limits_{b=2}^j \left(\kappa \cdot 0.03^{P-1}\right)^{j-b}\left(1-\frac{0.005}{\kappa}\right)^{n-1}\binom{n-1+j-b}{j-b} L_1^b \notag \\
&+ \sum_{a=2}^j\sum_{b=2}^n \left(1-\frac{0.001}{\kappa}\right)^{n-b}0.04^{j-a} 65\kappa \delta^2h^2\notag\\
~\leq ~& \sum\limits_{a=2}^n 0.04^{j-2}\binom{n-a+j-2}{j-2}L_a^2 +  \sum\limits_{b=2}^j \left(\kappa \cdot 0.03^{P-1}\right)^{j-b}\left(1-\frac{0.005}{\kappa}\right)^{n-1}\binom{n-1+j-b}{j-b} L_1^b\notag\\
&+ 68000\kappa^2 \delta^2h^2.
\end{align}

For the first term $\sum\limits_{a=2}^n 0.04^{j-2}\binom{n-a+j-2}{j-2}L_a^2$, we first bound $L_a^2$. To do so, we first bound $\Delta_n^2$ as follows.
%
Combining Lemma \ref{lmm:dec3} and Corollary \ref{cor:e1}, we have 
\begin{align*}
\Delta_n^2 ~\leq~&  \left(1-\frac{0.005}{\kappa}\right)\Delta_{n-1}^2 + 4.4\left(\frac{1}{M}+10\kappa\right) {h^2}\delta^2
+4.4\left(\frac{1}{M}+10\kappa\right) {\beta^2h^2}\mathcal{E}_{n}^{1}\\
~\leq~& \Delta_{n-1}^2 + 48.4\kappa {h^2}\delta^2
+48.4\kappa {\beta^2h^2}\left(n \left( 5.5\delta^2 h^2  + 0.1\frac{\beta^2 h^2}{\alpha}\mathsf{KL}(\mu_0\|\pi)  + 0.1\kappa^2  dh \right)\right)\\
~\leq~& \Delta_{n-1}^2 +48.4\kappa {\beta^2h^2}n \left( 55.5\delta^2 h^2  + 0.1\frac{\beta^2 h^2}{\alpha}\mathsf{KL}(\mu_0\|\pi)  + 0.1\kappa^2  dh \right)\\
~\leq~& \Delta_{0}^2 +48.4\kappa {\beta^2h^2}n^2 \left( 55.5\delta^2 h^2  + 0.1\frac{\beta^2 h^2}{\alpha}\mathsf{KL}(\mu_0\|\pi)  + 0.1\kappa^2  dh \right)\\
~\leq~& 0.03^{2P} \frac{4\beta^2h^2}{\alpha}\mathsf{KL}(\mu_0\|\pi) + 1.4\cdot 0.03^{2P}dh  +6.7\delta^2h^2\\
&+48.4\kappa {\beta^2h^2}n^2 \left( 55.5\delta^2 h^2  + 0.1\frac{\beta^2 h^2}{\alpha}\mathsf{KL}(\mu_0\|\pi)  + 0.1\kappa^2  dh \right)\\
~\leq~& 48.4\kappa {\beta^2h^2}n^2 \left( 67.2\delta^2 h^2  + 0.2\frac{\beta^2 h^2}{\alpha}\mathsf{KL}(\mu_0\|\pi)  + 0.2\kappa^2  dh \right).
\end{align*}
Thus
\begin{align*}
  L_a^2 ~=~& \Delta_{a-1}^2 + \kappa \mathcal{E}_{a}^1 \\
  ~\leq ~&0.49\kappa (a-1)^2 \left( 67.2\delta^2 h^2  + 0.2\frac{\beta^2 h^2}{\alpha}\mathsf{KL}(\mu_0\|\pi)  + 0.2\kappa^2  dh \right)\\
  &+ \kappa\left( a\left( 5.5\delta^2 h^2  + 0.1\frac{\beta^2 h^2}{\alpha}\mathsf{KL}(\mu_0\|\pi)  + 0.1\kappa^2  dh \right)\right) \\
   ~\leq ~&\kappa  a^2\left( 39\delta^2 h^2  + 0.2\frac{\beta^2 h^2}{\alpha}\mathsf{KL}(\mu_0\|\pi)  + 0.2\kappa^2  dh \right).
\end{align*}
Thus by $\binom{m}{n}\leq \left(\frac{em}{n}\right)^n$ for $m\geq n>0$, we have
\begin{align}
\label{eq:app111}
&\sum\limits_{a=2}^n 0.04^{j-2}\binom{n-a+j-2}{j-2}L_a^2\notag\\
~\leq~&\sum\limits_{a=2}^n 0.04^{j-2}e^{j-2}\left(\frac{n-a+j-2}{j-2}\right)^{j-2}L_a^2\notag\\
~\leq~&\sum\limits_{a=2}^n 0.04^{j-2}e^{2j-4}L_a^2\notag\\
~\leq~&\sum\limits_{a=2}^n 0.3^{j-2}\kappa  a^2\left( 39\delta^2 h^2  + 0.2\frac{\beta^2 h^2}{\alpha}\mathsf{KL}(\mu_0\|\pi)  + 0.2\kappa^2  dh \right)\notag\\
~\leq~& 0.3^{j-2}\kappa  n^3\left( 39\delta^2 h^2  + 0.2\frac{\beta^2 h^2}{\alpha}\mathsf{KL}(\mu_0\|\pi)  + 0.2\kappa^2  dh \right).
\end{align}

For the second term $\sum\limits_{b=2}^j \left(\kappa \cdot 0.03^{P-1}\right)^{j-b}\left(1-\frac{0.005}{\kappa}\right)^{n-1}\binom{n-1+j-b}{j-b} L_1^b$, we first bound $L_1^b$. 
Firstly, for $\mathcal{E}_1^{b-1}$, combining Corollary \ref{cor:dela} and Corollary \ref{cor:e1}, we have
\begin{align*}
\mathcal{E}_1^{b-1}~\leq~&2\cdot 0.03^{P-1}\Delta_{0}^{b-1} + 2\cdot 0.03^P\mathcal{E}_{1}^{b-2} + 7\delta^2h^2\\
~\leq~&2\cdot 0.03^{P-1}\left(0.03^{(b-1)P} \frac{4\beta^2h^2}{\alpha}\mathsf{KL}(\mu_0\|\pi) + 1.4\cdot 0.03^{(b-1)P}dh  +6.7\delta^2h^2\right) \\
&+ 2\cdot 0.03^P\mathcal{E}_{1}^{b-2} + 7\delta^2h^2\\ 
~\leq~& 2\cdot 0.03^P\mathcal{E}_{1}^{b-2} + 0.03^{b}\left(0.01 \frac{4\beta^2h^2}{\alpha}\mathsf{KL}(\mu_0\|\pi) + 0.01dh \right) + 7.1\delta^2h^2\\ 
~\leq~& (2\cdot 0.03^P)^{b-2}\mathcal{E}_{1}^{1} + \sum\limits_{i=0}^{b-3}\left(2\cdot 0.03^P\right)^i\left(0.03^{b-i}\left(0.01 \frac{4\beta^2h^2}{\alpha}\mathsf{KL}(\mu_0\|\pi) + 0.01dh \right) + 7.1\delta^2h^2\right)\\
~\leq~& (2\cdot 0.03^P)^{b-2}\mathcal{E}_{1}^{1} + \sum\limits_{i=0}^{b-3}0.01^i 0.03^i\left(0.03^{b-i}\left(0.01 \frac{4\beta^2h^2}{\alpha}\mathsf{KL}(\mu_0\|\pi) + 0.01dh \right) + 7.1\delta^2h^2\right)\\
~\leq~& (2\cdot 0.03^P)^{b-2}\left(5.5\delta^2 h^2  + 0.1\frac{\beta^2 h^2}{\alpha}\mathsf{KL}(\mu_0\|\pi)  + 0.1\kappa^2  dh \right) \\
&+ 0.03^{b}\left(0.02 \frac{4\beta^2h^2}{\alpha}\mathsf{KL}(\mu_0\|\pi) + 0.02dh \right) + 7.2\delta^2h^2\\
~\leq~&  0.03^{b}\left(0.1 \frac{\beta^2h^2}{\alpha}\mathsf{KL}(\mu_0\|\pi) + 0.1dh \right) + 7.3\delta^2h^2.
\end{align*}

As for $\Delta_0^b$ we have 
\[\Delta_0^b \leq 0.03^{bP} \frac{4\beta^2h^2}{\alpha}\mathsf{KL}(\mu_0\|\pi) + 1.4\cdot 0.03^{bP}dh  +6.7\delta^2h^2. \]
Thus, we bound the first term as 
\begin{align*}
L_1^b ~=~ & \Delta_0^b + \kappa \mathcal{E}_1^{b-1}\notag\\
~\leq~ &0.03^{bP} \frac{4\beta^2h^2}{\alpha}\mathsf{KL}(\mu_0\|\pi) + 1.4\cdot 0.03^{bP}dh  +6.7\delta^2h^2\notag\\
&+ \kappa 0.03^{b}\left(0.1 \frac{\beta^2h^2}{\alpha}\mathsf{KL}(\mu_0\|\pi) + 0.1dh \right) + 7.3\delta^2h^2\notag\\
~\leq~ & \kappa 0.03^{b}\left(0.2 \frac{\beta^2h^2}{\alpha}\mathsf{KL}(\mu_0\|\pi) + 0.2dh \right) +14\delta^2h^2.
\end{align*}

Thus by $\binom{m}{n}\leq \left(\frac{em}{n}\right)^n$ for $m\geq n>0$, and $\sum\limits_{i=0}^m \binom{n+i}{n}x^i = \frac{1-(m+1)\binom{m+n+1}{n}B_x(m+1,n+1)}{(1-x)^{n+1}}\leq \frac{1}{(1-x)^{n+1}}$ we have
\begin{align*}
& \sum\limits_{b=2}^j \left(\kappa \cdot 0.03^{P-1}\right)^{j-b}\left(1-\frac{0.005}{\kappa}\right)^{n-1}\binom{n-1+j-b}{j-b} L_1^b\\
~\leq~& \sum\limits_{b=2}^j 0.04^{j-b} \left(1-\frac{0.005}{\kappa}\right)^{n-1}\binom{n-1+j-b}{j-b} \left(\kappa 0.03^{b}\left(0.2 \frac{\beta^2h^2}{\alpha}\mathsf{KL}(\mu_0\|\pi) + 0.2dh \right) \right)\\
&+ ~ \sum\limits_{b=2}^j\left(\kappa \cdot 0.03^{P-1}\right)^{j-b} \left(1-\frac{0.005}{\kappa}\right)^{n-1}\binom{n-1+j-b}{j-b} 14\delta^2h^2\\
~\leq~& \sum\limits_{b=2}^j 0.04^{j} \binom{n-1+j-b}{j-b}\kappa \left(0.2 \frac{\beta^2h^2}{\alpha}\mathsf{KL}(\mu_0\|\pi) + 0.2dh \right) \\
&+ ~ \sum\limits_{b=2}^j\left(\kappa \cdot 0.03^{P-1}\right)^{j-b} \left(1-\frac{0.005}{\kappa}\right)^{n-1}\binom{n-1+j-b}{j-b} 14\delta^2h^2\\
~\leq~&\sum\limits_{i=0}^{j-2} 0.04^{j} e^i \left(1+\frac{n-1}{i}\right)^i\kappa \left(0.2 \frac{\beta^2h^2}{\alpha}\mathsf{KL}(\mu_0\|\pi) + 0.2dh \right) \\
&+~\sum\limits_{i=0}^{j-2}\left(\kappa \cdot 0.03^{P-1}\right)^{i} \left(1-\frac{0.005}{\kappa}\right)^{n-1}\binom{n-1+i}{i} 14\delta^2h^2\\
~\leq~&0.11^je^{n-1}\kappa \left(0.2 \frac{\beta^2h^2}{\alpha}\mathsf{KL}(\mu_0\|\pi) + 0.2dh \right)  \\
&+\frac{1}{(1-\kappa \cdot 0.03^{P-1})^n}\left(1-\frac{0.005}{\kappa}\right)^{n-1} 
(6.6+7.9\kappa)\delta^2h^2\\
~\leq~&0.11^je^{n-1}\kappa \left(0.2 \frac{\beta^2h^2}{\alpha}\mathsf{KL}(\mu_0\|\pi) + 0.2dh \right) +\frac{1}{(1-\kappa \cdot 0.03^{P-1})}
(6.6+7.9\kappa)\delta^2h^2\\
~\leq~&0.11^je^{n-1}\left(2.2\kappa\left(\frac{4\beta^2h^2}{\alpha}\mathsf{KL}(\mu_0\|\pi) + 1.6dh + 2\delta^2h^2\right)\right)+
20\kappa\delta^2h^2,
\end{align*}
where the second-to-last inequality is implied by $8\left(\frac{1}{M}+10\kappa\right)0.03^P \leq \frac{0.005}{\kappa}$.

Combing \eqref{eq:app110} and \eqref{eq:app111}, we bound $L_n^j$ as 
\begin{align*}
L_n^j
~\leq ~& \sum\limits_{a=2}^n 0.04^{j-2}\binom{n-a+j-2}{j-2}L_a^2 +  \sum\limits_{b=2}^j \left(\kappa \cdot 0.03^{P-1}\right)^{j-b}\left(1-\frac{0.005}{\kappa}\right)^{n-1}\binom{n-1+j-b}{j-b} L_1^b\notag\\
&+ 68000\kappa^2 \delta^2h^2\\
~\leq ~&0.3^{j-2}\kappa  n^3\left( 39\delta^2 h^2  + 0.2\frac{\beta^2 h^2}{\alpha}\mathsf{KL}(\mu_0\|\pi)  + 0.2\kappa^2  dh \right)\\
&+ 0.11^je^{n-1}\left(2.2\kappa\left(\frac{4\beta^2h^2}{\alpha}\mathsf{KL}(\mu_0\|\pi) + 1.6dh + 2\delta^2h^2\right)\right)+
20\kappa\delta^2h^2 + 68000\kappa^2 \delta^2h^2\\
~\leq ~& 0.3^{j-2} e^{n-1} \kappa n^3\left( 41\delta^2 h^2 + 1.8\kappa^2 dh +0.5\kappa h\mathsf{KL}(\mu_0\|\pi) \right)+
68020\kappa^2\delta^2h^2.
\end{align*}
Since $ 8\left(\frac{1}{M}+10\kappa\right)0.03^P \leq \frac{0.005}{\kappa}$ implies $\kappa^20.03^{P-1}\leq 0.003$, we have
\begin{align*}
 &\mathcal{E}_n^j\\
~\leq~&    2\cdot 0.03^{P-1}L_n^j + 7\delta^2h^2\\
~\leq~&    2\cdot 0.03^{P-1}\left(0.3^{j-2} e^{n-1} \kappa n^3\left( 41\delta^2 h^2 + 1.8\kappa^2 dh +0.5\kappa h\mathsf{KL}(\mu_0\|\pi) \right)+
68020\kappa^2\delta^2h^2\right) + 7\delta^2h^2\\
~\leq~&      0.3^{j-2} e^{n-1} n^3 \left( \delta^2 h^2  + h\mathsf{KL}(\mu_0\|\pi) + \kappa dh\right)+
416\delta^2h^2.
\end{align*}
Thus when $J - N \geq \log\left(N^3\left(\frac{\kappa  \delta^2 h  + \kappa \mathsf{KL}(\mu_0\|\pi) + \kappa^2 d}{\varepsilon^2}\right)\right)$, we have for any $n = 0,\ldots, N-1$
\[\mathcal{E}_n^J\leq \frac{\varepsilon^2}{5\kappa \beta}+ 416\delta^2h^2.\] 
Recall 
\[ \mathsf{KL}^J_{N-1}~\leq~ e^{-1.2\alpha (N-1) h}\left(\mathsf{KL}(\mu_0\|\pi) + 4.4\beta^2 h \Delta_0^J\right)   + 5\kappa \beta \mathcal{E}  + \frac{0.6\beta d}{\alpha M} + \frac{28 \delta^2}{\alpha},\]
thus when $\delta^2\leq \frac{\alpha\varepsilon^2}{29}$, $M\geq \frac{\kappa d}{\varepsilon^2}$, and $N \geq 10\kappa \log\frac{\mathsf{KL}(\mu_0\|\pi)}{\varepsilon^2} $, we have 
\begin{align*}
\mathsf{KL}^J_{N-1}~\leq~& e^{-1.2\alpha (N-1) h}\left(\mathsf{KL}(\mu_0\|\pi) + 4.4\beta^2 h \Delta_0^J\right)   + 5\kappa \beta \mathcal{E}  + \frac{0.6\beta d}{\alpha M} + \frac{28 \delta^2}{\alpha}\\
~\leq~& e^{-1.2\alpha (N-1) h}\left(\mathsf{KL}(\mu_0\|\pi) + 4.4\beta^2 h \left(0.03^{JP} \frac{4\beta^2h^2}{\alpha}\mathsf{KL}(\mu_0\|\pi) + 1.4\cdot 0.03^{JP}dh  +6.7\delta^2h^2\right)\right) \\
&  + 5\kappa \beta \mathcal{E}  + \frac{0.6\beta d}{\alpha M} + \frac{28 \delta^2}{\alpha}\\
~\leq~& e^{-1.2\alpha (N-1) h}\mathsf{KL}(\mu_0\|\pi) + \varepsilon^2 + 5\kappa \beta \mathcal{E}  + \frac{0.6\beta d}{\alpha M} + \frac{29 \delta^2}{\alpha}\\
~\leq~& 5\varepsilon^2.
\end{align*}

\section{Missing Details for Sampling for Diffusion Models}
\label{app:diffusion}

In this section, we begin by presenting the algorithm details in Appendix \ref{app:alg}. In Appendix \ref{app:app22}, following the approach of \citet{chen2024accelerating}, we apply Girsanov's Theorem and the interpolation method to decompose the KL divergence and bound the discretization error, accounting for the influence of the step size scheme and the estimation error of the score function. Finally, in Appendix \ref{app:app23}, we analyze the additional parallelization error and derive the overall error bound.

\subsection{Algorithm}
\label{app:alg}
In the parallel Picard method for diffusion model, we use the similar parallelization across time slices as illustrated in Figure~\ref{fig:parallel_sampling}. In Lines 2--6, we generate the noises and fix them. In Lines 7--10, we initialize the value at the grid via sequential method with a stepsize $h_n = \mathcal{O}(1)$. In Lines 12–21, we update the grids diagonally, using the exponential integrator in Lines 14 and 19 instead of the Euler-Maruyama scheme. The step size scheme also differs from that used for log-concave sampling. Here, we follow the discretization scheme with early stopping and exponential decay described in \citet[Section 3.1.2]{chen2024accelerating}.

\begin{algorithm}[H]
 \caption{Parallel Picard Iteration Method for diffusion models
 }
  \label{alg:diffusion}
\begin{algorithmic}[1]
\STATE \textbf{Input:}
 $\widehat{\boldsymbol{y}}_0 \sim \widehat{\boldsymbol{q}_0} = \mathcal{N}(0,I_d)$,  the learned NN-based score function $\boldsymbol{s}^\theta_t(\cdot)$, the depth of Picard iterations $J$, the depth of inner Picard iteration $P$, and a discretization scheme $ (T,(h_n)^N_{n=1}$ and $(\tau_{n,m})_{n\in [0:N-1],m\in [0:M]})$.
\FOR{$n=0,\ldots,N-1$}
\FOR{$m=0,\ldots,M$ (in parallel)}
\STATE $\boldsymbol{\xi}_{n,m}\sim \mathcal{N}(0,I_d)$
 \ENDFOR
 \ENDFOR
\FOR{$n=0,\ldots,N-1$}
 \FOR{$m=0,\ldots,M_n$ (in parallel)}
\STATE $\widehat{\boldsymbol{y}}^j_{-1,M} =\widehat{\boldsymbol{y}}_0$, for $j= 0,\ldots,J$,
\begingroup
                \setlength{\abovedisplayskip}{0pt} 
                \setlength{\belowdisplayskip}{0pt} 
                \begin{equation}
                    \begin{aligned}
                    &\widehat{\boldsymbol{y}}^{0}_{n,\tau_{n,m}}~ =~e^{\frac{\tau_{n,m}}{2}}\widehat{\boldsymbol{y}}_{n-1,\tau_{n,M}}^0 \\
                    &+ \sum\limits_{m'=0}^{m-1}e^{\frac{\tau_{n,m}-\tau_{n,m'+1}}{2}}\left[2(e^{\epsilon_{n,m'}}-1)\boldsymbol{s}^\theta_{t_n +\tau_{n,m'}}(\widehat{\boldsymbol{y}}_{n-1,\tau_{n,M}}^0) + \sqrt{e^{\epsilon_{n,m'}}-1}\boldsymbol{\xi}_{m'}\right],
                    \end{aligned} 
                    \label{eqn: init exponential integrator under picard iteration in algo}
                \end{equation}
\endgroup
\ENDFOR
\ENDFOR
\FOR{$k =1,\ldots ,N$}
 \FOR{$j= 1,\ldots, \min\{k-1,J\}$ and $m =0,\ldots,M_n$ (in parallel)}
\STATE let $n = k-j$, and $\widehat{\boldsymbol{y}}^j_{n,0} =  \widehat{\boldsymbol{y}}^j_{n-1,M_n}$,
\begingroup
                \setlength{\abovedisplayskip}{0pt} 
                \setlength{\belowdisplayskip}{0pt} 
                \begin{equation}
                    \hspace{-15pt}
                    \begin{aligned}
                    &\widehat{\boldsymbol{y}}^{j}_{n,\tau_{n,m}} ~=~e^{\frac{\tau_{n,m}}{2}}\widehat{\boldsymbol{y}}_{n,0}^j\\
                    &+ \sum\limits_{m'=0}^{m-1}e^{\frac{\tau_{n,m}-\tau_{n,m'+1}}{2}}\left[2(e^{\epsilon_{n,m'}}-1)\boldsymbol{s}^\theta_{t_n +\tau_{n,m'}}(\widehat{\boldsymbol{y}}^{j-1}_{n,\tau_{n,m'}}) + \sqrt{e^{\epsilon_{n,m'}}-1}\boldsymbol{\xi}_{m'}\right],
                    \end{aligned} 
                    \label{eqn: exponential integrator under picard iteration in algo1}
                \end{equation}
\endgroup
\ENDFOR
\ENDFOR
\FOR{$k = N+1,\ldots,N+J-1$}
 \FOR{$n= \max\{0,k-J\},\ldots, N-1$ and $m =0,\ldots,M_n$ (in parallel)}
\STATE let $j = k-n$, and $\widehat{\boldsymbol{y}}^j_{n,0} =  \widehat{\boldsymbol{y}}^j_{n-1,M_n}$,
\begingroup
                \setlength{\abovedisplayskip}{0pt} 
                \setlength{\belowdisplayskip}{0pt} 
                \begin{equation}
                    \hspace{-15pt}
                    \begin{aligned}
                    &\widehat{\boldsymbol{y}}^{j}_{n,\tau_{n,m}} ~=~e^{\frac{\tau_{n,m}}{2}}\widehat{\boldsymbol{y}}_{n,0}^j\\
                    &+ \sum\limits_{m'=0}^{m-1}e^{\frac{\tau_{n,m}-\tau_{n,m'+1}}{2}}\left[2(e^{\epsilon_{n,m'}}-1)\boldsymbol{s}^\theta_{t_n +\tau_{n,m'}}(\widehat{\boldsymbol{y}}^{j-1}_{n,\tau_{n,m'}}) + \sqrt{e^{\epsilon_{n,m'}}-1}\boldsymbol{\xi}_{m'}\right],
                    \end{aligned} 
                    \label{eqn: exponential integrator under picard iteration in algo2}
                \end{equation}
\endgroup
\ENDFOR
\ENDFOR
\STATE \textbf{Return:} ${\widehat{\boldsymbol{y}}^{J}_{N-1,M_{N-1}}}$.
\end{algorithmic}
\end{algorithm}

\paragraph{Stepsize scheme.}
We first present the stepsize schedule for diffusion models, which is the same as the discretization scheme in \citet{chen2024accelerating}.
Specifically, we split the the time horizon $T$ into $N$ time slices with length $h_n \leq h = \frac{T}{N} = \Omega(1)$, 
and a large gap grid $(t_n)_{n=0}^N$ with $t_n = \sum\limits_{i=1}^n h_i$. 
For any $n \in [0 : N - 1]$, we further split the $n$-th time slice into a grid $(\tau_{n,m})_{m=0}^{M_n}$ with $\tau_{n,0} = 0$ and $\tau_{n,M_n} = h_n$. 
We denote the step size of the $m$-th step in the $n$-th time slice as $\epsilon_{n,m} = \tau_{n,m+1} - \tau_{n,m}$, and let the total number of grids in the $n$-th time slice as $M_n$. The grids $(\tau_{n,m})_{m=0}^{M_n}$ is scheduled as follows,
\begin{enumerate}
    \item for the first $N - 1$ time slice, we use the uniform discretization: for $n = 0,\ldots,N-2$ and $m = 0 ,\ldots, M - 1$.
    \[h_n = h,\quad \epsilon_{n,m} = \epsilon, \mbox{ and }M_n = M = \frac{h}{\epsilon},\]
    \item for the last time slice, we apply early stopping and exponential decay:
    \[h_{N-1} = h - \delta,\quad\epsilon_{N-1, m} \leq \epsilon  \wedge \epsilon\left(h -\tau_{N-1,m+1}\right).\]
\end{enumerate}
We also define the indexing functions as follows: for $\tau \in [t_n, t_{n+1}]$, we define $I_n(\tau)\in \mathbb{N}$ such that $\sum\limits_{j=1}^{I_n(\tau)} \epsilon_{n,j} \leq \tau < \sum\limits_{j=1}^{I_n(\tau)+1} \epsilon_{n,j}$. We further define a piecewise function $g$ such that $g_n(\tau) = \sum\limits_{j=1}^{I_n(\tau)} \epsilon_{n,j}$ and thus we have $I_n(\tau) = \lfloor \tau / \epsilon \rfloor$ and $g_n(\tau) = \lfloor \tau / \epsilon \rfloor \epsilon$.

\paragraph{Exponential integrator for Picard iterations.}
Compared with Line 12 and Line 18 in Algorithm~\ref{alg:main}, where we use a forward Euler-Maruyama scheme for Picard iterations, we use the following exponential integrator scheme
~\citep{zhang2022fast, chen2024accelerating}.
Specifically, 
In $n$-th time slice $ [t_n, t_n + \tau_{n,M_n}]$, for each grid $t_n + \tau_{n,m}$, we simulate the approximated backward process \eqref{eq:approx_back} with Picard iterations as 
\begin{align*}
\widehat{\boldsymbol{y}}^{j+1}_{n,\tau_{n,m}} ~=~&e^{\frac{\tau_{n,m}}{2}}\widehat{\boldsymbol{y}}_{n-1,\tau_{n,M}}^{j+1}\\
&+ \sum\limits_{m'=0}^{m-1}e^{\frac{\tau_{n,m}-\tau_{n,m'+1}}{2}}\left[2(e^{\epsilon_{n,m'}}-1)\boldsymbol{s}^\theta_{t_n +\tau_{n,m'}}(\widehat{\boldsymbol{y}}_{n-1,\tau_{n,M}}^j) + \sqrt{e^{\epsilon_{n,m'}}-1}\boldsymbol{\xi}_{m'}\right].
\end{align*}
We note such update also inherently allows for parallelization for $m = 1,\ldots,M_n$.

\subsection{Interpolation Processes and Decomposition of KL Divergence}
\label{app:app22}
Following the proof flow in~\citet{chen2024accelerating},  we define the following processes:
\begin{enumerate}
    \item the original backward process,
    \begin{equation}
\label{eq:back1}
\mathrm{d}\cev{\boldsymbol{x}}_t = \left[\frac{1}{2}\cev{\boldsymbol{x}}_t + \nabla \log \cev{p}_t(\cev{\boldsymbol{x}}_t)\mathrm{d}_t\right] + \mathrm{d}\mathbf{\boldsymbol{\mathit{w}}}_t,\quad \text{with} \quad \cev{\boldsymbol{x}}_0 \sim p_T;
\end{equation}
\item the approximated backward process,
\[
\mathrm{d} \boldsymbol{y}_t = \left[\frac{1}{2} \boldsymbol{y}_t + \boldsymbol{s}^{\theta}_t(\boldsymbol{y}_t)\right] \mathrm{d}t + \mathrm{d} \boldsymbol{\mathit{w}}_t, \quad \text{with} \quad \boldsymbol{y}_0 \sim \mathcal{N}(0, I_d);
\]
\item the interpolation processes $(\widehat{\boldsymbol{y}}^{j}_{t_n , \tau})_{\tau \in [0, h]}$ over $\tau \in [0,h]$ conditioned on the filtration of the backward SDE~\eqref{eq:back1} up to time $t$ $\mathcal{F}_t$, for any fixed $n = 0,\ldots,N-1$,  $j=1,\ldots,J$,
\begin{equation}
\label{eq:auxiliary1}
\mathrm{d} \widehat{\boldsymbol{y}}_{t_n , \tau}^{j} (\omega)= \left[\frac{1}{2} \widehat{\boldsymbol{y}}_{t_n ,\tau}^{j}(\omega) + \boldsymbol{s}^\theta_{t_n + g_n(\tau)} \left(\widehat{\boldsymbol{y}}_{t_n ,g_n(\tau)}^{j-1}(\omega)\right) \right] \mathrm{d}\tau + \mathrm{d}\boldsymbol{\mathit{w}}_{t_n + \tau}(\omega);
\end{equation}
with $ \widehat{\boldsymbol{y}}^{j}_{t_n,0}(\omega) =  \widehat{\boldsymbol{y}}_{t_{n-1},\tau_{n-1,M_{n-1}}}^{j}(\omega) $.
\item the initialization process,
\begin{equation}
 \label{eq:auxiliary3}
\mathrm{d} \widehat{\boldsymbol{y}}_{t_n , \tau}^{0} (\omega)= \left[\frac{1}{2} \widehat{\boldsymbol{y}}_{t_n ,\tau}^{0}(\omega) + \boldsymbol{s}^\theta_{t_n + g_n(\tau)} \left(\widehat{\boldsymbol{y}}_{t_{n-1} ,\tau_{n-1,M}}^{0}(\omega)\right) \right] \mathrm{d}\tau + \mathrm{d}\boldsymbol{\mathit{w}}_{t_n + \tau}(\omega),
\end{equation}
with $\widehat{\boldsymbol{y}}_{t_0 , 0}^{0}  =   \widehat{\boldsymbol{y}}_{0}$ and $\widehat{\boldsymbol{y}}_{t_n , 0}^{0}  =   \widehat{\boldsymbol{y}}_{t_{n-1},\tau_{n-1,M}}$.
\end{enumerate}

\begin{remark}
The main difference compared to the auxiliary process defined in \citet{chen2024accelerating} is the change of the start point across each update.
\end{remark}
We can demonstrate that the interpolation processes remain well-defined after parallelization across time slices.
\begin{lemma}
\label{lmm:adapt}
The auxiliary process $(\widehat{\boldsymbol{y}}_{t_n , \tau}^{j} (\omega))_{\tau \in [0,h_n]}$ is $\mathcal{F}_{t_n+\tau}$-adapted for any $j =1,\ldots,j$ and $n =0,\ldots,n-1$.
\end{lemma}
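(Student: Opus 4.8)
The plan is to prove this by a ``diagonal'' induction on the index $k = n+j$ that mirrors the update order of Algorithm~\ref{alg:diffusion}, after first disposing of the $j=0$ (initialization) layer by a separate induction on $n$. The only analytic ingredient is the elementary fact that if $Z_\tau$ solves a linear SDE $\mathrm{d}Z_\tau = [\tfrac12 Z_\tau + b_\tau]\,\mathrm{d}\tau + \mathrm{d}\boldsymbol{w}_{t_n+\tau}$ on $[0,h_n]$ with $Z_0$ being $\mathcal{F}_{t_n}$-measurable and with $b_\tau$ piecewise constant in $\tau$ along the grid, $b_\tau$ being $\mathcal{F}_{t_n+g_n(\tau)}$-measurable, then $Z_\tau$ is $\mathcal{F}_{t_n+\tau}$-adapted. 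This holds because the variation-of-constants formula — which is exactly the exponential-integrator identity spelled out in \eqref{eqn: exponential integrator under picard iteration in algo1} — writes $Z_\tau$ as a fixed affine function of $Z_0$, the finitely many values $\{b_{\tau_{n,m'}}\}_{m':\,\tau_{n,m'}\le\tau}$, the increments $\{\boldsymbol{w}_{t_n+\tau_{n,m'+1}}-\boldsymbol{w}_{t_n+\tau_{n,m'}}\}$, and $\boldsymbol{w}_{t_n+\tau}-\boldsymbol{w}_{t_n+g_n(\tau)}$, each of which is $\mathcal{F}_{t_n+\tau}$-measurable since $\{\mathcal{F}_t\}$ is the augmented filtration of the driving Wiener process $\boldsymbol{w}$ and is increasing. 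Establishing this adaptedness is precisely what is needed to place the drift of \eqref{eq:auxiliary1} in the class $\mathcal{V}^m$ so that Girsanov's theorem (Theorem~\ref{the:gir}, Corollary~\ref{coro:gir}) may be applied to the interpolation process in Appendix~\ref{app:app22}.

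For the base layer $j=0$ I would argue by induction on $n$ that $\widehat{\boldsymbol{y}}^0_{t_n,\cdot}$ is $\mathcal{F}_{t_n+\cdot}$-adapted. For $n=0$ this is immediate since $\widehat{\boldsymbol{y}}^0_{t_0,0}=\widehat{\boldsymbol{y}}_0$ is $\mathcal{F}_0$-measurable by assumption; for $n\ge 1$, the starting point $\widehat{\boldsymbol{y}}^0_{t_n,0}=\widehat{\boldsymbol{y}}^0_{t_{n-1},\tau_{n-1,M}}$ is $\mathcal{F}_{t_n}$-measurable by the inductive hypothesis evaluated at $\tau=h_{n-1}$, and the score term appearing in the drift of \eqref{eq:auxiliary3} is evaluated at this same $\mathcal{F}_{t_n}$-measurable point, so the linear-SDE fact applies with a drift $b_\tau$ that is in fact constant over the whole slice.

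For the inductive step of the lemma, fix $j\ge 1$ and $n\ge 0$ and assume the statement for all pairs with strictly smaller $n+j$ (together with the $j=0$ claim for all $n$). The start point $\widehat{\boldsymbol{y}}^j_{t_n,0}=\widehat{\boldsymbol{y}}^j_{t_{n-1},\tau_{n-1,M_{n-1}}}$ — or $\widehat{\boldsymbol{y}}_0$ when $n=0$ — is $\mathcal{F}_{t_n}$-measurable by the inductive hypothesis at $(n-1,j)$, which has index $k-1$. The drift in \eqref{eq:auxiliary1} is $b_\tau=\boldsymbol{s}^\theta_{t_n+g_n(\tau)}(\widehat{\boldsymbol{y}}^{j-1}_{t_n,g_n(\tau)})$; by the inductive hypothesis at $(n,j-1)$, also of index $k-1$, the path $\widehat{\boldsymbol{y}}^{j-1}_{t_n,\cdot}$ is adapted, so $\widehat{\boldsymbol{y}}^{j-1}_{t_n,g_n(\tau)}$ is $\mathcal{F}_{t_n+g_n(\tau)}$-measurable and hence $b_\tau$ is piecewise constant on the grid and $\mathcal{F}_{t_n+g_n(\tau)}$-measurable. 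The linear-SDE fact then yields that $\widehat{\boldsymbol{y}}^j_{t_n,\tau}$ is $\mathcal{F}_{t_n+\tau}$-adapted, closing the induction. I do not expect a genuine obstacle here; the only points requiring care are (i) phrasing the inductive hypothesis at the level of the entire interpolation path rather than merely its terminal value, because the drift at time $\tau$ consults the grid-point value $\widehat{\boldsymbol{y}}^{j-1}_{t_n,g_n(\tau)}$ of the previous iterate, and (ii) checking that $k=n+j$ strictly decreases along both dependency edges (previous slice and previous Picard iterate), which is exactly the causality that makes the diagonal scheme of Algorithm~\ref{alg:diffusion} well defined.
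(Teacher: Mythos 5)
Your proof is correct and follows essentially the same route as the paper's: an induction in which adaptedness of the previous iterate makes the piecewise-constant, $g_n(\tau)\le\tau$-delayed drift adapted, whence the linear SDE (equivalently, the variation-of-constants formula) yields an adapted solution. You are in fact more explicit than the paper, which only states that "the lemma is established through induction" without spelling out that the recursion must run diagonally over $k=n+j$ to handle the starting point's dependence on the previous time slice; your organization of the induction is the right one.
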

\begin{proof}
Since the initialization $\widehat{\boldsymbol{y}}_{t_n , \tau}^{0} (\omega)$ 
satisfies 
\[\mathrm{d} \widehat{\boldsymbol{y}}_{t_n , \tau}^{0} (\omega)= \left[\frac{1}{2} \widehat{\boldsymbol{y}}_{t_n ,\tau}^{0}(\omega) + \boldsymbol{s}^\theta_{t_n + g_n(\tau)} \left(\widehat{\boldsymbol{y}}_{t_{n-1} ,\tau_{n-1,M}}^{0}(\omega)\right) \right] \mathrm{d}\tau + \mathrm{d}\boldsymbol{\mathit{w}}_{t_n + \tau}(\omega),\]
we can claim $\widehat{\boldsymbol{y}}_{t_n , \tau}^{0} (\omega)$ is $\mathcal{F}_{t_n+\tau}$-adapted.
Assume $\boldsymbol{y}_{t_n,\tau}$ is $\mathcal{F}_{t_n+\tau}$-adapted, by $g_n(\tau) \leq \tau$, the It\^o integral $\int_0^\tau \boldsymbol{s}^\theta_{t_n + g_n(\tau')}(\boldsymbol{y}_{t_n,g_n(\tau')}) d\tau'$ is  well-defined and $\mathcal{F}_{t_n+\tau}$-adapted. 
Therefore SDE
\[\mathrm{d} {\boldsymbol{y}}'_{t_n , \tau} (\omega)= \left[\frac{1}{2} {\boldsymbol{y}}'_{t_n ,\tau}(\omega) + \boldsymbol{s}^\theta_{t_n + g_n(\tau)} \left({\boldsymbol{y}}_{t_n ,g_n(\tau)}(\omega)\right) \right] \mathrm{d}\tau + \mathrm{d}\boldsymbol{\mathit{w}}_{t_n + \tau}(\omega)\]
has a unique strong solution $({\boldsymbol{y}}'_{t_n , \tau} (\omega))_{\tau \in [0,h_n]}$ that is also $\mathcal{F}_{t_n+\tau}$-adapted. The lemma is established through induction.
\end{proof}

Finally, the following lemma shows the equivalence of our update rule and the auxiliary process, \emph{i.e.}, the auxiliary process is an interpotation of the discrete points.
\begin{lemma}
For any $ n = 0,\ldots, N-1$, the update rule (\eqref{eqn: init exponential integrator under picard iteration in algo}) in Algorithm \ref{alg:diffusion} and the update rule (\eqref{eqn: exponential integrator under picard iteration in algo1} or \eqref{eqn: exponential integrator under picard iteration in algo2}) are equivalent to the exact solution of the auxiliary process \eqref{eq:auxiliary3}, and \eqref{eq:auxiliary1} respectively, for any $j=1,\ldots,J$, and $\tau \in [0, h_n]$.
\end{lemma}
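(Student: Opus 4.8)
The plan is to recognize each update rule as the exact variation-of-constants solution of the corresponding linear auxiliary SDE, read off at the grid nodes $\tau=\tau_{n,m}$, and to verify this by induction on the pair $(n,j)$ in the order in which Algorithm~\ref{alg:diffusion} computes the iterates. Lemma~\ref{lmm:adapt} guarantees that the auxiliary processes are well-defined $\mathcal{F}$-adapted strong solutions, and the base cases $\widehat{\boldsymbol{y}}^{j}_{t_{-1},M}=\widehat{\boldsymbol{y}}_0$ are immediate, so whenever we treat a given $(n,j)$ the left boundary condition $\widehat{\boldsymbol{y}}^{j}_{t_{n-1},\tau_{n-1,M_{n-1}}}$ and the inner argument $\widehat{\boldsymbol{y}}^{j-1}_{t_n,\tau_{n,m'}}$ (resp.\ the frozen point $\widehat{\boldsymbol{y}}^{0}_{t_{n-1},\tau_{n-1,M}}$ for the initialization process) have already been identified with their discrete counterparts.

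First I would fix $n$ and $j\ge 1$ and observe that \eqref{eq:auxiliary1} is an Ornstein--Uhlenbeck-type linear SDE in $\widehat{\boldsymbol{y}}^{j}_{t_n,\tau}$ over $\tau\in[0,h_n]$ whose only nonlinear input, the score term $\boldsymbol{s}^\theta_{t_n+g_n(\tau)}(\widehat{\boldsymbol{y}}^{j-1}_{t_n,g_n(\tau)})$, is piecewise constant in $\tau$ along the grid, since $g_n(s)=\tau_{n,m'}$ for $s\in[\tau_{n,m'},\tau_{n,m'+1})$. Multiplying by the integrating factor $e^{-\tau/2}$ (equivalently, applying Duhamel's formula to the linear drift $\tfrac12\widehat{\boldsymbol{y}}$) gives
\[
\widehat{\boldsymbol{y}}^{j}_{t_n,\tau}
= e^{\tau/2}\widehat{\boldsymbol{y}}^{j}_{t_n,0}
+ \int_0^\tau e^{(\tau-s)/2}\,\boldsymbol{s}^\theta_{t_n+g_n(s)}\big(\widehat{\boldsymbol{y}}^{j-1}_{t_n,g_n(s)}\big)\,\mathrm{d}s
+ \int_0^\tau e^{(\tau-s)/2}\,\mathrm{d}\boldsymbol{w}_{t_n+s},
\]
whose initial value $\widehat{\boldsymbol{y}}^{j}_{t_n,0}=\widehat{\boldsymbol{y}}^{j}_{t_{n-1},\tau_{n-1,M_{n-1}}}$ already matches the assignment $\widehat{\boldsymbol{y}}^{j}_{n,0}=\widehat{\boldsymbol{y}}^{j}_{n-1,M_n}$ in Algorithm~\ref{alg:diffusion}. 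I would then set $\tau=\tau_{n,m}$ and split both integrals over the sub-intervals $[\tau_{n,m'},\tau_{n,m'+1}]$, $m'=0,\dots,m-1$, the sum closing exactly because $\tau_{n,m}$ is itself a grid node. On each sub-interval the score term is the constant vector $\boldsymbol{s}^\theta_{t_n+\tau_{n,m'}}(\widehat{\boldsymbol{y}}^{j-1}_{t_n,\tau_{n,m'}})$ and factors out, leaving the elementary deterministic integral $\int_{\tau_{n,m'}}^{\tau_{n,m'+1}} e^{(\tau_{n,m}-s)/2}\,\mathrm{d}s$, which a direct computation reproduces, up to the common factor $e^{(\tau_{n,m}-\tau_{n,m'+1})/2}$, as the coefficient multiplying $\boldsymbol{s}^\theta_{t_n+\tau_{n,m'}}$ in Algorithm~\ref{alg:diffusion}. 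For the stochastic part I would use the factorization $e^{(\tau_{n,m}-s)/2}=e^{(\tau_{n,m}-\tau_{n,m'+1})/2}\,e^{(\tau_{n,m'+1}-s)/2}$ to write the $m'$-th block as $e^{(\tau_{n,m}-\tau_{n,m'+1})/2}\boldsymbol{\zeta}_{n,m'}$ with $\boldsymbol{\zeta}_{n,m'}:=\int_{\tau_{n,m'}}^{\tau_{n,m'+1}} e^{(\tau_{n,m'+1}-s)/2}\,\mathrm{d}\boldsymbol{w}_{t_n+s}$; by It\^o's isometry $\boldsymbol{\zeta}_{n,m'}$ is a centered Gaussian with covariance $\big(\int_{\tau_{n,m'}}^{\tau_{n,m'+1}} e^{\tau_{n,m'+1}-s}\,\mathrm{d}s\big)\boldsymbol{I}_d$, a law that does not depend on $m$, and the $\boldsymbol{\zeta}_{n,m'}$ are mutually independent as integrals over disjoint intervals, so they may be written $\boldsymbol{\zeta}_{n,m'}=\sqrt{e^{\epsilon_{n,m'}}-1}\,\boldsymbol{\xi}_{n,m'}$ with i.i.d.\ $\boldsymbol{\xi}_{n,m'}\sim\mathcal{N}(0,\boldsymbol{I}_d)$, exactly the noises fixed in Lines~1--3. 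Collecting the three contributions reproduces \eqref{eqn: exponential integrator under picard iteration in algo1} (and identically \eqref{eqn: exponential integrator under picard iteration in algo2}); the case $j=0$ is the same argument applied to \eqref{eq:auxiliary3}, the only difference being that the forcing term is frozen at the single point $\widehat{\boldsymbol{y}}^{0}_{t_{n-1},\tau_{n-1,M}}$ for all $\tau$, which yields \eqref{eqn: init exponential integrator under picard iteration in algo}.

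The part that will need the most care is this last stochastic-integral identification: I must check that the $\tau_{n,m}$-dependent scalar genuinely factors out so that the residual Gaussian on each block depends only on the sub-interval (not on the node $\tau_{n,m}$ being reconstructed), and that disjointness of the sub-intervals supplies exactly the independence needed to match the pre-sampled i.i.d.\ $\boldsymbol{\xi}_{n,m'}$. Everything else is elementary — closed-form solution of a linear SDE, bookkeeping of the piecewise structure of $g_n$, and the induction that propagates the identification across time slices and Picard depths — so I do not anticipate any further obstacle.
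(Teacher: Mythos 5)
Your proposal is correct and follows essentially the same route as the paper's proof: multiply by the integrating factor $e^{-\tau/2}$, integrate, exploit that the score input is piecewise constant along the grid so the drift integral collapses to a sum, evaluate at $\tau=\tau_{n,m}$, and identify the per-subinterval stochastic integrals with the pre-sampled Gaussians via It\^o isometry. The only difference is that you spell out the independence and variance computation for the noise blocks (and the induction on $(n,j)$ for the boundary conditions), which the paper compresses into a single sentence.
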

\begin{proof}
Due to the similarity, we only prove the equivalence of the update rule (\eqref{eqn: init exponential integrator under picard iteration in algo}). The dependency on $\omega$ will be omitted in the proof below.

For SDE \eqref{eq:auxiliary1}, by multiplying $e^{-\frac{\tau}{2}}$ on both sides then integrating on both side from $0$ to $\tau$, we have 
\[e^{-\frac{\tau}{2}}\widehat{\boldsymbol{y}}_{t_n , \tau}^{j} - \widehat{\boldsymbol{y}}_{t_n , 0}^{j} = \sum\limits_{m=0}^{M_n}2\left(e^{-\frac{\tau \wedge \tau_{n,m}}{2}} - e^{-\frac{\tau \wedge \tau_{n,m+1}}{2}} \right)\boldsymbol{s}^\theta_{t_n+\tau_{n,m}}\left( \widehat{\boldsymbol{y}}_{t_n , \tau_{n,m}}^{j-1}\right) + \int_0^\tau e^{-\frac{\tau'}{2}}\mathrm{d}\boldsymbol{w}_{t_n+\tau'}.\]
Thus 
\begin{align*}
\widehat{\boldsymbol{y}}_{t_n , \tau}^{j}   ~=~&  e^{\frac{\tau}{2}}\widehat{\boldsymbol{y}}_{t_n , 0}^{j} +  \sum\limits_{m=0}^{M_n}2\left(e^{-\frac{\tau \wedge \tau_{n,m} - \tau \wedge \tau_{n,m+1}}{2}} - 1 \right)e^{\frac{0 \vee (\tau- \tau_{n,m+1})}{2}}\boldsymbol{s}^\theta_{t_n+\tau_{n,m}}\left( \widehat{\boldsymbol{y}}_{t_n , \tau_{n,m}}^{j-1}\right) \\
&+ \sum\limits_{m=0}^{M_n}\int_{\tau\wedge \tau_{n,m}}^{\tau\wedge \tau_{n,m+1}} e^{\frac{\tau - \tau'}{2}}\mathrm{d}\boldsymbol{w}_{t_n+\tau'}.    
\end{align*}
By It\^o isometry and let $\tau = \tau_{n,m}$ we get the desired result.
\end{proof}

\subsubsection{Decomposition of $\mathsf{KL}$ Divergence}

Similar as the analysis in Section B.2 of \citet{chen2024accelerating},  we conclude the following lemma by Corollary \ref{coro:gir}.
\begin{lemma}
\label{app:c.4}
Assume $\boldsymbol{\delta}_{t_n}(\tau,\omega)  =
\boldsymbol{s}^\theta_{t_n+g_n(\tau)}(\widehat{\boldsymbol{y}}_{t_n , g_n(\tau)}^{J-1} (\omega)) - \nabla \log \cev{p}_{t_n+\tau}(\widehat{\boldsymbol{y}}_{t_n , \tau}^{J} (\omega))$. Then we have the following one-step decomposition,
\[\mathsf{KL}(\cev{p}_{t_{n+1}}\| \widehat{q}_{t_{n+1}})\leq \mathsf{KL}(\cev{p}_{t_{n}}\| \widehat{q}_{t_{n}}) + \mathbb{E}_{\omega\sim q|_{\mathcal{F}_{t_n}}}\left[\frac{1}{2}\int_0^{h_n}\left\|\boldsymbol{\delta}_{t_n}(\tau, \omega) \right\|^2 \mathrm{d}\tau\right].\]
\end{lemma}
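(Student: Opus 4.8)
The plan is to apply Girsanov's theorem, in the packaged form of Corollary~\ref{coro:gir}, restricted to the single time slice $[t_n,t_{n+1}]$ (reparametrized by $\tau\in[0,h_n]$), to the two processes compared there: the true backward process \eqref{eq:back1} and the $J$-th interpolation process \eqref{eq:auxiliary1}. I take $\boldsymbol\Sigma\equiv\boldsymbol I_d$ and let the tracked It\^o process $\boldsymbol z_\tau$ be $\widehat{\boldsymbol y}^{J}_{t_n,\tau}$, which by construction has drift $\boldsymbol\alpha(\tau,\omega)=\tfrac12\widehat{\boldsymbol y}^{J}_{t_n,\tau}+\boldsymbol s^\theta_{t_n+g_n(\tau)}\big(\widehat{\boldsymbol y}^{J-1}_{t_n,g_n(\tau)}\big)$; the target drift is the true backward drift $\boldsymbol\beta(\tau,\omega)=\tfrac12\widehat{\boldsymbol y}^{J}_{t_n,\tau}+\nabla\log\cev p_{t_n+\tau}\big(\widehat{\boldsymbol y}^{J}_{t_n,\tau}\big)$. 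Since $\boldsymbol\Sigma=\boldsymbol I_d$, the equation $\boldsymbol\Sigma\boldsymbol\delta=\boldsymbol\alpha-\boldsymbol\beta$ is solved by exactly $\boldsymbol\delta=\boldsymbol\delta_{t_n}(\tau,\omega)$ from the statement, the $\tfrac12\boldsymbol z_\tau$ terms cancelling. Corollary~\ref{coro:gir} then yields the path-space identity
\[\mathsf{KL}(\cev p_{t_n:t_{n+1}}\,\|\,\widehat q_{t_n:t_{n+1}})=\mathsf{KL}(\cev p_{t_n}\,\|\,\widehat q_{t_n})+\mathbb E\Big[\tfrac12\textstyle\int_0^{h_n}\big\|\boldsymbol\delta_{t_n}(\tau,\omega)\big\|^2\mathrm d\tau\Big],\]
and the claimed one-step bound follows by applying the data-processing inequality $\mathsf{KL}(\cev p_{t_{n+1}}\|\widehat q_{t_{n+1}})\le\mathsf{KL}(\cev p_{t_n:t_{n+1}}\|\widehat q_{t_n:t_{n+1}})$ under the evaluation map at time $t_{n+1}$.

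To legitimately invoke Corollary~\ref{coro:gir} (hence Theorem~\ref{the:gir}) on the slice, I need to check its hypotheses. Adaptedness of the interpolation SDE and existence/uniqueness of its strong solution are exactly Lemma~\ref{lmm:adapt}, and the preceding equivalence lemma confirms that its grid values agree with the algorithm's iterates, so that $\widehat q_{t_{n+1}}$ really is the marginal law of $\widehat{\boldsymbol y}^{J}_{t_n,h_n}$. It remains to verify that $\boldsymbol\alpha,\boldsymbol\beta,\boldsymbol\delta$ lie in the class $\mathcal V$ and that the exponential process $M_\tau=\exp\!\big(-\int_0^\tau\boldsymbol\delta^\top\mathrm d\boldsymbol w-\tfrac12\int_0^\tau\|\boldsymbol\delta\|^2\big)$ is a genuine (not merely local) martingale. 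Here $\boldsymbol s^\theta_{t_n+g_n(\tau)}(\cdot)$ is uniformly bounded by $M_{\boldsymbol s}$ via Assumption~\ref{ass:3}, so the only unbounded contribution is $\nabla\log\cev p_{t_n+\tau}$; using the Ornstein--Uhlenbeck structure together with Assumption~\ref{ass:2}, this has at most linear growth with coefficients bounded away from the data end, so along the Gaussian-tailed true backward trajectory it possesses all exponential moments. That gives a Novikov-type bound and hence the martingale property, exactly as in the analogous step of \cite{chen2024accelerating}.

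I expect the martingale/integrability verification to be the principal obstacle, since $\nabla\log\cev p_t$ is unbounded and one must control $\mathbb E\exp\!\big(c\int_0^{h_n}\|\boldsymbol\delta_{t_n}\|^2\big)$; in practice this reduces to the standard regularity estimates for the backward score under the normalized OU process (Assumptions~\ref{ass:2} and~\ref{ass:3}), which are precisely the ingredients used in \cite{chen2024accelerating}. Everything else --- the algebraic identification $\boldsymbol\delta=\boldsymbol\delta_{t_n}$, the cancellation of the linear drift terms, and the final data-processing step --- is routine once the framework is in place.
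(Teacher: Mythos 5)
Your argument is exactly the paper's: the lemma is obtained by applying Girsanov's theorem in the packaged form of Corollary~\ref{coro:gir} on the slice $[t_n,t_{n+1}]$ with $\boldsymbol\Sigma=\boldsymbol I_d$ and $\boldsymbol\delta=\boldsymbol\delta_{t_n}$, followed by data processing from the path-measure KL to the marginal, with the adaptedness supplied by Lemma~\ref{lmm:adapt} and the integrability handled as in \cite{chen2024accelerating}. Your write-up is in fact more explicit than the paper's one-line derivation; no discrepancy.
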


Now, the problem remaining is reduced to bound the following discrepancy,
\begin{align}
\label{eq:dec}
  &\int_0^{h_n} \|\boldsymbol{\delta}_{t_n} ( \tau, \omega) \|^2 \mathrm{d}\tau \notag\\
  ~=~&\int_0^{h_n} \left\|\boldsymbol{s}^\theta_{t_n+g_n(\tau)}(\widehat{\boldsymbol{y}}_{t_n , g_n(\tau)}^{J-1} (\omega)) - \nabla \log \cev{p}_{t_n+\tau}(\widehat{\boldsymbol{y}}_{t_n , \tau}^{J} (\omega)) \right\|^2 \mathrm{d}\tau\notag\\
  ~\leq~&3\left( \underbrace{\int_0^{h_n}\left\| \nabla \log \cev{p}_{t_n+g_n(\tau)}(\widehat{\boldsymbol{y}}_{t_n , g_n(\tau)}^{J} (\omega)) - \nabla \log \cev{p}_{t_n+\tau}(\widehat{\boldsymbol{y}}_{t_n , \tau}^{J} (\omega)) \right\|^2 \mathrm{d}\tau}_{:= A_{t_n}(\omega)}\right.\notag\\ 
  & +  \underbrace{\int_0^{h_n}\left\|  \boldsymbol{s}^\theta_{t_n+g_n(\tau)}(\widehat{\boldsymbol{y}}_{t_n , g_n(\tau)}^{J} (\omega)) - \nabla \log \cev{p}_{t_n+g_n(\tau)}(\widehat{\boldsymbol{y}}_{t_n , g_n(\tau)}^{J} (\omega))  \right\|^2 \mathrm{d}\tau}_{:= B_{t_n}(\omega)}\notag\\ 
  & +  \left.\underbrace{\int_0^{h_n}\left\| \boldsymbol{s}^\theta_{t_n+g_n(\tau)}(\widehat{\boldsymbol{y}}_{t_n , g_n(\tau)}^{J} (\omega)) -\boldsymbol{s}^\theta_{t_n+g_n(\tau)}(\widehat{\boldsymbol{y}}_{t_n , g_n(\tau)}^{J-1} (\omega))  \right\|^2 \mathrm{d}\tau}_{:= C_{t_n}(\omega)}\right),
\end{align}
where $A_{t_n}(\omega)$ measures the discretization error, $B_{t_n}(\omega)$ measures the estimation error of score function, and $C_{t_n}(\omega)$ measures the error by Picard iteration.

\subsubsection{Discretization Error and Estimation error of Score function in Every Time Slice}
The following lemma from \citet{benton2024nearly,chen2024accelerating} bounds the expectation of the discretization error $A_{t_n}$.
\begin{lemma}[\bf{Discretization error~\citep[Section 3.1]{benton2024nearly} or \citep[Lemma B.7]{chen2024accelerating}}]
\label{lmm:A}
We have for $ n \in [0: N - 2]$
\[\mathbb{E}_{\omega \sim \cev{p}|_{\mathcal{F}_{t_n}}} \left[ A_{t_{n}}(\omega) \right] \lesssim \epsilon dh_n,
\]
and
\[\mathbb{E}_{\omega \sim \cev{p}|_{\mathcal{F}_{t_n}}} \left[ A_{t_{N-1}}(\omega) \right] \lesssim \epsilon d\log \eta^{-1},\]
where $\eta$ is the parameter for early stopping.
\end{lemma}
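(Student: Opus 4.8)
The plan is to reduce the left-hand side to the standard single-slice discretization error for the reverse SDE and then import the estimates of \cite{benton2024nearly,chen2024accelerating}. First, the law of the interpolation process must be identified. By the equivalence lemma above, $\widehat{\boldsymbol y}^{J}_{t_n,\tau}$ is the strong solution of the auxiliary SDE~\eqref{eq:auxiliary1}, which by Lemma~\ref{lmm:adapt} is $\mathcal F_{t_n+\tau}$-adapted; hence Corollary~\ref{coro:gir} applies with $\boldsymbol\alpha$ the drift of \eqref{eq:auxiliary1} and $\boldsymbol\beta(\tau,y)=\tfrac12 y+\nabla\log\cev p_{t_n+\tau}(y)$ the true backward drift, so that under $\omega\sim\cev p|_{\mathcal F_{t_n}}$ the process $(\widehat{\boldsymbol y}^{J}_{t_n,\tau})_{\tau\in[0,h_n]}$ has the same law as the genuine backward process $(\cev{\boldsymbol x}_{t_n+\tau})_{\tau\in[0,h_n]}$ continued from $\cev{\boldsymbol x}_{t_n}$. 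Consequently
\[
\mathbb E_{\cev p|_{\mathcal F_{t_n}}}\!\left[A_{t_n}\right]
=\int_0^{h_n}\mathbb E\!\left[\left\|\nabla\log\cev p_{t_n+g_n(\tau)}(\cev{\boldsymbol x}_{t_n+g_n(\tau)})-\nabla\log\cev p_{t_n+\tau}(\cev{\boldsymbol x}_{t_n+\tau})\right\|^2\right]\mathrm d\tau ,
\]
which is exactly the one-slice version of the discretization error bounded in \cite[Section~3.1]{benton2024nearly} and \cite[Lemma~B.7]{chen2024accelerating}.

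Next, on each grid cell $\tau\in[\tau_{n,m},\tau_{n,m+1}]$ one has $g_n(\tau)=\tau_{n,m}$ and $\tau-\tau_{n,m}\le\epsilon_{n,m}$. Writing the difference of scores as a time increment plus a space increment along $\cev{\boldsymbol x}$ over $[t_n+\tau_{n,m},t_n+\tau]$ and using that the Brownian part of that increment has covariance of order $(\tau-\tau_{n,m})\boldsymbol I_d$, the cell integrand is $\lesssim(\tau-\tau_{n,m})\,\mathbb E\|\nabla^2\log\cev p_{t_n+\tau_{n,m}}(\cev{\boldsymbol x}_{t_n+\tau_{n,m}})\|_{\mathrm F}^2+(\tau-\tau_{n,m})^2(\cdots)$; the two ingredients are the normalized moment bound $\mathbb E\|\cev{\boldsymbol x}_s\|^2\lesssim d$, which follows from Assumption~\ref{ass:2} together with contractivity of the OU semigroup, and the averaged score-Hessian bound $\mathbb E\|\nabla^2\log\cev p_s(\cev{\boldsymbol x}_s)\|_{\mathrm F}^2\lesssim d/r_s^2$ with $r_s=T-s$ the time to the data end (recalled from \cite{benton2024nearly,chen2024accelerating}; here $\cev p_s$ is the forward OU marginal at elapsed time $r_s$). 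Integrating over the cell yields a contribution $\lesssim\epsilon_{n,m}^2\,d/r^2$, with $r$ the cell's time-to-data.

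For $n\le N-2$ the whole slice lies at $r\ge h_{N-1}=\Omega(1)$ and each of the $M=h_n/\epsilon$ cells has $\epsilon_{n,m}=\epsilon$, so summing gives $\lesssim (h_n/\epsilon)\cdot\epsilon^2 d=\epsilon d h_n$. For $n=N-1$ the schedule gives $\epsilon_{N-1,m}\lesssim\epsilon\,r$ at time-to-data $r$, so each cell contributes $\lesssim(\epsilon r)^2 d/r^2=\epsilon^2 d$, and since $r$ shrinks geometrically from $\Omega(1)$ down to $\eta$ there are $\lesssim\epsilon^{-1}\log\eta^{-1}$ cells, giving $\lesssim\epsilon d\log\eta^{-1}$. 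The part I expect to be the real obstacle is the first step: one must check that the Girsanov change of measure still yields the \emph{true} backward law for $\widehat{\boldsymbol y}^{J}$ on each slice in spite of the diagonal Picard structure, where the start point $\widehat{\boldsymbol y}^{J}_{t_n,0}=\widehat{\boldsymbol y}^{J}_{t_{n-1},\tau_{n-1,M_{n-1}}}$ is itself the output of an earlier update; once Lemma~\ref{lmm:adapt} and the equivalence lemma secure this, the remaining estimates are the standard second-moment and score-Hessian bounds of \cite{benton2024nearly,chen2024accelerating}, applied essentially verbatim.
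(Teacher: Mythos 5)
Your proposal is correct and follows the same route the paper relies on: the paper gives no proof of this lemma beyond citing \cite[Section 3.1]{benton2024nearly} and \cite[Lemma B.7]{chen2024accelerating}, and your reconstruction (Girsanov/adaptedness to identify the law of $\widehat{\boldsymbol y}^{J}_{t_n,\cdot}$ under $\cev p|_{\mathcal F_{t_n}}$ with the true backward process, then the standard score-increment estimate via the averaged Hessian bound $\lesssim d/r^2$ and the uniform versus exponentially decaying step schedules) is exactly the cited argument. The one genuinely new ingredient in this paper's setting --- that the diagonal Picard structure does not break the identification with the true backward law --- is the point you correctly isolate, and it is secured by Lemma \ref{lmm:adapt} and the equivalence lemma in the same way the paper uses them for the $B_{t_n}$ term.
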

    

\noindent The following lemma from \citet{chen2024accelerating} bounds the expectation of the estimation error of score function, $B_{t_n}$, and we restate the proof for the convenience.
\begin{lemma}[\bf{Estimation error of score function~\citep[Section B.3]{chen2024accelerating}}]
\label{lmm:B}
$\sum\limits_{n=0}^{N-1}\mathbb{E}_{\omega\sim \cev{p}|_{\mathcal{F}_{t_n}}}[B_{t_n}]\leq \delta^2_2$.  
\end{lemma}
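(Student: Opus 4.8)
The plan is to collapse the $\tau$-integral defining $B_{t_n}$ into a sum over grid points, observe that along the true backward path the interpolation process $\widehat{\boldsymbol{y}}^{J}$ is distributed exactly as the reverse diffusion, and then quote Assumption \ref{ass:1} verbatim. Concretely, \textbf{Step 1 (reduce to a grid sum):} the integrand of $B_{t_n}(\omega)$ depends on $\tau$ only through $g_n(\tau)$, which is piecewise constant and equals $\tau_{n,m}$ on $[\tau_{n,m},\tau_{n,m+1})$, an interval of length $\epsilon_{n,m}$. Hence, for every path $\omega$,
\[B_{t_n}(\omega) = \sum_{m=0}^{M_n-1}\epsilon_{n,m}\left\|\boldsymbol{s}^\theta_{t_n+\tau_{n,m}}\big(\widehat{\boldsymbol{y}}^{J}_{t_n,\tau_{n,m}}(\omega)\big) - \nabla\log\cev{p}_{t_n+\tau_{n,m}}\big(\widehat{\boldsymbol{y}}^{J}_{t_n,\tau_{n,m}}(\omega)\big)\right\|^2,\]
with the same identity in the last slice using the early-stopped step sizes $\epsilon_{N-1,m}$. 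Taking $\mathbb{E}_{\omega\sim\cev{p}|_{\mathcal{F}_{t_n}}}$, using the tower property to pass to the full expectation under $\cev{p}$, and summing over $n$, the claim reduces to bounding
\[\mathbb{E}_{\cev{p}}\!\left[\sum_{n=0}^{N-1}\sum_{m=0}^{M_n-1}\epsilon_{n,m}\left\|\boldsymbol{s}^\theta_{t_n+\tau_{n,m}}\big(\widehat{\boldsymbol{y}}^{J}_{t_n,\tau_{n,m}}\big) - \nabla\log\cev{p}_{t_n+\tau_{n,m}}\big(\widehat{\boldsymbol{y}}^{J}_{t_n,\tau_{n,m}}\big)\right\|^2\right].\]

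\textbf{Step 2 (identify the law of $\widehat{\boldsymbol{y}}^{J}$ under $\cev{p}$).} The step I would lean on is that the Girsanov construction underlying the interpolation processes \eqref{eq:auxiliary1}--\eqref{eq:auxiliary3} makes $\widehat{\boldsymbol{y}}^{J}$ coincide, \emph{in law}, with the exact reverse diffusion once one changes measure from $\widehat{q}$ to $\cev{p}$: on slice $n$, under $\cev{p}$ the process $(\widehat{\boldsymbol{y}}^{J}_{t_n,\tau})_{\tau\in[0,h_n]}$ solves \eqref{eq:back1} started from $\cev{p}_{t_n}$ — this is exactly the content of Corollary \ref{coro:gir}, whose drift $\boldsymbol{\beta}$ in \eqref{eq:gir2} is the genuine score-drift evaluated at $\widehat{\boldsymbol{y}}^{J}_{t_n,\tau}$ itself. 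In particular $\widehat{\boldsymbol{y}}^{J}_{t_n,\tau_{n,m}}\stackrel{d}{=}\cev{\boldsymbol{x}}_{t_n+\tau_{n,m}}$ under $\cev{p}$ for all $n,m$. Substituting this equality of laws into the display of Step 1 and recombining the double sum,
\[\sum_{n=0}^{N-1}\mathbb{E}_{\omega\sim\cev{p}|_{\mathcal{F}_{t_n}}}[B_{t_n}] = \mathbb{E}_{\cev{p}}\!\left[\sum_{n=0}^{N-1}\sum_{m=0}^{M_n-1}\epsilon_{n,m}\left\|\boldsymbol{s}^\theta_{t_n+\tau_{n,m}}(\cev{\boldsymbol{x}}_{t_n+\tau_{n,m}}) - \nabla\log\cev{p}_{t_n+\tau_{n,m}}(\cev{\boldsymbol{x}}_{t_n+\tau_{n,m}})\right\|^2\right]\le \delta_2^2,\]
the last inequality being precisely Assumption \ref{ass:1}, which completes the proof.

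\textbf{Main obstacle.} All of the content sits in Step 2: one must justify that the per-slice Girsanov change of measure really turns $\widehat{\boldsymbol{y}}^{J}$ into the exact reverse SDE. This is exactly why the decomposition \eqref{eq:dec} peels off beforehand both the drift mismatch $A_{t_n}$ (score evaluated at the grid point $g_n(\tau)$ versus at $\tau$) and the Picard mismatch $C_{t_n}$ (iterate $J{-}1$ versus $J$), leaving $B_{t_n}$ with $\widehat{\boldsymbol{y}}^{J}$ in \emph{both} arguments — the only form for which the equality of laws applies cleanly; and one must check that the start-point bookkeeping $\widehat{\boldsymbol{y}}^{J}_{t_n,0}\sim\cev{p}_{t_n}$ propagates correctly through the conditioning on $\mathcal{F}_{t_n}$ as the slices are glued. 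Step 1 and the final appeal to Assumption \ref{ass:1} are then routine, since $g_n$ being piecewise constant is immediate from its definition $g_n(\tau)=\lfloor\tau/\epsilon\rfloor\epsilon$ (and its early-stopping analogue).
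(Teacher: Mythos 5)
Your proposal is correct and follows essentially the same route as the paper's proof: collapse the integral over $\tau$ into the weighted grid sum via the piecewise-constant $g_n$, use that $\widehat{\boldsymbol{y}}^{J}$ has the law of the true backward process $\cev{\boldsymbol{x}}$ under $\cev{p}$, and invoke Assumption \ref{ass:1}. The "main obstacle" you flag is precisely the fact the paper also leans on (stated there as "$\widehat{\boldsymbol{y}}^{J}_{t_n,\tau}$ follows the backward SDE with the true score function under the measure $\cev{p}$"), and your identification of why $B_{t_n}$ must carry $\widehat{\boldsymbol{y}}^{J}$ in both arguments matches the paper's decomposition.
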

\begin{proof}
By Assumption \ref{ass:1} and the the fact that the process $\widehat{\boldsymbol{y}}_{t_n , \tau}^{J} (\omega)$ follows the backward SDE with the true score function under the measure $\cev{p}$, we have 
\begin{align*}
    &\sum_{n=1}^{N-1} \mathbb{E}_{\omega\sim \cev{p}|_{\mathcal{F}_{t_n}}}\left[ B_{t_n}(\omega) \right]  \\
    ~\leq~& \mathbb{E}_{\omega\sim \cev{p}|_{\mathcal{F}_{t_n}}} \left[ \sum_{n=1}^{N-1} \int_0^{h_n}\left\|  \boldsymbol{s}^\theta_{t_n+\tau}(\widehat{\boldsymbol{y}}_{t_n , \tau}^{J} (\omega)) - \nabla \log \cev{p}_{t_n+g_n(\tau)}(\widehat{\boldsymbol{y}}_{t_n , \tau}^{J} (\omega))  \right\|^2 \mathrm{d}\tau \right] \\
    ~=~& \mathbb{E}_{\omega\sim \cev{p}|_{\mathcal{F}_{t_n}}} \left[ \sum_{n=1}^{N-1} \sum\limits_{m=0}^{M_n}\epsilon_{n,m}\left\|  \boldsymbol{s}^\theta_{t_n+\tau}(\widehat{\boldsymbol{y}}_{t_n , \tau}^{J} (\omega)) - \nabla \log \cev{p}_{t_n+g_n(\tau)}(\widehat{\boldsymbol{y}}_{t_n , \tau}^{J} (\omega))  \right\|^2 \mathrm{d}\tau \right] \\
    ~=~& \mathbb{E}_{\omega\sim \cev{p}|_{\mathcal{F}_{t_n}}} \left[ \sum_{n=0}^{N-1} \sum\limits_{m=0}^{M_n}\epsilon_{n,m}\left\|  \boldsymbol{s}^\theta_{t_n+\tau}(\cev{\boldsymbol{x}}_{t_n +\tau} (\omega)) - \nabla \log \cev{p}_{t_n+g_n(\tau)}(\cev{\boldsymbol{x}}_{t_n +\tau} (\omega))  \right\|^2 \mathrm{d}\tau \right]\\
    ~\leq~& \delta^2_2.
\end{align*}
\end{proof}

\subsubsection{Analysis for Initialization}
By setting the depth of iteration as $J = 1$ in \citet{chen2024accelerating}, our initialization parts (Lines 4-7 in Algorithm \ref{alg:diffusion}) and the initialization process (\eqref{eq:auxiliary3}) are identical to the Algorithm 1 and the the auxiliary process (Definition B.1) in \citet{chen2024accelerating}.
We provide a brief overview of their analysis and reformulate it to align with our initialization.
Let 
\[A^0_{t_n}(\omega) := \int_0^{h_n}\left\| \nabla \log \cev{p}_{t_n+g_n(\tau)}(\widehat{\boldsymbol{y}}_{t_n , g_n(\tau)}^{0} (\omega)) - \nabla \log \cev{p}_{t_n+\tau}(\widehat{\boldsymbol{y}}_{t_n , \tau}^{0} (\omega)) \right\|^2 \mathrm{d}\tau\]
and 
\[B^0_{t_n}(\omega) := \int_0^{h_n}\left\|  \boldsymbol{s}^\theta_{t_n+g_n(\tau)}(\widehat{\boldsymbol{y}}_{t_n , g_n(\tau)}^{0} (\omega)) - \nabla \log \cev{p}_{t_n+g_n(\tau)}(\widehat{\boldsymbol{y}}_{t_n , g_n(\tau)}^{0} (\omega))  \right\|^2 \mathrm{d}\tau\]
\begin{lemma}[\bf{Lemma B.5 or Lemma B.6 with $K=1$ in \citet{chen2024accelerating}}]
\label{lmm:init}
For any $n =0,\ldots,N-1$, suppose the initialization $\widehat{\boldsymbol{y}}_{t_n , 0}^{0}$  follows the distribution of $\cev{x}_{t_n}\sim \cev{p}_{t_n}$, if  $3e^{\frac{7}{2}h_n}h_nL_{\boldsymbol{s}}<0.5$, then the following estimate 
\begin{align*}
\sup\limits_{\tau\in [0,h_n]}\mathbb{E}_{\omega \sim \cev{p}| \mathcal{F}_{t_n}}\left[ \left\| \widehat{\boldsymbol{y}}_{t_n , \tau}^{0} (\omega)  - \widehat{\boldsymbol{y}}_{t_{n} ,0}^{0}(\omega) \right\|^2\right]~\leq~& 2 h_n e^{\frac{7}{2}h_n}(M_{\boldsymbol{s}} + 2d)\\
&+ 6e^{\frac{7}{2}h_n}\mathbb{E}_{\omega \sim \cev{p}| \mathcal{F}_{t_n}}\left[ A^0_{t_n}(\omega) + B^0_{t_n}(\omega) \right].
\end{align*}
\end{lemma}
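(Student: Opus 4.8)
The plan is to turn the defining SDE \eqref{eq:auxiliary3} of the initialization process into a closed integral inequality for
\[
\Phi(\tau)\ :=\ \sup_{s\le\tau}\,\mathbb{E}_{\omega\sim\cev{p}|_{\mathcal{F}_{t_n}}}\!\left[\big\|\widehat{\boldsymbol{y}}^{0}_{t_n,s}(\omega)-\widehat{\boldsymbol{y}}^{0}_{t_n,0}(\omega)\big\|^2\right]
\]
and then close it with the Gronwall inequality (Theorem \ref{the:gron}). First I would write the increment of $\widehat{\boldsymbol{y}}^{0}_{t_n,\cdot}$ in integral form and invoke Corollary \ref{coro:gir} (and Theorem \ref{the:gir}), which says that under $\cev{p}$ the process $\widehat{\boldsymbol{y}}^{0}_{t_n,\cdot}$ solves the exact backward SDE, exactly as in the one-step $\mathsf{KL}$ decomposition established above; this lets me replace the reference Brownian increment by a $\cev{p}$-Brownian increment $\widetilde{\boldsymbol{w}}$ minus the drift discrepancy, giving
\[
\widehat{\boldsymbol{y}}^{0}_{t_n,\tau}-\widehat{\boldsymbol{y}}^{0}_{t_n,0}
=\int_0^\tau\!\Big[\tfrac12\widehat{\boldsymbol{y}}^{0}_{t_n,s}+\boldsymbol{s}^\theta_{t_n+g_n(s)}\big(\widehat{\boldsymbol{y}}^{0}_{t_n,0}\big)-\boldsymbol{\delta}^{0}_{t_n}(s)\Big]\mathrm{d}s
+\big(\widetilde{\boldsymbol{w}}_{t_n+\tau}-\widetilde{\boldsymbol{w}}_{t_n}\big),
\]
with $\boldsymbol{\delta}^{0}_{t_n}(s)=\boldsymbol{s}^\theta_{t_n+g_n(s)}(\widehat{\boldsymbol{y}}^{0}_{t_n,0})-\nabla\log\cev{p}_{t_n+s}(\widehat{\boldsymbol{y}}^{0}_{t_n,s})$. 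Squaring and applying $\|a+b+c+d\|^2\le 4(\|a\|^2+\cdots+\|d\|^2)$ splits the bound into four contributions.

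Next I would estimate each contribution after taking $\mathbb{E}_{\cev{p}|_{\mathcal{F}_{t_n}}}$. The linear-drift piece is handled by Cauchy--Schwarz together with $\|\widehat{\boldsymbol{y}}^{0}_{t_n,s}\|^2\le 2\|\widehat{\boldsymbol{y}}^{0}_{t_n,0}\|^2+2\|\widehat{\boldsymbol{y}}^{0}_{t_n,s}-\widehat{\boldsymbol{y}}^{0}_{t_n,0}\|^2$: under the hypothesis $\widehat{\boldsymbol{y}}^{0}_{t_n,0}\sim\cev{p}_{t_n}$, hence $\widehat{\boldsymbol{y}}^{0}_{t_n,s}\sim\cev{p}_{t_n+s}$ with second moment $O(d)$ by Assumption \ref{ass:2} and the OU evolution, while the remaining term contributes $\int_0^\tau\Phi(s)\,\mathrm{d}s$ to a Gronwall loop. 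The frozen-score piece is $\le M_{\boldsymbol{s}}^2\tau^2$ by the uniform bound of Assumption \ref{ass:3}; the martingale piece has second moment $\tau d$; and the discrepancy piece, Cauchy--Schwarz'd to $\tau\int_0^\tau\|\boldsymbol{\delta}^{0}_{t_n}(s)\|^2\,\mathrm{d}s$, I would decompose by inserting $\boldsymbol{s}^\theta_{t_n+g_n(s)}(\widehat{\boldsymbol{y}}^{0}_{t_n,g_n(s)})$ and $\nabla\log\cev{p}_{t_n+g_n(s)}(\widehat{\boldsymbol{y}}^{0}_{t_n,g_n(s)})$. A triple application of Young's inequality then bounds $\|\boldsymbol{\delta}^{0}_{t_n}(s)\|^2$ by $3$ times the sum of $L_{\boldsymbol{s}}^2\|\widehat{\boldsymbol{y}}^{0}_{t_n,0}-\widehat{\boldsymbol{y}}^{0}_{t_n,g_n(s)}\|^2$ (Lipschitzness of $\boldsymbol{s}^\theta$), the integrand of $B^{0}_{t_n}$, and the integrand of $A^{0}_{t_n}$; since $g_n(s)\le s$, the first of these adds a further $L_{\boldsymbol{s}}^2\int_0^\tau\Phi(s)\,\mathrm{d}s$ to the loop, while the other two integrate to $A^{0}_{t_n}$ and $B^{0}_{t_n}$ and are carried along as data.

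Collecting everything yields an inequality of the shape $\Phi(\tau)\le C_1 h_n(M_{\boldsymbol{s}}+2d)+C_2\big(A^{0}_{t_n}+B^{0}_{t_n}\big)+C_3 h_n(1+L_{\boldsymbol{s}}^2)\int_0^\tau\Phi(s)\,\mathrm{d}s$ for all $\tau\in[0,h_n]$; Gronwall's inequality (Theorem \ref{the:gron}) then gives the bound with an $e^{O(h_n)}$ prefactor, which, after tracking the constants of \cite{chen2024accelerating}, is exactly $2h_ne^{\frac72 h_n}(M_{\boldsymbol{s}}+2d)+6e^{\frac72 h_n}\mathbb{E}[A^{0}_{t_n}+B^{0}_{t_n}]$. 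The hypothesis $3e^{\frac72 h_n}h_nL_{\boldsymbol{s}}<0.5$ enters precisely here: the product of the Young factor $3$, the Gronwall amplification $e^{\frac72 h_n}$, and the Lipschitz-feedback step $h_nL_{\boldsymbol{s}}$ must stay below $\tfrac12$ so that the self-referential part is subcritical and can be absorbed into the left-hand side.

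The step I expect to be the main obstacle is this first measure-theoretic reduction: making rigorous, via Girsanov (Theorem \ref{the:gir}, Corollary \ref{coro:gir}) and the $\mathcal{F}_{t_n}$-measurability of $\widehat{\boldsymbol{y}}^{0}_{t_n,0}$, that under $\cev{p}|_{\mathcal{F}_{t_n}}$ the increment decomposes with the $\boldsymbol{\delta}^{0}_{t_n}$ correction term and that the hypothesis $\widehat{\boldsymbol{y}}^{0}_{t_n,0}\sim\cev{p}_{t_n}$ may legitimately be used to read off the marginal laws $\widehat{\boldsymbol{y}}^{0}_{t_n,s}\sim\cev{p}_{t_n+s}$ needed to bound $\mathbb{E}\|\widehat{\boldsymbol{y}}^{0}_{t_n,s}\|^2$; everything after that is a routine second-moment estimate plus Gronwall, the only delicate point being to keep the self-referential coefficient below the threshold dictated by the step-size condition.
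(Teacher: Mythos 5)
Your reconstruction is correct and follows essentially the same route as the argument this lemma is imported from (the paper itself cites Lemma B.5/B.6 of \cite{chen2024accelerating} without reproving it): pass to the true backward SDE under $\cev{p}$ via Girsanov, write the increment in integral form, split the drift discrepancy into the Lipschitz-feedback term plus the integrands of $A^0_{t_n}$ and $B^0_{t_n}$, and close with Gronwall/absorption under the smallness condition $3e^{\frac{7}{2}h_n}h_nL_{\boldsymbol{s}}<0.5$ — exactly the technique the paper also uses in Lemma \ref{lmm:decsde2}. The only loose end is constant bookkeeping (e.g., $M_{\boldsymbol{s}}$ versus $M_{\boldsymbol{s}}^2$ in the frozen-score term), which is inherited from the cited source's conventions rather than a gap in your argument.
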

Furthermore, the $A^0_{t_n}(\omega)$ and $B^0_{t_n}(\omega)$ can be bounded as 
\begin{lemma}[\bf{\citep[Lemma B.7]{chen2024accelerating}}]
We have for $ n \in [0: N - 2]$
\[\mathbb{E}_{\omega \sim \cev{p}|_{\mathcal{F}_{t_n}}} \left[ A^0_{t_{n}}(\omega) \right] \lesssim \epsilon dh_n,
\]
and
\[\mathbb{E}_{\omega \sim \cev{p}|_{\mathcal{F}_{t_n}}} \left[ A^0_{t_{N-1}}(\omega) \right] \lesssim \epsilon d\log \eta^{-1},\]
where $\eta$ is the parameter for early stopping.
\end{lemma}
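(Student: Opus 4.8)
The plan is to reduce the claim to the discretization‑error estimate for the \emph{true} backward process, which is already established in \cite{benton2024nearly} and restated as \cite[Lemma B.7]{chen2024accelerating}, and then invoke it directly.

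The key first step is to observe that, conditioned on $\mathcal{F}_{t_n}$ (so that the start $\widehat{\boldsymbol{y}}^{0}_{t_n,0}$ equals $\cev{\boldsymbol{x}}_{t_n}$, which has law $\cev{p}_{t_n}$), the initialization process $(\widehat{\boldsymbol{y}}^{0}_{t_n,\tau})_{\tau\in[0,h_n]}$ of \myeqref{eq:auxiliary3} --- which by construction solves an SDE driven by the learned score $\boldsymbol{s}^\theta$ under the reference measure $q$ --- in fact solves the true backward SDE $\mathrm{d}\widehat{\boldsymbol{y}}^{0}_{t_n,\tau}=\big[\tfrac12\widehat{\boldsymbol{y}}^{0}_{t_n,\tau}+\nabla\log\cev{p}_{t_n+\tau}(\widehat{\boldsymbol{y}}^{0}_{t_n,\tau})\big]\mathrm{d}\tau+\mathrm{d}\boldsymbol{w}_{t_n+\tau}$ under $\cev{p}$. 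This is exactly the change‑of‑drift part of Girsanov's theorem (Theorem \ref{the:gir}); it is also \cite[Lemma B.5 or Lemma B.6 with $K=1$]{chen2024accelerating}, which is why the hypotheses here coincide with those of Lemma \ref{lmm:init}. Consequently $(\widehat{\boldsymbol{y}}^{0}_{t_n,\tau})_{\tau\in[0,h_n]}$ has, under $\cev{p}$, the same conditional law as the true backward trajectory $(\cev{\boldsymbol{x}}_{t_n+\tau})_{\tau\in[0,h_n]}$ given $\cev{\boldsymbol{x}}_{t_n}$.

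With this identification, $A^{0}_{t_n}(\omega)$ becomes, in expectation, the squared discretization error of the score along the true backward process:
\[
\mathbb{E}_{\omega\sim\cev{p}|_{\mathcal{F}_{t_n}}}\!\big[A^{0}_{t_n}(\omega)\big]=\int_0^{h_n}\mathbb{E}\Big[\big\|\nabla\log\cev{p}_{t_n+g_n(\tau)}(\cev{\boldsymbol{x}}_{t_n+g_n(\tau)})-\nabla\log\cev{p}_{t_n+\tau}(\cev{\boldsymbol{x}}_{t_n+\tau})\big\|^2\Big]\mathrm{d}\tau .
\]
I would then split the integrand into a spatial part --- $\nabla\log\cev{p}_{t_n+g_n(\tau)}$ evaluated at $\cev{\boldsymbol{x}}_{t_n+g_n(\tau)}$ versus at $\cev{\boldsymbol{x}}_{t_n+\tau}$, controlled by the OU bound on $\mathbb{E}\|\nabla^2\log\cev{p}_t\|^2$ together with $\mathbb{E}\|\cev{\boldsymbol{x}}_{t_n+\tau}-\cev{\boldsymbol{x}}_{t_n+g_n(\tau)}\|^2$ --- and a temporal part --- the change of $\nabla\log\cev{p}_{\bullet}(\cev{\boldsymbol{x}}_{t_n+\tau})$ in its time index, controlled via the Fokker--Planck / heat‑flow regularity of $t\mapsto\nabla\log\cev{p}_t$. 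Summing over a uniform‑grid slice ($n\le N-2$, step size $\epsilon$) gives $\lesssim\epsilon d h_n$; for the last slice the early‑stopped, exponentially‑decaying schedule $\epsilon_{N-1,m}\le\epsilon(h-\tau_{N-1,m+1})$ turns the sum into a near‑geometric one and yields $\lesssim\epsilon d\log\eta^{-1}$. This is precisely \cite[Lemma B.7]{chen2024accelerating} (which builds on \cite[Section 3.1]{benton2024nearly}), and it applies without change because the process in question is the genuine backward process; I would simply cite it rather than reprove the estimate.

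The only real subtlety --- and hence the main obstacle --- is making the Girsanov reduction of the first step rigorous: one must verify the exponential‑martingale (Novikov‑type) condition of Theorem \ref{the:gir} for the drift discrepancy $\boldsymbol{s}^\theta_{t_n+g_n(\tau)}(\widehat{\boldsymbol{y}}^{0}_{t_{n-1},\tau_{n-1,M}})-\nabla\log\cev{p}_{t_n+\tau}(\widehat{\boldsymbol{y}}^{0}_{t_n,\tau})$. This is where Assumption \ref{ass:3} (bounded and Lipschitz $\boldsymbol{s}^\theta$) and the standard regularity of $\nabla\log\cev{p}_t$ for $t$ bounded away from $0$ enter, and it is exactly why early stopping at $\eta>0$ is unavoidable for the last slice. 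Once this is in place, the remainder is a direct appeal to the cited discretization estimates.
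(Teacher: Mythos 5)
Your proposal is correct and matches the paper's approach: the paper likewise observes that the initialization process coincides (under the measure $\cev{p}$, via the Girsanov identification already set up for Lemma \ref{lmm:B}) with the true backward trajectory analyzed in \cite[Lemma B.7]{chen2024accelerating}, and then simply cites that discretization estimate without reproving it. Your additional sketch of how the cited bound is obtained (spatial/temporal splitting, uniform grid vs.\ early-stopped exponential schedule) is consistent with the source but not something the paper itself carries out.
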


\begin{lemma}[\bf{\citep[Section B.3]{chen2024accelerating}}]
$\sum_{n=1}^{N-1} \mathbb{E}_{\omega\sim \cev{p}|_{\mathcal{F}_{t_n}}}\left[ B^0_{t_n}(\omega) \right] \leq \delta_2^2$.
\end{lemma}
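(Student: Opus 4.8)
The plan is to reproduce, essentially line for line, the argument already carried out for Lemma~\ref{lmm:B}, since $B^0_{t_n}$ differs from $B_{t_n}$ only in that it is built from the initialization process \myeqref{eq:auxiliary3} instead of from the $J$-th interpolation process. The one structural input I would lean on is the observation made just before the statement: with the Picard depth set to $J=1$, the initialization dynamics \myeqref{eq:auxiliary3} coincides with Algorithm~1 of \cite{chen2024accelerating}, so the Girsanov change of measure in Corollary~\ref{coro:gir} applies to it verbatim.

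Concretely I would proceed in three steps. \emph{Step 1 (change of measure).} Apply Corollary~\ref{coro:gir} slice by slice, initializing each slice from the true marginal $\cev{\boldsymbol{x}}_{t_n}\sim\cev{p}_{t_n}$ (as in the hypothesis of Lemma~\ref{lmm:init}), with $\boldsymbol{\Sigma}=\boldsymbol{I}_d$, $\boldsymbol{\alpha}$ the frozen-score exponential-integrator drift of \myeqref{eq:auxiliary3}, and $\boldsymbol{\beta}(\tau,\omega)=\tfrac12\cev{\boldsymbol{x}}_\tau+\nabla\log\cev{p}_\tau(\cev{\boldsymbol{x}}_\tau)$ the drift of the true reverse SDE \myeqref{eq:back1}; the martingale/Novikov hypothesis holds because $\boldsymbol{s}^\theta$ is bounded and Lipschitz (Assumption~\ref{ass:3}) and $\nabla\log\cev{p}_t$ is Lipschitz on the relevant time window. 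Part~3 of Theorem~\ref{the:gir} then gives that, under $\cev{p}$, the trajectory produced by the initialization process satisfies the true reverse SDE, so $\widehat{\boldsymbol{y}}_{t_n,\tau}^{0}(\omega)=\cev{\boldsymbol{x}}_{t_n+\tau}(\omega)$ for all $\tau\in[0,h_n]$, $\cev{p}$-almost surely. \emph{Step 2 (collapse the integral).} Plugging this identity into the definition of $B^0_{t_n}$, the integrand becomes $\bigl\|\boldsymbol{s}^\theta_{t_n+g_n(\tau)}(\cev{\boldsymbol{x}}_{t_n+g_n(\tau)})-\nabla\log\cev{p}_{t_n+g_n(\tau)}(\cev{\boldsymbol{x}}_{t_n+g_n(\tau)})\bigr\|^2$, which is constant on each $[\tau_{n,m},\tau_{n,m+1})$ because $g_n$ is the left-endpoint step function and $\epsilon_{n,m}=\tau_{n,m+1}-\tau_{n,m}$; hence $B^0_{t_n}(\omega)=\sum_{m=0}^{M_n-1}\epsilon_{n,m}\bigl\|\boldsymbol{s}^\theta_{t_n+\tau_{n,m}}(\cev{\boldsymbol{x}}_{t_n+\tau_{n,m}})-\nabla\log\cev{p}_{t_n+\tau_{n,m}}(\cev{\boldsymbol{x}}_{t_n+\tau_{n,m}})\bigr\|^2$. \emph{Step 3 (sum and invoke $L^2$-accuracy).} Taking $\mathbb{E}_{\cev{p}}$, summing over $n=1,\dots,N-1$, and enlarging the sum to $n=0,\dots,N-1$ using $\mathbb{E}[B^0_{t_n}]\ge0$, the result is exactly the left-hand side of the $L^2$-accuracy condition in Assumption~\ref{ass:1}, which is $\le\delta_2^2$.

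The only genuinely delicate point is Step~1: one must be sure that, even though the initialization process uses a frozen score and an exponential-integrator discretization, under $\cev{p}$ it becomes the \emph{exact} true reverse SDE, so that the score mismatch inside $B^0_{t_n}$ is evaluated along the true trajectory $\cev{\boldsymbol{x}}$ — the only trajectory Assumption~\ref{ass:1} constrains. This is precisely the fact imported from \cite[Section~B.3]{chen2024accelerating} (the $J=1$ case) and is the initialization analogue of what is already used in the proof of Lemma~\ref{lmm:B}; once it is granted, Steps~2 and~3 are routine, so the proof reduces to citing that change-of-measure fact and performing the piecewise-constant integral.
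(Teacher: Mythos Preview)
Your proposal is correct and takes essentially the same approach as the paper: the paper does not give an explicit proof for this lemma but cites \cite[Section~B.3]{chen2024accelerating}, and the underlying argument there is exactly the Girsanov change-of-measure step (so that under $\cev{p}$ the process $\widehat{\boldsymbol{y}}^{0}$ coincides with the true reverse trajectory $\cev{\boldsymbol{x}}$), followed by collapsing the piecewise-constant integral and invoking Assumption~\ref{ass:1}~--- precisely your Steps~1--3 and identical to the proof the paper \emph{does} write out for Lemma~\ref{lmm:B}.
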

Thus we have the following uniform bound for our initialization.
\begin{corollary}
\label{cor:init}
With the same assumption in Lemma \ref{lmm:init}, we have
 \[\sup\limits_{n=0,\ldots,N}\sup\limits_{\tau\in [0,h_n]}\mathbb{E}_{\omega \sim \cev{p}| \mathcal{F}_{t_n}}\left[ \left\| \widehat{\boldsymbol{y}}_{t_n , \tau}^{0} (\omega)  - \widehat{\boldsymbol{y}}_{t_{n} ,0}^{0}(\omega) \right\|^2\right]\lesssim d.\]   
\end{corollary}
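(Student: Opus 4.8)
The plan is to specialize the per–time–slice estimate of Lemma~\ref{lmm:init} to the parameter regime of Theorem~\ref{the:main2}, bound the discretization and score-error terms it produces, and take a supremum over the $N$ slices at the end. Fix $n\in\{0,\dots,N-1\}$. Since $h_n\le h=\Theta(1)$ is chosen small relative to $L_{\boldsymbol{s}}$, we have $3e^{7h_n/2}h_nL_{\boldsymbol{s}}<0.5$, so Lemma~\ref{lmm:init} applies and yields
\[
\sup_{\tau\in[0,h_n]}\mathbb{E}_{\omega\sim\cev p|\mathcal{F}_{t_n}}\!\left[\big\|\widehat{\boldsymbol{y}}_{t_n,\tau}^{0}(\omega)-\widehat{\boldsymbol{y}}_{t_n,0}^{0}(\omega)\big\|^2\right]\le 2h_ne^{7h_n/2}\big(M_{\boldsymbol{s}}+2d\big)+6e^{7h_n/2}\,\mathbb{E}_{\omega\sim\cev p|\mathcal{F}_{t_n}}\!\left[A^{0}_{t_n}(\omega)+B^{0}_{t_n}(\omega)\right].
\]
I would note here that the lemma's hypothesis ``$\widehat{\boldsymbol{y}}_{t_n,0}^{0}$ follows $\cev p_{t_n}$'' is precisely what holds under the conditioning $\omega\sim\cev p|_{\mathcal{F}_{t_n}}$: conditioned on $\mathcal{F}_{t_n}$, the initialization process \eqref{eq:auxiliary3} starts at $\cev{\boldsymbol{x}}_{t_n}\sim\cev p_{t_n}$.

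Next I would bound the three contributions. As $h_n\le h=\Theta(1)$, $e^{7h_n/2}=\Theta(1)$, so the first term is $\lesssim M_{\boldsymbol{s}}+d\lesssim d$, using that the constant $M_{\boldsymbol{s}}$ of Assumption~\ref{ass:3} satisfies $M_{\boldsymbol{s}}\lesssim d$ (indeed $M_{\boldsymbol{s}}=\gO(\sqrt d)$ for normalized data). For the discretization term, the initialization analogue of Lemma~\ref{lmm:A} (cf.\ \cite[Lemma~B.7]{chen2024accelerating}) gives $\mathbb{E}_{\omega\sim\cev p|_{\mathcal{F}_{t_n}}}[A^{0}_{t_n}]\lesssim \epsilon d h_n$ for $n\le N-2$ and $\mathbb{E}_{\omega\sim\cev p|_{\mathcal{F}_{t_n}}}[A^{0}_{t_{N-1}}]\lesssim \epsilon d\log\eta^{-1}$; with $\epsilon=h/M=\widetilde{\gO}(\varepsilon^2/d)\le1$ and $\log\eta^{-1}=\widetilde{\gO}(1)$ under the parameter choices of Theorem~\ref{the:main2}, both are $\lesssim d$ (in fact $\lesssim\varepsilon^2$ up to logarithms). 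For the score term, nonnegativity of $B^{0}_{t_n}$ together with $\sum_n\mathbb{E}_{\omega\sim\cev p|_{\mathcal{F}_{t_n}}}[B^{0}_{t_n}]\le\delta_2^2$ (cf.\ \cite[Section~B.3]{chen2024accelerating}) gives $\mathbb{E}_{\omega\sim\cev p|_{\mathcal{F}_{t_n}}}[B^{0}_{t_n}]\le\delta_2^2\lesssim\varepsilon^2\lesssim d$ for every $n$. Adding the three bounds shows that for each fixed $n$ the left-hand side is $\lesssim d$ with an absolute constant.

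Finally I would take the supremum over $n$; since the per-slice bound is $\lesssim d$ uniformly in $n$, so is the supremum, and the endpoint $n=N$ in the statement is subsumed because $\widehat{\boldsymbol{y}}_{t_N,0}^{0}=\widehat{\boldsymbol{y}}_{t_{N-1},h_{N-1}}^{0}$ already occurs in the $n=N-1$ slice at $\tau=h_{N-1}$ (equivalently $h_N=0$). The only steps that need care — and the main, though minor, obstacle — are (i) checking that Lemma~\ref{lmm:init}'s hypothesis genuinely applies under the stated conditioning, i.e.\ that the interpolation process is initialized at the true backward marginal rather than at the algorithm's running iterate, and (ii) the last time slice, where early stopping replaces the factor $h_n$ in the discretization bound by $\log\eta^{-1}$; both are resolved by the choices $h=\Theta(1)$, $M=\widetilde{\gO}(d/\varepsilon^2)$, $\delta\le\varepsilon$ and a polynomially small early-stopping parameter from Theorem~\ref{the:main2}, under which $\log\eta^{-1}=\widetilde{\gO}(1)$.
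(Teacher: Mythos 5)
Your proposal is correct and follows essentially the same route as the paper: the corollary is obtained by combining Lemma \ref{lmm:init} with the bounds $\mathbb{E}[A^0_{t_n}]\lesssim \epsilon d h_n$ (or $\epsilon d\log\eta^{-1}$ for the last slice) and $\sum_n\mathbb{E}[B^0_{t_n}]\le\delta_2^2$, noting that $h_n=\Theta(1)$ makes the dominant term $2h_ne^{7h_n/2}(M_{\boldsymbol{s}}+2d)\lesssim d$, and then taking the supremum over slices. Your additional care about the initialization hypothesis (the process being started from $\cev{p}_{t_n}$ under the conditioning) and the early-stopped last slice is consistent with the paper's setup and does not change the argument.
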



\subsection{Convergence of Picard iteration}
\label{app:app23}
Similarly, we define 
\[\mathcal{E}_n^j = \sup\limits_{\tau \in [0, h_n]} \mathbb{E}_{\omega \sim \cev{p}| \mathcal{F}_{t_n}} \left[\| \widehat{\boldsymbol{y}}_{t_n,\tau}^{j} (\omega) -\widehat{\boldsymbol{y}}_{t_n,\tau}^{j-1} (\omega) \|^2 \right],\]
and 
\[\Delta_{n}^j = \mathbb{E}_{\omega \sim \cev{p}| \mathcal{F}_{t_n}} \left[\| \widehat{\boldsymbol{y}}_{t_n,\tau_{n,M}}^{j} (\omega) -\widehat{\boldsymbol{y}}_{t_n,\tau_{n,M}}^{j-1} (\omega) \|^2 \right].\]
Furthermore, we let $\mathcal{E}_I = \sup\limits_{n=0,\ldots,N-1}\sup\limits_{\tau \in [0,h_n]}\mathbb{E}_{\omega \sim \cev{p}| \mathcal{F}_{t_n}}\left[
\left\|\widehat{\boldsymbol{y}}_{n,\tau}^0- \widehat{\boldsymbol{y}}_{n-1,\tau_{n,M}}^0 \right\|^2\right]  $. 
We note that by Corollary \ref{cor:init}, $\mathcal{E}_I \lesssim d$.

\begin{lemma}[\bf{One-step decomposition of $\mathcal{E}_n^j$}]
\label{lmm:decsde2}
Assume $L^2_{\boldsymbol{s}}e^{2 h_n }h_n\leq 0.01$ and $e^{2h_n}\leq 2$.
For any $j=2,\ldots,J$, $n = 0,\ldots,N-1$, we have 
\[\mathcal{E}_n^j\leq 2 \Delta_{n-1}^j + 0.01\mathcal{E}_{n}^{j-1}.\]
Furthermore, for $j=1$, $n = 1,\ldots,N-1$, we have 
\[\mathcal{E}_n^1\leq 2\Delta_n^1 + 0.01 \left(\sup\limits_{\tau\in [0,h_n]}\mathbb{E}_{\omega \sim \cev{p}| \mathcal{F}_{t_n}} \left\| \widehat{\boldsymbol{y}}_{t_n , \tau}^{0} (\omega) - \widehat{\boldsymbol{y}}_{t_{n-1} ,\tau_{n-1,M}}^{0}(\omega) \right\|^2\right).\]
\end{lemma}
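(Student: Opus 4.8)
The plan is to mimic the structure of the one-step analysis of $\mathcal{E}_n^j$ in the isoperimetry case (Lemma \ref{lmm:dec2}), but now exploiting that the score function $\boldsymbol{s}^\theta_t$ is itself Lipschitz (Assumption \ref{ass:3}), so that no additional score-matching error enters — which is precisely why $P=1$ suffices here. First I would write down the explicit formula for the increment $\widehat{\boldsymbol{y}}_{t_n,\tau}^{j} - \widehat{\boldsymbol{y}}_{t_n,\tau}^{j-1}$ coming from the exponential-integrator update \eqref{eqn: exponential integrator under picard iteration in algo1} (equivalently, from the auxiliary SDE \eqref{eq:auxiliary1}). The two sources of discrepancy are: (a) the change in the starting point, $\widehat{\boldsymbol{y}}_{t_n,0}^{j} - \widehat{\boldsymbol{y}}_{t_n,0}^{j-1} = \widehat{\boldsymbol{y}}_{t_{n-1},\tau_{n-1,M}}^{j} - \widehat{\boldsymbol{y}}_{t_{n-1},\tau_{n-1,M}}^{j-1}$, whose squared expectation is exactly $\Delta_{n-1}^j$; and (b) the change in the drift evaluated along the previous Picard iterate, i.e. terms of the form $\boldsymbol{s}^\theta_{t_n+\tau_{n,m'}}(\widehat{\boldsymbol{y}}_{t_n,\tau_{n,m'}}^{j-1}) - \boldsymbol{s}^\theta_{t_n+\tau_{n,m'}}(\widehat{\boldsymbol{y}}_{t_n,\tau_{n,m'}}^{j-2})$. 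The Brownian increments cancel since the same noise $\boldsymbol{\xi}_{m'}$ is used across Picard iterations.

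Next I would bound the squared norm of the increment. Since the prefactors $e^{(\tau_{n,m}-\tau_{n,m'+1})/2}$ and $2(e^{\epsilon_{n,m'}}-1)$ are all controlled — $e^{(\tau_{n,m}-\tau_{n,m'+1})/2}\le e^{h_n/2}$ and $\sum_{m'}(e^{\epsilon_{n,m'}}-1)\le e^{h_n}-1\le h_n e^{h_n}$ using $\tau_{n,M_n}=h_n$ — and using $\|a+b\|^2\le 2\|a\|^2+2\|b\|^2$ together with Cauchy–Schwarz across the $m'$-sum (weighting by the step sizes $\epsilon_{n,m'}$), I get
\[
\mathcal{E}_n^j \;\le\; 2\,\Delta_{n-1}^j \;+\; C\, e^{2h_n} L_{\boldsymbol{s}}^2 h_n \cdot \sup_{\tau}\mathbb{E}\big[\|\widehat{\boldsymbol{y}}_{t_n,\tau}^{j-1}-\widehat{\boldsymbol{y}}_{t_n,\tau}^{j-2}\|^2\big] \;=\; 2\,\Delta_{n-1}^j + C\,e^{2h_n} L_{\boldsymbol{s}}^2 h_n\, \mathcal{E}_n^{j-1},
\]
for an absolute constant $C$. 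Under the hypothesis $L_{\boldsymbol{s}}^2 e^{2h_n} h_n \le 0.01$ (and $e^{2h_n}\le 2$ to absorb a leftover exponential factor from Grönwall in the interpolation over $\tau$), the coefficient $C\,e^{2h_n}L_{\boldsymbol{s}}^2 h_n$ is at most $0.01$, giving the claimed bound $\mathcal{E}_n^j \le 2\Delta_{n-1}^j + 0.01\,\mathcal{E}_n^{j-1}$. For $j=1$, the previous iterate is the initialization process $\widehat{\boldsymbol{y}}^{0}$, so the same computation applies with $\mathcal{E}_n^{0}$ replaced by $\sup_{\tau\in[0,h_n]}\mathbb{E}\|\widehat{\boldsymbol{y}}_{t_n,\tau}^{0}-\widehat{\boldsymbol{y}}_{t_{n-1},\tau_{n-1,M}}^{0}\|^2$, since for $j=0$ the drift in \eqref{eq:auxiliary3} is frozen at the left endpoint $\widehat{\boldsymbol{y}}_{t_{n-1},\tau_{n-1,M}}^{0}$ rather than tracking a grid point; this accounts for the slightly different form of the second statement.

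The main obstacle I anticipate is handling the $\tau$-dependence cleanly: the increment $\widehat{\boldsymbol{y}}_{t_n,\tau}^{j}-\widehat{\boldsymbol{y}}_{t_n,\tau}^{j-1}$ solves a linear SDE in $\tau$ with the $\tfrac12$-drift, so one must apply a Grönwall-type estimate (Theorem \ref{the:gron}) to pass from the pointwise-in-$\tau$ bound to the supremum over $\tau\in[0,h_n]$, and make sure the resulting factor $e^{h_n}$ (or $e^{2h_n}$ after squaring) stays bounded — this is exactly what the assumption $e^{2h_n}\le 2$ buys us. A secondary technical point is bookkeeping the weights $\epsilon_{n,m'}$ when applying Cauchy–Schwarz so that the sum telescopes to $h_n$ rather than $M_n$; this is routine but must be done carefully to land the clean constant $0.01$. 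Everything else reduces to the Lipschitz bound from Assumption \ref{ass:3} and elementary inequalities.
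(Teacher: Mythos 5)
Your decomposition is exactly the paper's: the discrepancy splits into the starting-point change, whose second moment is $\Delta_{n-1}^j$, and the Lipschitz drift change, which contributes $L_{\boldsymbol{s}}^2 h_n$ times $\mathcal{E}_n^{j-1}$ (or, for $j=1$, times $\sup_\tau\mathbb{E}\|\widehat{\boldsymbol{y}}^0_{t_n,\tau}-\widehat{\boldsymbol{y}}^0_{t_{n-1},\tau_{n-1,M}}\|^2$, since the initialization process freezes the drift at the left endpoint — you got that right). The route differs in execution, and the difference matters for the stated constants. The paper does \emph{not} work from the discrete exponential-integrator formula here: it writes the ODE (in $\tau$) satisfied by $\widehat{\boldsymbol{y}}^j_{t_n,\tau}-\widehat{\boldsymbol{y}}^{j-1}_{t_n,\tau}$, differentiates the squared norm, bounds the cross term by Young's inequality $2\langle a,b\rangle\le\|a\|^2+\|b\|^2$ (coefficient $1$, not $2$, on each piece), and then applies Gr\"onwall (Theorem~\ref{the:gron}); this yields leading coefficient $e^{2\tau}\le e^{2h_n}\le 2$ on $\Delta_{n-1}^j$ and $L_{\boldsymbol{s}}^2e^{2h_n}h_n\le 0.01$ on $\mathcal{E}_n^{j-1}$, landing the constants exactly. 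Your proposed discrete computation with $\|a+b\|^2\le2\|a\|^2+2\|b\|^2$ puts a factor $2e^{\tau}>2$ on the starting-point term and an absolute constant $C>1$ on the drift term, so your final step — ``$C\,e^{2h_n}L_{\boldsymbol{s}}^2h_n\le 0.01$ under the hypothesis $L_{\boldsymbol{s}}^2e^{2h_n}h_n\le0.01$'' — does not follow as written. (Indeed, the paper's own discrete-formula computation of this type, in Lemma~\ref{lmm:delta_diff}, only achieves constants $3$ and $0.4$.) This is a constant-tracking gap rather than a structural one: your recursion would still contract and could be propagated through the downstream argument with adjusted numerics, but to prove the lemma with the constants as stated you need the continuous-time Young--Gr\"onwall argument you only gesture at in your last paragraph.
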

\begin{proof}
For each $\omega \in \Omega$ conditioned on the filtration $\mathcal{F}_{t_n}$,  consider the auxiliary process defined as in the previous section, 
\[\mathrm{d} \widehat{\boldsymbol{y}}_{t_n , \tau}^{j} (\omega)= \left[\frac{1}{2} \widehat{\boldsymbol{y}}_{t_n ,\tau}^{j}(\omega) + \boldsymbol{s}^\theta_{t_n + g_n(\tau)} \left(\widehat{\boldsymbol{y}}_{t_n ,g_n(\tau)}^{j-1}(\omega)\right) \right] \mathrm{d}\tau + \mathrm{d}\boldsymbol{\mathit{w}}_{t_n + \tau}(\omega),\]
and
\[\mathrm{d} \widehat{\boldsymbol{y}}_{t_n , \tau}^{j-1} (\omega)= \left[\frac{1}{2} \widehat{\boldsymbol{y}}_{t_n ,\tau}^{j-1}(\omega) + \boldsymbol{s}^\theta_{t_n + g_n(\tau)} \left(\widehat{\boldsymbol{y}}_{t_n ,g_n(\tau)}^{j-2}(\omega)\right) \right] \mathrm{d}\tau + \mathrm{d}\boldsymbol{\mathit{w}}_{t_n + \tau}(\omega).\]
We have 
\begin{align*}
&\mathrm{d}\left(\widehat{\boldsymbol{y}}_{t_n , \tau}^{j} (\omega) -  \widehat{\boldsymbol{y}}_{t_n , \tau}^{j-1} (\omega)\right)\\
~=~&\left[\frac{1}{2}\left(\widehat{\boldsymbol{y}}_{t_n ,\tau}^{j}(\omega) - \widehat{\boldsymbol{y}}_{t_n ,\tau}^{j-1}(\omega)\right) +\boldsymbol{s}^\theta_{t_n + g_n(\tau)} \left(\widehat{\boldsymbol{y}}_{t_n ,g_n(\tau)}^{j-1}(\omega)\right) - \boldsymbol{s}^\theta_{t_n + g_n(\tau)} \left(\widehat{\boldsymbol{y}}_{t_n ,g_n(\tau)}^{j-2}(\omega)\right) \right] \mathrm{d}\tau.
\end{align*}
Then we can calculate the derivative $\frac{\mathrm{d}}{\mathrm{d}\tau}\left\|\widehat{\boldsymbol{y}}_{t_n , \tau}^{j} (\omega) -  \widehat{\boldsymbol{y}}_{t_n , \tau}^{j-1} (\omega)\right\|^2$ as
\begin{align*}
~&\frac{\mathrm{d}}{\mathrm{d}\tau}\left\|\widehat{\boldsymbol{y}}_{t_n , \tau}^{j} (\omega) -  \widehat{\boldsymbol{y}}_{t_n , \tau}^{j-1} (\omega)\right\|^2 \\
~=~& 2\left(\widehat{\boldsymbol{y}}_{t_n , \tau}^{j} (\omega) -  \widehat{\boldsymbol{y}}_{t_n , \tau}^{j-1} (\omega)\right)^\top\left[\frac{1}{2}\left(\widehat{\boldsymbol{y}}_{t_n ,\tau}^{j}(\omega) - \widehat{\boldsymbol{y}}_{t_n ,\tau}^{j-1}(\omega)\right) +\boldsymbol{s}^\theta_{t_n + g_n(\tau)} \left(\widehat{\boldsymbol{y}}_{t_n ,g_n(\tau)}^{j-1}(\omega)\right) - \boldsymbol{s}^\theta_{t_n + g_n(\tau)} \left(\widehat{\boldsymbol{y}}_{t_n ,g_n(\tau)}^{j-2}(\omega)\right) \right].
\end{align*}
By integrating from $0$ to $\tau$, we have 
\begin{align*}
~&\left\|\widehat{\boldsymbol{y}}_{t_n , \tau}^{j} (\omega) -  \widehat{\boldsymbol{y}}_{t_n , \tau}^{j-1} (\omega)\right\|^2 - \left\|\widehat{\boldsymbol{y}}_{t_n , 0}^{j} (\omega) -  \widehat{\boldsymbol{y}}_{t_n , 0}^{j-1} (\omega)\right\|^2\\
~=~& \int_0^\tau \left\|\widehat{\boldsymbol{y}}_{t_n , \tau'}^{j} (\omega) -  \widehat{\boldsymbol{y}}_{t_n , \tau'}^{j-1} (\omega)\right\|^2 \mathrm{d}\tau'\\
& + \int_0^\tau 2\left(\widehat{\boldsymbol{y}}_{t_n , \tau}^{j} (\omega) -  \widehat{\boldsymbol{y}}_{t_n , \tau'}^{j-1} (\omega)\right)^\top\left[\boldsymbol{s}^\theta_{t_n + g_n(\tau')} \left(\widehat{\boldsymbol{y}}_{t_n ,g_n(\tau')}^{j-1}(\omega)\right) - \boldsymbol{s}^\theta_{t_n + g_n(\tau')} \left(\widehat{\boldsymbol{y}}_{t_n ,g_n(\tau')}^{j-2}(\omega)\right) \right] \mathrm{d}\tau'\\
~\leq~& 2\int_0^\tau \left\|\widehat{\boldsymbol{y}}_{t_n , \tau'}^{j} (\omega) -  \widehat{\boldsymbol{y}}_{t_n , \tau'}^{j-1} (\omega)\right\|^2 \mathrm{d}\tau'
+ \int_0^\tau \left\|\boldsymbol{s}^\theta_{t_n + g_n(\tau')} \left(\widehat{\boldsymbol{y}}_{t_n ,g_n(\tau')}^{j-1}(\omega)\right) - \boldsymbol{s}^\theta_{t_n + g_n(\tau')} \left(\widehat{\boldsymbol{y}}_{t_n ,g_n(\tau')}^{j-2}(\omega)\right) \right\|^2 \mathrm{d}\tau'\\
~\leq~& 2\int_0^\tau \left\|\widehat{\boldsymbol{y}}_{t_n , \tau'}^{j} (\omega) -  \widehat{\boldsymbol{y}}_{t_n , \tau'}^{j-1} (\omega)\right\|^2 \mathrm{d}\tau'  + L^2_{\boldsymbol{s}}\int_0^\tau \left\|\widehat{\boldsymbol{y}}_{t_n ,g_n(\tau')}^{j-1}(\omega)- \widehat{\boldsymbol{y}}_{t_n ,g_n(\tau')}^{j-2}(\omega) \right\|^2 \mathrm{d}\tau'.
\end{align*}
By Theorem \ref{the:gron}, and $\widehat{\boldsymbol{y}}^{j,p}_{t_n,0}(\omega) = \widehat{\boldsymbol{y}}_{t_{n-1},\tau_{n-1,M}}^j(\omega)$, we have 
\[\left\|\widehat{\boldsymbol{y}}_{t_n , \tau}^{j} (\omega) -  \widehat{\boldsymbol{y}}_{t_n , \tau}^{j-1} (\omega)\right\|^2 \leq L^2_{\boldsymbol{s}}e^{2\tau } \int_0^\tau \left\|\widehat{\boldsymbol{y}}_{t_n ,g_n(\tau')}^{j-1}(\omega)- \widehat{\boldsymbol{y}}_{t_n ,g_n(\tau')}^{j-2}(\omega) \right\|^2 \mathrm{d}\tau' + e^{2\tau}\Delta_{n-1}^j. \]
By taking expectation, for all $\tau \in [0,h_n]$
\begin{align*}
&\mathbb{E}_{\omega \sim \cev{p}| \mathcal{F}_{t_n}} \left\|\widehat{\boldsymbol{y}}_{t_n , \tau}^{j} (\omega) -  \widehat{\boldsymbol{y}}_{t_n , \tau}^{j-1} (\omega)\right\|^2 - e^{2\tau}\Delta_{n-1}^j\\
~\leq~& L^2_{\boldsymbol{s}}e^{2\tau } \int_0^\tau \mathbb{E}_{\omega \sim \cev{p}| \mathcal{F}_{t_n}} \left\|\widehat{\boldsymbol{y}}_{t_n ,g_n(\tau')}^{j-1}(\omega)- \widehat{\boldsymbol{y}}_{t_n ,g_n(\tau')}^{j-2}(\omega) \right\|^2 \mathrm{d}\tau'\\
~\leq~& L^2_{\boldsymbol{s}}e^{2\tau }\tau \sup\limits_{\tau'\in [0,\tau]} \mathbb{E}_{\omega \sim \cev{p}| \mathcal{F}_{t_n}} \left\|\widehat{\boldsymbol{y}}_{t_n ,\tau'}^{j-1}(\omega)- \widehat{\boldsymbol{y}}_{t_n ,\tau'}^{j-2}(\omega) \right\|^2.
\end{align*}
Thus
\begin{align*}
&\sup\limits_{\tau\in [0,h_n]} \mathbb{E}_{\omega \sim \cev{p}| \mathcal{F}_{t_n}} \left\|\widehat{\boldsymbol{y}}_{t_n , \tau}^{j-1} (\omega) -  \widehat{\boldsymbol{y}}_{t_n , \tau}^{j-2} (\omega)\right\|^2\\
~\leq~& e^{2h_n}\Delta_{n-1}^j + L^2_{\boldsymbol{s}}e^{2h_n }h_n \mathcal{E}_n^{j-1}. 
\end{align*}
For $j = 1$, we consider the following two processes,
\[\mathrm{d} \widehat{\boldsymbol{y}}_{t_n , \tau}^{1} (\omega)= \left[\frac{1}{2} \widehat{\boldsymbol{y}}_{t_n ,\tau}^{1}(\omega) + \boldsymbol{s}^\theta_{t_n + g_n(\tau)} \left(\widehat{\boldsymbol{y}}_{t_n ,g_n(\tau)}^{0}(\omega)\right) \right] \mathrm{d}\tau + \mathrm{d}\boldsymbol{\mathit{w}}_{t_n + \tau}(\omega),\]
and 
\[\mathrm{d} \widehat{\boldsymbol{y}}_{t_n , \tau}^{0} (\omega)= \left[\frac{1}{2} \widehat{\boldsymbol{y}}_{t_n ,\tau}^{0}(\omega) + \boldsymbol{s}^\theta_{t_n + g_n(\tau)} \left(\widehat{\boldsymbol{y}}_{t_{n-1} ,\tau_{n-1,M}}^{0}(\omega)\right) \right] \mathrm{d}\tau + \mathrm{d}\boldsymbol{\mathit{w}}_{t_n + \tau}(\omega).\]
Similarly, we have
\begin{align*}
&\sup\limits_{\tau\in [0,h_n]} \mathbb{E}_{\omega \sim \cev{p}| \mathcal{F}_{t_n}} \left\|\widehat{\boldsymbol{y}}_{t_n , \tau}^{1} (\omega) -  \widehat{\boldsymbol{y}}_{t_n , \tau}^{0} (\omega)\right\|^2 \\
~\leq~&  e^{2h_n}\Delta_n^1 + L^2_{\boldsymbol{s}}e^{2h_n }h_n \left(\sup\limits_{\tau\in [0,h_n]}\mathbb{E}_{\omega \sim \cev{p}| \mathcal{F}_{t_n}} \left\| \widehat{\boldsymbol{y}}_{t_n , \tau}^{0} (\omega) - \widehat{\boldsymbol{y}}_{t_{n-1} ,\tau_{n-1,M}}^{0}(\omega) \right\|^2\right).
\end{align*}
\end{proof}
\begin{lemma}[\bf{One-step decomposition of $\Delta_n^j$}]
\label{lmm:delta_diff}
Assume $L^2_{\boldsymbol{s}}e^{2 h_n }h_n\leq 0.01$ and and $e^{2h_n}\leq 2$.
For any $j=2,\ldots,J$, $n = 1,\ldots,N-1$, we have 
\[\Delta_n^j\leq 3 \Delta_{n-1}^j+ 0.4 \mathcal{E}_{n}^{j-1}.\]
Furthermore, for $j=1$, $n=1,\ldots,N-1$, we have
\[\Delta_n^1\leq 3 \Delta_{n-1}^1+ 0.4 \sup\limits_{\tau \in [0,h_n]}\mathbb{E}_{\omega \sim \cev{p}| \mathcal{F}_{t_n}}\left[
\left\|\widehat{\boldsymbol{y}}_{n,\tau}^0- \widehat{\boldsymbol{y}}_{n-1,\tau_{n,M}}^0 \right\|^2\right].\]
For $n  = 0 $, we have $\Delta_0^j\leq 0.32 \Delta_0^{j-1}$, and $\Delta_0^1 \leq \sup\limits_{\tau\in [0,h_0]}\mathbb{E}_{\omega \sim \cev{p}| \mathcal{F}_{t_0}}\left[ \left\| \widehat{\boldsymbol{y}}_{t_0 , \tau}^{0} (\omega)  - \widehat{\boldsymbol{y}}_{t_{0} ,0}^{0}(\omega) \right\|^2\right]$.
\end{lemma}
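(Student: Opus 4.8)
The plan is to read Lemma~\ref{lmm:delta_diff} off the pathwise Gr\"onwall estimate already established inside the proof of Lemma~\ref{lmm:decsde2}, simply specialized to the right endpoint $\tau = h_n = \tau_{n,M}$. Fix $n$, an index $j\ge 2$, and a sample $\omega\in\Omega$ conditioned on $\mathcal{F}_{t_n}$, and set $D_\tau := \widehat{\boldsymbol{y}}^{j}_{t_n,\tau}(\omega) - \widehat{\boldsymbol{y}}^{j-1}_{t_n,\tau}(\omega)$. Subtracting the two copies of \myeqref{eq:auxiliary1} for iterations $j$ and $j-1$ cancels the common Brownian increment and leaves a deterministic linear ODE for $D_\tau$, with drift $\tfrac12 D_\tau + \bigl(\boldsymbol{s}^\theta_{t_n+g_n(\tau)}(\widehat{\boldsymbol{y}}^{j-1}_{t_n,g_n(\tau)}(\omega)) - \boldsymbol{s}^\theta_{t_n+g_n(\tau)}(\widehat{\boldsymbol{y}}^{j-2}_{t_n,g_n(\tau)}(\omega))\bigr)$ and initial value $D_0 = \widehat{\boldsymbol{y}}^{j}_{t_{n-1},\tau_{n-1,M}}(\omega) - \widehat{\boldsymbol{y}}^{j-1}_{t_{n-1},\tau_{n-1,M}}(\omega)$.

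First I would differentiate $\|D_\tau\|^2$, so that the $\tfrac12 D_\tau$ part of the drift contributes $\|D_\tau\|^2$ and the score-mismatch part contributes a cross term that Young's inequality together with the $L_{\boldsymbol{s}}$-Lipschitzness of $\boldsymbol{s}^\theta_t$ (Assumption~\ref{ass:3}) bounds by $L_{\boldsymbol{s}}^2\|\widehat{\boldsymbol{y}}^{j-1}_{t_n,g_n(\tau)}(\omega) - \widehat{\boldsymbol{y}}^{j-2}_{t_n,g_n(\tau)}(\omega)\|^2$. Integrating on $[0,\tau]$ and invoking Gr\"onwall (Theorem~\ref{the:gron}) with constant kernel $\chi\equiv 2$, using that the forcing is non-decreasing, yields exactly the inequality from the proof of Lemma~\ref{lmm:decsde2}: $\|D_\tau\|^2 \le e^{2\tau}\|D_0\|^2 + e^{2\tau}L_{\boldsymbol{s}}^2\int_0^{\tau}\|\widehat{\boldsymbol{y}}^{j-1}_{t_n,g_n(\tau')}(\omega) - \widehat{\boldsymbol{y}}^{j-2}_{t_n,g_n(\tau')}(\omega)\|^2\,\mathrm{d}\tau'$. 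Then I set $\tau = h_n$, take $\mathbb{E}_{\omega\sim\cev{p}|\mathcal{F}_{t_n}}$ (which commutes with the $\tau'$-integral by nonnegativity, and is legitimate since the auxiliary processes are $\mathcal{F}_{t_n+\tau}$-adapted and well-defined along the diagonal update order by Lemma~\ref{lmm:adapt}), use $\widehat{\boldsymbol{y}}^{j}_{t_n,0} = \widehat{\boldsymbol{y}}^{j}_{t_{n-1},\tau_{n-1,M}}$ to recognize $\mathbb{E}\|D_0\|^2 = \Delta_{n-1}^j$, and bound $\mathbb{E}\int_0^{h_n}\|\cdots\|^2\,\mathrm{d}\tau' \le h_n\,\mathcal{E}_n^{j-1}$ since each $g_n(\tau')$ is a grid point in $[0,h_n]$. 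The hypotheses $e^{2h_n}\le 2$ and $L_{\boldsymbol{s}}^2 e^{2h_n}h_n\le 0.01$ then give $\Delta_n^j \le 2\Delta_{n-1}^j + 0.01\,\mathcal{E}_n^{j-1}$, comfortably inside the claimed $3\Delta_{n-1}^j + 0.4\,\mathcal{E}_n^{j-1}$.

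For $j=1$ the same computation applies, but comparing \myeqref{eq:auxiliary1} at $j=1$ against the initialization process \myeqref{eq:auxiliary3}: the score-mismatch in the ODE becomes $\boldsymbol{s}^\theta_{t_n+g_n(\tau)}(\widehat{\boldsymbol{y}}^{0}_{t_n,g_n(\tau)}(\omega)) - \boldsymbol{s}^\theta_{t_n+g_n(\tau)}(\widehat{\boldsymbol{y}}^{0}_{t_{n-1},\tau_{n-1,M}}(\omega))$, so the integral term turns into $L_{\boldsymbol{s}}^2 h_n\sup_\tau\mathbb{E}\|\widehat{\boldsymbol{y}}^{0}_{n,\tau} - \widehat{\boldsymbol{y}}^{0}_{n-1,\tau_{n,M}}\|^2$ while $\mathbb{E}\|D_0\|^2 = \Delta_{n-1}^1$, which with the hypotheses gives the stated bound. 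For $n=0$ the key point is that $\widehat{\boldsymbol{y}}^{j}_{t_0,0} = \widehat{\boldsymbol{y}}_0$ is the fixed initialization for every $j$, so the boundary term vanishes and the Gr\"onwall estimate collapses to $\Delta_0^j \le e^{2h_0}L_{\boldsymbol{s}}^2 h_0\,\mathcal{E}_0^{j-1}$ (and $\Delta_0^1 \le \sup_\tau\mathbb{E}\|\widehat{\boldsymbol{y}}^{0}_{t_0,\tau} - \widehat{\boldsymbol{y}}^{0}_{t_0,0}\|^2$ directly). To convert this to the contraction $\Delta_0^j \le 0.32\,\Delta_0^{j-1}$ in terms of $\Delta_0^{j-1}$ rather than $\mathcal{E}_0^{j-1}$, I would add the lower estimate $\tfrac{\mathrm{d}}{\mathrm{d}\tau}\|D_\tau\|^2 \ge -L_{\boldsymbol{s}}^2\|\widehat{\boldsymbol{y}}^{j-1}_{t_0,g_0(\tau)} - \widehat{\boldsymbol{y}}^{j-2}_{t_0,g_0(\tau)}\|^2$, which gives $\mathcal{E}_0^{j-1}\le\Delta_0^{j-1} + L_{\boldsymbol{s}}^2 h_0\,\mathcal{E}_0^{j-2}$, and close by a short induction on $j$ with the trivial base case.

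I do not expect any single inequality to be the obstacle; for $j\ge 2$, $n\ge 1$ this is essentially a verbatim restriction of the Lemma~\ref{lmm:decsde2} computation to $\tau = h_n$. The care is in the bookkeeping: confirming that the conditional expectation is taken with the iterates evaluated at the grid points $g_n(\tau')\in[0,h_n]$, so that the supremum in the definition of $\mathcal{E}_n^{j-1}$ is the right comparison quantity — this is where adaptedness along the diagonal scheme (Lemma~\ref{lmm:adapt}) is quietly used — and recognizing that the $\tau = 0$ term is exactly $\Delta_{n-1}^j$, the same-iteration error at the previous slice, which is precisely the cross-slice coupling that prevents $\Delta$ from contracting on its own and forces the subsequent two-dimensional $(n,j)$ recursion. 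The only genuinely non-routine piece is the $n=0$ contraction, where the Gr\"onwall bound naturally produces $\mathcal{E}_0^{j-1}$ and converting it to $\Delta_0^{j-1}$ requires the two-sided control of $\|D_\tau\|^2$ on $[0,h_0]$ sketched above.
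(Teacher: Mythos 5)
Your argument for the two cases with $n\ge 1$ is correct, but it takes a genuinely different route from the paper. The paper works directly with the \emph{discrete} exponential-integrator update: it expands $e^{-h_n/2}\bigl(\widehat{\boldsymbol{y}}^{j}_{t_n,\tau_{n,M}}-\widehat{\boldsymbol{y}}^{j-1}_{t_n,\tau_{n,M}}\bigr)$ as the initial difference plus the weighted sum of score differences, applies Cauchy--Schwarz over the $M$ summands together with $e^{x}-1\le 2x$ and the Lipschitz bound, and obtains $e^{-h_n/2}\Delta_n^j\le 2\Delta_{n-1}^j+32h_n^2L_{\boldsymbol{s}}^2\,\mathcal{E}_n^{j-1}$, whence the stated constants $3$ and $0.4$ after multiplying back by $e^{h_n/2}\le 2^{1/4}$. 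You instead evaluate the continuous Gr\"onwall estimate from the proof of Lemma~\ref{lmm:decsde2} at $\tau=h_n$, which is legitimate because the paper has already shown the discrete update coincides with the exact solution of the interpolation SDE; your route is shorter, reuses work, and in fact yields the sharper bound $\Delta_n^j\le 2\Delta_{n-1}^j+0.01\,\mathcal{E}_n^{j-1}$. The identification of the boundary term with $\Delta_{n-1}^j$ and of the integrated forcing with $h_n\,\mathcal{E}_n^{j-1}$ (via $g_n(\tau')\in[0,h_n]$) is exactly right, as is the $j=1$ variant against the initialization process.

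The gap is in your treatment of $\Delta_0^j\le 0.32\,\Delta_0^{j-1}$. Both your Gr\"onwall estimate and the paper's discrete computation naturally produce $\Delta_0^j\lesssim L_{\boldsymbol{s}}^2h_0\,\mathcal{E}_0^{j-1}$ (the boundary term vanishes since all iterates share the initialization $\widehat{\boldsymbol{y}}_0$), and the issue is converting $\mathcal{E}_0^{j-1}$ into $\Delta_0^{j-1}$. Your reverse estimate $\mathcal{E}_0^{j-1}\le\Delta_0^{j-1}+L_{\boldsymbol{s}}^2h_0\,\mathcal{E}_0^{j-2}$ is fine, but the induction does not close: unrolling gives $\Delta_0^j\lesssim\sum_{k\ge 0}(L_{\boldsymbol{s}}^2h_0)^{k+1}\Delta_0^{j-1-k}$, and the contraction hypothesis for smaller $j$ only bounds $\Delta_0^{j-1-k}$ from \emph{below} by $0.32^{-k}\Delta_0^{j-1}$, never from above; indeed $\Delta_0^{j-1}$ could vanish while $\mathcal{E}_0^{j-2}$ does not, so no inequality of the form $\Delta_0^j\le c\,\Delta_0^{j-1}$ follows from these estimates alone. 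To be fair, the paper's own proof is silent on the $n=0$ case, and what is actually used downstream is the unrolled bound $\Delta_0^j\le 0.32^{\,j-1}\mathcal{E}_I$, which does follow from the $\mathcal{E}$-form of the recursion (e.g.\ $\Delta_0^j\le 0.01\,\mathcal{E}_0^{j-1}$ combined with $\mathcal{E}_0^{j}\le 0.01\,\mathcal{E}_0^{j-1}$ from Lemma~\ref{lmm:decsde2} with $\Delta_{-1}^j=0$). You should therefore either prove and use the statement with $\mathcal{E}_0^{j-1}$ on the right-hand side, or carry the $\mathcal{E}$ terms explicitly, rather than asserting the contraction in terms of $\Delta_0^{j-1}$.
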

\begin{proof}
By definition of $\widehat{\boldsymbol{y}}_{t_n,\tau_{n,M}}^{j} (\omega)$ we have 
\begin{align*}
&\left\|e^{-\frac{h_n}{2}}\widehat{\boldsymbol{y}}_{t_n,\tau_{n,M}}^{j} -  e^{-\frac{h_n}{2}}\widehat{\boldsymbol{y}}_{t_n,\tau_{n,M}}^{j-1}\right\|^2 \\
~=~&\left\|\widehat{\boldsymbol{y}}_{n,0}^{j} - \widehat{\boldsymbol{y}}_{n,0}^{j-1} + \sum\limits_{m'=0}^{m-1}e^{\frac{-\tau_{n,m'+1}}{2}}2(e^{\epsilon_{n,m'}}-1)\left[\boldsymbol{s}^\theta_{t_n +\tau_{n,m'}}(\widehat{\boldsymbol{y}}^{j-1}_{n,\tau_{n,m'}}) - \boldsymbol{s}^\theta_{t_n +\tau_{n,m'}}(\widehat{\boldsymbol{y}}^{j-2}_{n,\tau_{n,m'}}) \right]\right\|^2 \\
~\leq~&2\left\|\widehat{\boldsymbol{y}}_{n,0}^{j} - \widehat{\boldsymbol{y}}_{n,0}^{j-1}\right\|^2 + 2\left\|\sum\limits_{m'=0}^{m-1}e^{\frac{-\tau_{n,m'+1}}{2}}2(e^{\epsilon_{n,m'}}-1)\left[\boldsymbol{s}^\theta_{t_n +\tau_{n,m'}}(\widehat{\boldsymbol{y}}^{j-1}_{n,\tau_{n,m'}}) - \boldsymbol{s}^\theta_{t_n +\tau_{n,m'}}(\widehat{\boldsymbol{y}}^{j-2}_{n,\tau_{n,m'}}) \right]\right\|^2 \\
~\leq~&2\left\|\widehat{\boldsymbol{y}}_{n,0}^{j} - \widehat{\boldsymbol{y}}_{n,0}^{j-1}\right\|^2 + 32\epsilon_{n,m'}^2 M\sum\limits_{m'=0}^{M-1} \left\|\left[\boldsymbol{s}^\theta_{t_n +\tau_{n,m'}}(\widehat{\boldsymbol{y}}^{j-1}_{n,\tau_{n,m'}}) - \boldsymbol{s}^\theta_{t_n +\tau_{n,m'}}(\widehat{\boldsymbol{y}}^{j-2}_{n,\tau_{n,m'}}) \right]\right\|^2 \\
~\leq~&2\left\|\widehat{\boldsymbol{y}}_{n,0}^{j} - \widehat{\boldsymbol{y}}_{n,0}^{j-1}\right\|^2 + 32h_n^2 \sup\limits_{\tau \in [0,h_n]}L^2_{\boldsymbol{s}}\left\|\widehat{\boldsymbol{y}}^{j-1}_{n,\tau} - \widehat{\boldsymbol{y}}^{j-2}_{n,\tau} \right\|^2,
\end{align*}
where the second inequality is implied by that $e^{x}-1\leq 2x$ when $x<1$.
By taking expectation, and the assumption that $L^2_{\boldsymbol{s}}e^{2 h_n }h_n\leq 0.1$ and $e^{2h_n}\leq 2$, we have 
\begin{align*}
e^{-\frac{h_n}{2}} \Delta_n^j~=~&\mathbb{E}_{\omega \sim \cev{p}| \mathcal{F}_{t_n}} e^{-\frac{h_n}{2}}\left[\left\|\widehat{\boldsymbol{y}}_{t_n,\tau_{n,M}}^{j} - \widehat{\boldsymbol{y}}_{t_n,\tau_{n,M}}^{j-1}\right\|^2 \right]\\
~\leq~&2\mathbb{E}_{\omega \sim \cev{p}| \mathcal{F}_{t_n}}\left[\left\|\widehat{\boldsymbol{y}}_{n,0}^{j} - \widehat{\boldsymbol{y}}_{n,0}^{j-1}\right\|^2\right] + 32h_n^2L^2_{\boldsymbol{s}} \sup\limits_{\tau \in [0,h_n]}\mathbb{E}_{\omega \sim \cev{p}| \mathcal{F}_{t_n}}\left[\left\|\widehat{\boldsymbol{y}}^{j-1}_{n,\tau} - \widehat{\boldsymbol{y}}^{j-2}_{n,\tau} \right\|^2\right]\\
~\leq~&2\Delta_{n-1}^j + 0.32 \mathcal{E}_{n}^{j-1}.
\end{align*}
Thus 
\[\Delta_n^j\leq 3 \Delta_{n-1}^j+ 0.4 \mathcal{E}_{n}^{j-1}.\]
In the remaining part, we will bound $\Delta_n^1$. By definition, we have 
\begin{align*}
&\left\|e^{-\frac{h_n}{2}}\widehat{\boldsymbol{y}}_{t_n,\tau_{n,M}}^{1} (\omega) -e^{-\frac{h_n}{2}}\widehat{\boldsymbol{y}}_{t_n,\tau_{n,M}}^{0} (\omega)\right\|^2 \\
~=~&\left\| \widehat{\boldsymbol{y}}_{n,0}^1- \widehat{\boldsymbol{y}}_{n-1,\tau_{n,M}}^0 + \sum\limits_{m'=0}^{m-1}e^{\frac{-\tau_{n,m'+1}}{2}}2(e^{\epsilon_{n,m'}}-1)\left[\boldsymbol{s}^\theta_{t_n +\tau_{n,m'}}(\widehat{\boldsymbol{y}}_{n-1,\tau_{n,m'}}^{0}) - \boldsymbol{s}^\theta_{t_n +\tau_{n,m'}}(\widehat{\boldsymbol{y}}_{n-1,\tau_{n,M}}^0) \right]\right\|^2\\
~\leq ~&2\left\| \widehat{\boldsymbol{y}}_{n,0}^1- \widehat{\boldsymbol{y}}_{n-1,\tau_{n,M}}^0\right\|^2 +2 \left\| \sum\limits_{m'=0}^{M-1}e^{\frac{-\tau_{n,m'+1}}{2}}2(e^{\epsilon_{n,m'}}-1)\left[\boldsymbol{s}^\theta_{t_n +\tau_{n,m'}}(\widehat{\boldsymbol{y}}_{n-1,\tau_{n,m'}}^{0}) - \boldsymbol{s}^\theta_{t_n +\tau_{n,m'}}(\widehat{\boldsymbol{y}}_{n-1,\tau_{n,M}}^0) \right]\right\|^2\\
~\leq ~&2\left\| \widehat{\boldsymbol{y}}_{n,0}^1- \widehat{\boldsymbol{y}}_{n-1,\tau_{n,M}}^0\right\|^2 
+ 32h_n^2L_{\boldsymbol{s}}^2\sup\limits_{\tau \in [0,h_n]}\left\| \widehat{\boldsymbol{y}}_{n-1,\tau}^{0}- \widehat{\boldsymbol{y}}_{n-1,\tau_{n,M}}^0 \right\|^2,
\end{align*}
where the second inequality is implied by that $e^{x}-1\leq 2x$ when $x<1$.
Thus with  $L^2_{\boldsymbol{s}}e^{2 h_n }h_n\leq 0.01$ and $e^{2h_n}\leq 2$, we have 
\begin{align*}
e^{-\frac{h_n}{2}} \Delta_n^1~=~&\mathbb{E}_{\omega \sim \cev{p}| \mathcal{F}_{t_n}} e^{-\frac{h_n}{2}}\left[\left\|\widehat{\boldsymbol{y}}_{t_n,\tau_{n,M}}^{1} - \widehat{\boldsymbol{y}}_{t_n,\tau_{n,M}}^{0}\right\|^2 \right]\\
~\leq~&2\mathbb{E}_{\omega \sim \cev{p}| \mathcal{F}_{t_n}}\left[\left\| \widehat{\boldsymbol{y}}_{n,0}^1- \widehat{\boldsymbol{y}}_{n-1,\tau_{n,M}}^0\right\|^2\right]
+ 32h_n^2L_{\boldsymbol{s}}^2\sup\limits_{\tau \in [0,h_n]}\mathbb{E}_{\omega \sim \cev{p}| \mathcal{F}_{t_n}}\left[\left\| \widehat{\boldsymbol{y}}_{n-1,\tau}^{1,P-1}- \widehat{\boldsymbol{y}}_{n-1,\tau_{n,M}}^0 \right\|^2\right]\\
~\leq~& 2 \Delta_{n-1}^1+ 0.32 \sup\limits_{\tau \in [0,h_n]}\mathbb{E}_{\omega \sim \cev{p}| \mathcal{F}_{t_n}}\left[
\left\|\widehat{\boldsymbol{y}}_{n,\tau}^0- \widehat{\boldsymbol{y}}_{n-1,\tau_{n,M}}^0 \right\|^2\right].
\end{align*}
\end{proof}

Let $L_n^j = 2\Delta_{n-1}^j + 0.01\mathcal{E}_n^{j-1}$. We note that $L_n^j \geq \mathcal{E}_n^{j}$.
Thus for $n\geq 1$ and $j\geq 2$,
\begin{align}
\label{eq:decdis2}
L_n^j ~=~& 2\Delta_{n-1}^j + 0.01\mathcal{E}_n^{j-1}\notag\\
~\leq ~& 2( 80 \Delta_{n-1}^j+ 0.4 \mathcal{E}_{n}^{j-1}) + 0.01 L_n^j\notag\\
~\leq ~&160 L_{n-1}^j + 0.01 L_n^j.
\end{align}
We recursively bound $L_n^j$ as 
\begin{align*}
L_n^j~\leq ~& 
\sum\limits_{a=2}^n (0.01)^{j-2}160^{n-a}\binom{n-a+j-2}{j-2}L_a^2
+  \sum\limits_{b=2}^j \left(0.01\right)^{j-b}160^{n-1}\binom{n-1+j-b}{j-b} L_1^b.
\end{align*}

\paragraph{Bound for $\sum\limits_{a=2}^n (0.01)^{j-2}160^{n-a}\binom{n-a+j-2}{j-2}L_a^2$.}
Firstly, we bound $L_a^2$. To do so, by Lemma \ref{lmm:delta_diff}, we bound $\Delta_n^1$ as 
\begin{align*}
\Delta_n^1\leq 3\Delta_{n-1}^1 + 4\mathcal{E}_I\leq 3^n \Delta_0^1 + \sum\limits_{i=0}^{n-1}4\cdot 3^i\mathcal{E}_I \leq 4 \sum\limits_{i=0}^{n} 3^i\mathcal{E}_I\leq 3^{n+2}\mathcal{E}_I.
\end{align*}
and by Lemma \ref{lmm:decsde2}, bound $\mathcal{E}_n^1$ as 
\[
\mathcal{E}_n^1\leq 2\Delta_n^1 + 0.1 \mathcal{E}_I \leq 3^{n+3}\mathcal{E}_I.
\]
Furthermore, by Lemma \ref{lmm:delta_diff}, we bound $\Delta_n^2$ as 
\begin{align*}
\Delta_n^2\leq 3 \Delta_{n-1}^2+ 0.4 \mathcal{E}_{n}^{1} \leq 3^n \Delta_0^2 +\sum\limits_{i=0}^{n-1}3^i \mathcal{E}_{n-i}^1\leq 0.32\cdot  3^n  \mathcal{E}_I +  3^{n+3}n\mathcal{E}_I\leq   28 \cdot 3^n  n \mathcal{E}_I.
\end{align*}
Thus 
\[L_a^2 = 2\Delta_{a-1}^2 + 0.01\mathcal{E}_a^{1} \leq 28\cdot 3^{a}a\mathcal{E}_I.\]
Furthermore, by $\binom{m}{n}\leq \left(\frac{em}{n}\right)^n$ for $m\geq n>0$, we have
\begin{align*}
&\sum\limits_{a=2}^n (0.01)^{j-2}160^{n-a}\binom{n-a+j-2}{j-2}L_a^2 \\
~\leq~&(0.01)^{j-2} (28\cdot 160^n n^2)e^{j-2}\left(\frac{n-a+j-2}{j-2}\right)^{j-2} \mathcal{E}_I  \\ 
~\leq~&(e^2\cdot 0.01)^{j-2} (28\cdot 160^n n^2)\mathcal{E}_I  .
\end{align*}

\paragraph{Bound for $\sum\limits_{b=2}^j \left(0.01\right)^{j-b}160^{n-1}\binom{n-1+j-b}{j-b} L_1^b $.} By Lemma \ref{lmm:decsde2}, we have
\begin{align*}
\mathcal{E}_1^j~\leq~&  0.01\mathcal{E}_{1}^{j-1}   + 2 \Delta_{0}^j\\
~\leq ~ &\left( 0.01\right)^{j}\mathcal{E}_I   + \sum\limits_{i=0}^{j-1} \left(0.01\right)^{i} 2 \Delta_{0}^{j-i}.
\end{align*}
Combining the fact that $\Delta_0^j \leq 0.32^{j-1}\mathcal{E}_I$, we have 
\[\mathcal{E}_1^j \leq 7\cdot j\cdot 0.32^j \mathcal{E}_I. \]
Thus 
\begin{align*}
 L_1^b~=~& 2\Delta_0^j +0.01 \mathcal{E}_1^{b-1}\\
 ~\leq ~& 2\cdot 0.32^{b-1}\mathcal{E}_I +0.01\cdot 7\cdot (b-1)\cdot 0.32^{b-1}\mathcal{E}_I\\
 ~\leq ~&  7\cdot b\cdot 0.32^{b-1}\mathcal{E}_I.
\end{align*}
Furthermore, by $\sum\limits_{i=0}^m \binom{n+i}{n}x^i = \frac{1-(m+1)\binom{m+n+1}{n}B_x(m+1,n+1)}{(1-x)^{n+1}}\leq \frac{1}{(1-x)^{n+1}}$, we have
\begin{align*}
&\sum\limits_{b=2}^j \left(0.01\right)^{j-b}160^{n-1}\binom{n-1+j-b}{n-1} L_1^b  \\
~\leq~& \sum\limits_{b=2}^j \left(0.01\right)^{j-b}160^{n-1}\binom{n-1+j-b}{n-1} 7\cdot b\cdot 0.32^{b-1}\mathcal{E}_I \\
~\leq~& 22\cdot0.87^j 440^{n-1} j\mathcal{E}_I.
\end{align*}

Combining the above two results, we have
\begin{equation*}
\mathcal{E}_n^{J}\leq (e^2\cdot 0.01)^{j-2} (28\cdot 160^n n^2)\mathcal{E}_I + 22\cdot0.87^j 440^{n-1} j\mathcal{E}_I.    
\end{equation*}
If $J -45N \gtrsim \log \frac{N \mathcal{E}_I}{\varepsilon^2}$, for any $n=0,\ldots,N$
\begin{equation}
\label{eq:picard_final_diff}
\mathcal{E}_n^{J}\leq \frac{\varepsilon^2}{N}.    
\end{equation}

\subsubsection{Overall Error Bound}
By the previous computation, we have 
\begin{align*}
&\mathsf{KL}(\cev{p}_{t_{n+1}}\| \widehat{q}_{t_{n+1}})\\
~\leq~ &\mathsf{KL}(\cev{p}_{t_{n}}\| \widehat{q}_{t_{n}}) + \mathbb{E}_{\omega\sim q|_{\mathcal{F}_{t_n}}}\left[\frac{1}{2}\int_0^{h_n}\left\|\boldsymbol{\delta}_{t_n}(\tau, \omega) \right\|^2 \mathrm{d}\tau\right]\\
~\leq~ &\mathsf{KL}(\cev{p}_{t_{n}}\| \widehat{q}_{t_{n}}) + 3\mathbb{E}_{\omega \sim \cev{p}|_{\mathcal{F}_{t_n}}} \left[ A_{t_{n}}(\omega) + B_{t_{n}}(\omega) \right] + 3L_{\boldsymbol{s}}^2 h_n \mathcal{E}_n^J .
\end{align*}
Combining Lemma \ref{lmm:ou_con}, Corollary \ref{cor:init}, and Eq. \eqref{eq:picard_final_diff}, we have 
\begin{align*}
&\mathsf{KL}(\cev{p}_{t_{n+1}}\| \widehat{q}_{t_{n+1}})\\
~\leq~ &\mathsf{KL}(\cev{p}_{0} \| \widehat{q}_{0}) + 3\sum\limits_{n=0}^{N-1} \left( \mathbb{E}_{\omega \sim \cev{p}|_{\mathcal{F}_{t_n}}} \left[ A_{t_{n}}(\omega) + B_{t_{n}}(\omega) \right] + L_{\boldsymbol{s}}^2 h_n \mathcal{E}_n^J\right)\\
~\lesssim~ & de^{-T} + \epsilon d(T+\log \eta^{-1}) +\delta_2^2 + \varepsilon^2,\\
~\lesssim~ & \varepsilon^2,
\end{align*}
with parameters $J- 45N \geq \mathcal{O}(\log \frac{N d}{\varepsilon^2})$, $h = \Theta(1)$, $N = \mathcal{O}(\log\frac{d}{\varepsilon^2})$, $T = \mathcal{O}(\log \frac{d}{\varepsilon^2})$
$\epsilon =\Theta(d^{-1}\varepsilon^2 \log^{-1} \frac{d}{\varepsilon^2} )$, $M =\mathcal{O}(d\varepsilon^{-2} \log \frac{d}{\varepsilon^2} ) $, $\log \eta^{-1}\lesssim T$.

\end{document}